\newtheorem{theorem}{Theorem}
\newtheorem{proposition}[theorem]{Proposition}
\newtheorem{lemma}[theorem]{Lemma}
\newtheorem{corollary}[theorem]{Corollary}
\newtheorem{remark}[theorem]{Remark}
\newtheorem{example}[theorem]{Example}
\newtheorem{problem}[theorem]{Problem}
\newtheorem{construction}[theorem]{Construction}
\newtheorem{convention}[theorem]{Convention}
\numberwithin{equation}{section}
\numberwithin{theorem}{section}
\newcommand{\bettershortstack}[2][c]{\begin{tabular}[b]{@{}#1@{}}#2\end{tabular}}
\newcommand{\Lt}{\ensuremath{\mathrm{L}}}
\newcommand{\Rt}{\ensuremath{\mathrm{R}}}
\newcommand{\norm}[1]{\ensuremath{\left| #1 \right|}}
\newcommand{\ora}[1]{\ensuremath{\overrightarrow{#1}}}
\newcommand{\nega}{\ensuremath{\mathrm{neg}}}
\newcommand{\proj}{\ensuremath{\mathrm{proj}}}
\begin{document}
\title[Negative 3D gadgets in origami extrusions with a supporting triangle on the back side]
{Negative 3D gadgets in origami extrusions\\
with a supporting triangle on the back side}
\author{Mamoru Doi}
\address{11-9-302 Yumoto-cho, Takarazuka, Hyogo 665-0003, Japan}
\email{doi.mamoru@gmail.com}
\maketitle
\noindent{\bfseries Abstract. }
In our previous two papers, we studied (positive) $3$D gadgets in origami extrusions
which create a top face parallel to the ambient paper and two side faces sharing a ridge with two simple outgoing pleats.

Then a natural problem comes up whether it is possible to construct a `negative' $3$D gadget from any positive one
having the same net without changing the outgoing pleats,
that is, to sink the top and two side faces of any positive $3$D gadget to the reverse side without changing the outgoing pleats.
Of course, simply sinking the faces causes a tear of the paper, and thus we have to modify the crease pattern.
There are two known constructions of negative $3$D gadgets before ours,
but they do not solve this problem because their outgoing pleats are different from positive ones.

In the present paper we give an affirmative solution to the above problem.
For this purpose, we present three constructions of negative $3$D gadgets with a supporting triangle on the back side,
which are based on our previous ones of positive $3$D gadgets.
The first two are an extension of those presented in our previous paper, and the third is new.
We prove that our first and third constructions solve the problem.
Our solutions enable us to deal with positive and negative $3$D gadgets on the same basis,
so that we can construct from an origami extrusion constructed with $3$D gadgets its negative using the same pleats
if there are no interferences among the $3$D gadgets.
We also treat repetition/division of negative $3$D gadgets under certain conditions, which reduces their interferences with others.

\section{Introduction}\label{sec:introduction}
As in our previous two papers \cite{Doi19}, \cite{Doi20}, we are concerned with $3$D gadgets in origami extrusions,
where an origami extrusion is a folding of a $3$D object in the middle of a flat piece of paper, 
and $3$D gadgets are ingredients of origami extrusions which create faces with solid angles.
In particular, we focus on $3$D gadgets with two simple outgoing pleats which create a top face parallel to the ambient paper, and two side faces sharing a ridge,
where a simple pleat is one that is formed by a mountain and a valley fold parallel to each other.
There are two known constructions of such $3$D gadgets.
One is developed by Carlos Natan \cite{Natan} as a generalization of the conventional cube gadget,
and these conventional $3$D gadgets have a supporting pyramid on the back side.
The other newer one is presented in our previous papers \cite{Doi19}, \cite{Doi20}.
Our new $3$D gadgets, which have flat back sides above the ambient paper, are completely downward compatible with the conventional ones in the sense
that any conventional one can be replaced by ours with the same outgoing pleats, but the converse is not always possible.
Also, there are several advantages of our new $3$D gadgets over the conventional ones.
In particular, we can change angles of the outgoing pleats under certain conditions.

Let us call the above $3$D gadgets \emph{positive} $3$D gadgets.
For the development of a given positive $3$D gadget, if we forget the crease patten and only consider the net with the same orientation,
we may expect as the resulting $3$D object not only the original one extruded with the $3$D gadget, but also its \emph{negative}, that is,
the $3$D object which is constructed by sinking all faces of the original one to the reverse side.
We shall say that a $3$D gadget is a \emph{negative} one if it extrudes the negative of the $3$D object which a positive one does.

Then the following problem naturally comes up.
\begin{problem}\rm\label{prob:existence_negative}
We focus on positive $3$D gadgets with two simple outgoing pleats which creates a top face parallel to the ambient paper and two side faces sharing a ridge,
which are all presented in \cite{Doi20}.
Then for any given positive $3$D gadget, is it possible to construct a negative one with the same net as the positive one without changing the outgoing pleats?
\end{problem}

For an individual positive $3$D gadget, there may be a desirable negative one.
For example, let us consider a simple case of cube gadgets as shown in Figure $\ref{fig:CP_cubes}$,
where `CP' stands for `crease pattern' throughout this paper.
Then there correspond negative $3$D gadgets of two types as shown in Figure $\ref{fig:CP_cubes_negative_2}$.
Although there are two known constructions of negative $3$D gadgets before ours which apply to any net of positive $3$D gadgets,
these constructions do not solve Problem $\ref{prob:existence_negative}$.
Indeed, negative cube gadgets constructed by the two known constructions, which are shown in Figure $\ref{fig:CP_cubes_negative_1}$,
have different outgoing pleats from those of the original ones.
In these figures, the faces of the outgoing simple pleats bounded by a mountain and a valley fold are shaded.
\begin{figure}[htbp]
\addtocounter{theorem}{1}
\centering\includegraphics[width=0.48\hsize]{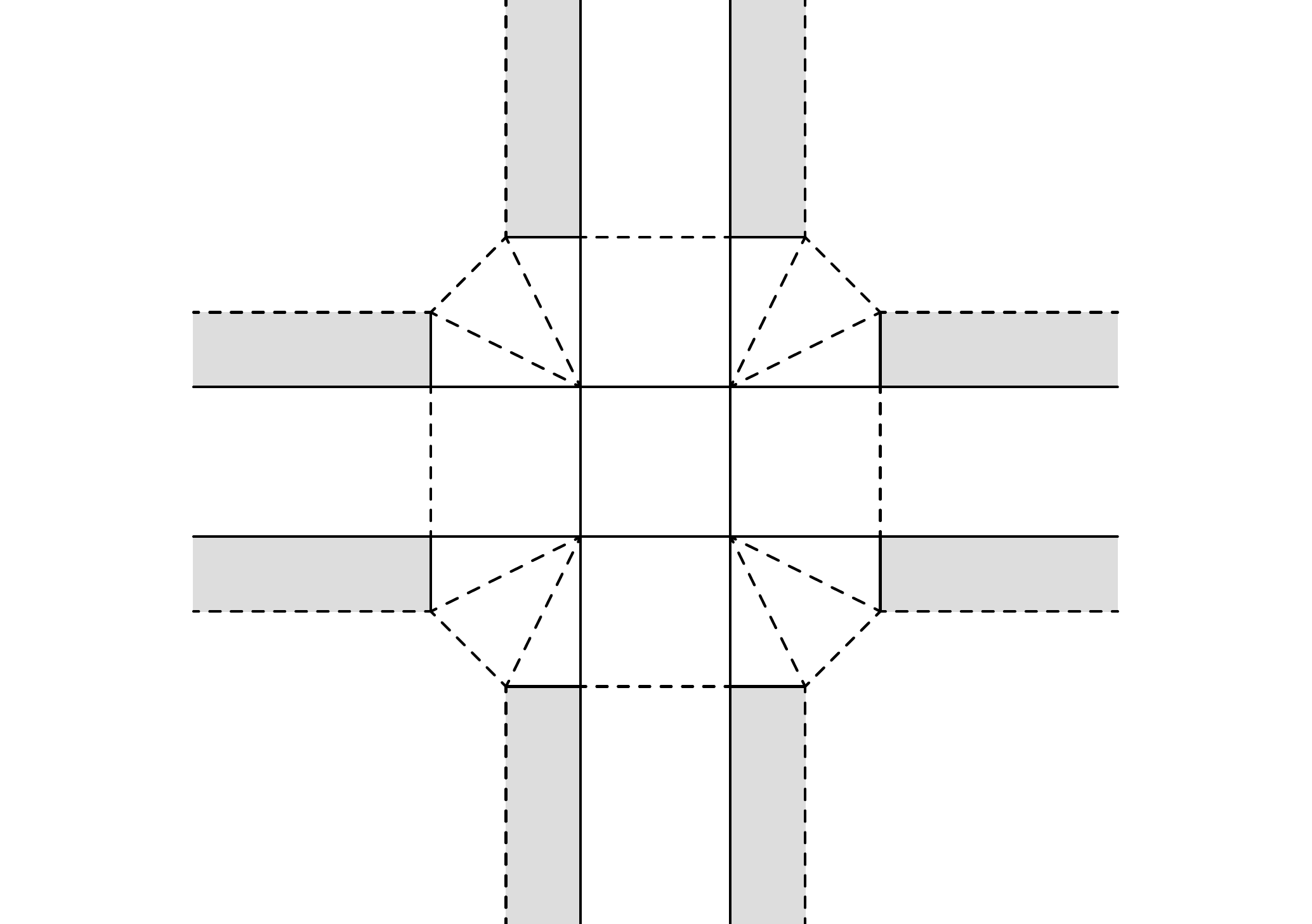}
\includegraphics[width=0.48\hsize]{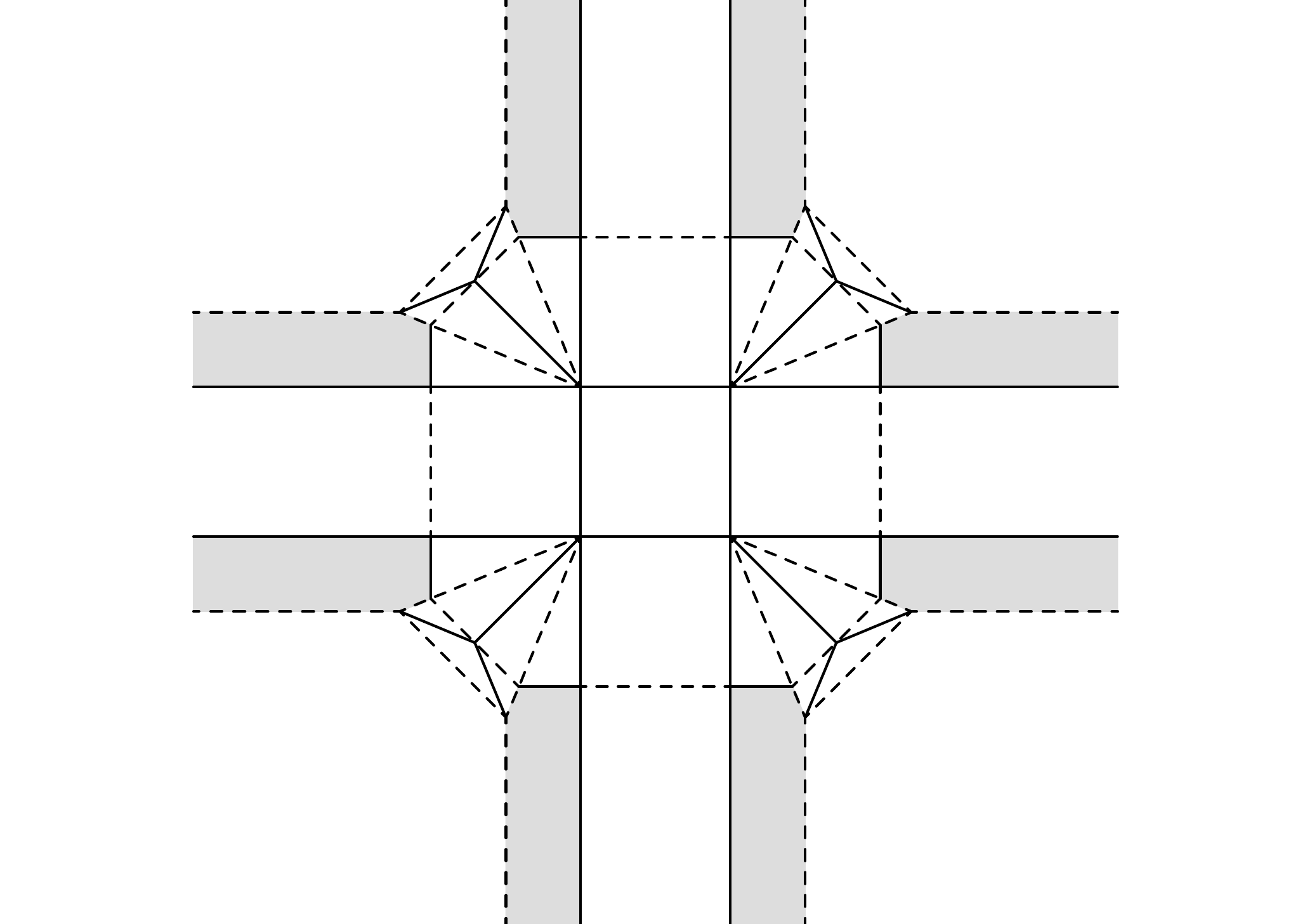}
    \caption{CPs of a cube extruded with the conventional and our cube gadgets}
    \label{fig:CP_cubes}
\end{figure}
\begin{figure}[htbp]
\addtocounter{theorem}{1}
\centering\includegraphics[width=0.48\hsize]{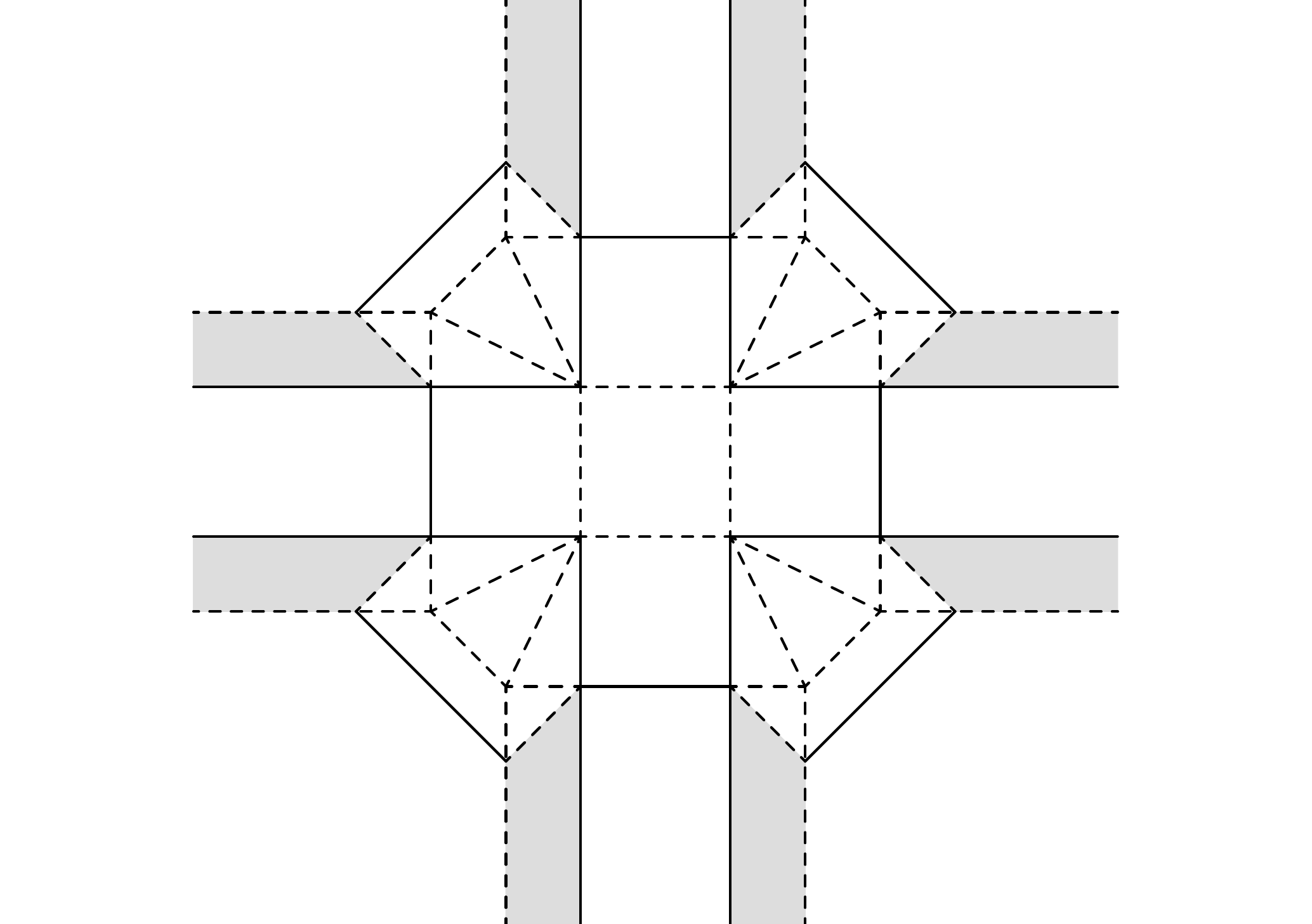}
\includegraphics[width=0.48\hsize]{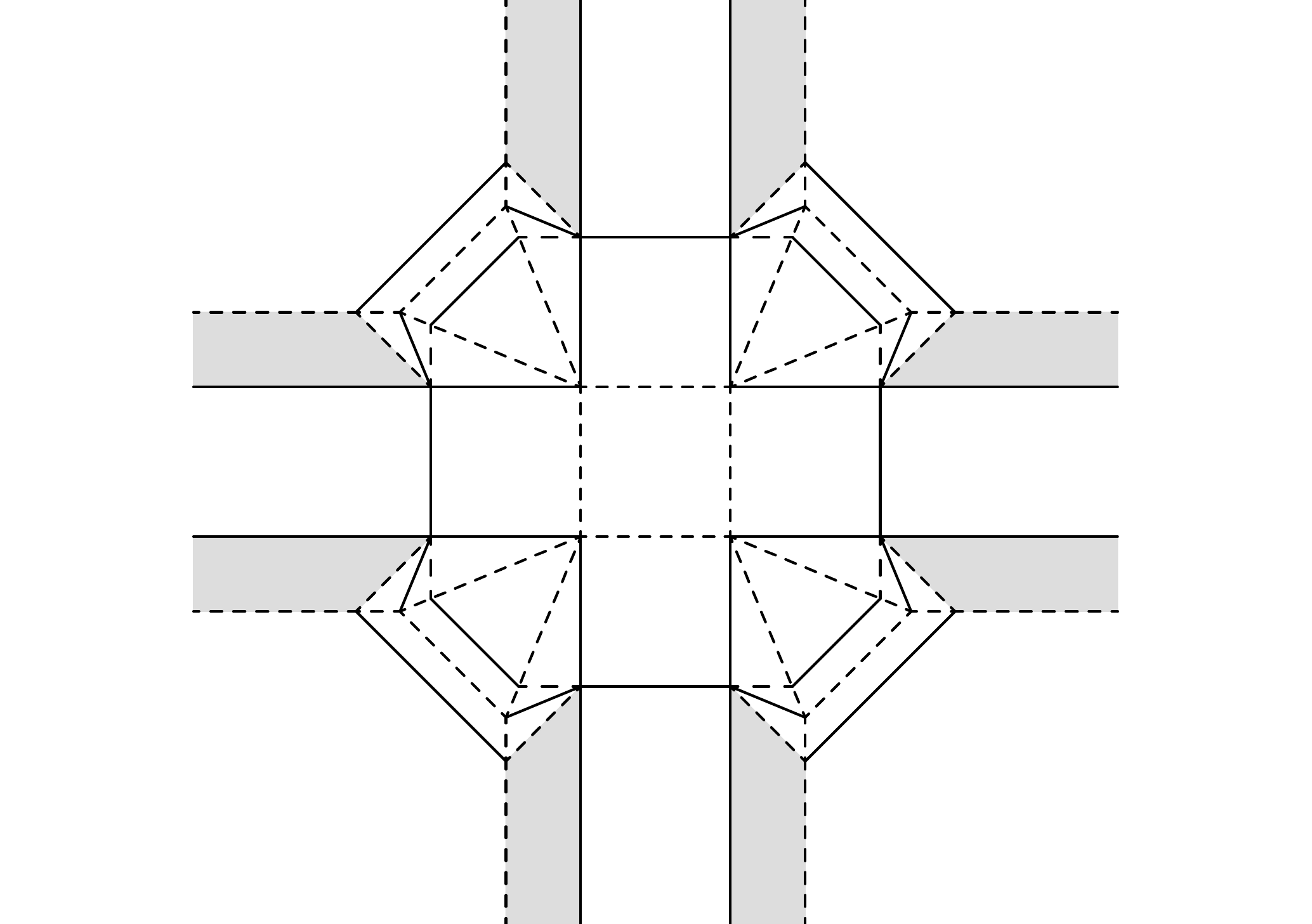}
    \caption{CPs of a negative cube extruded with the conventional and our negative cube gadgets}
    \label{fig:CP_cubes_negative_2}
\end{figure}
\begin{figure}[htbp]
\addtocounter{theorem}{1}
\centering\includegraphics[width=0.48\hsize]{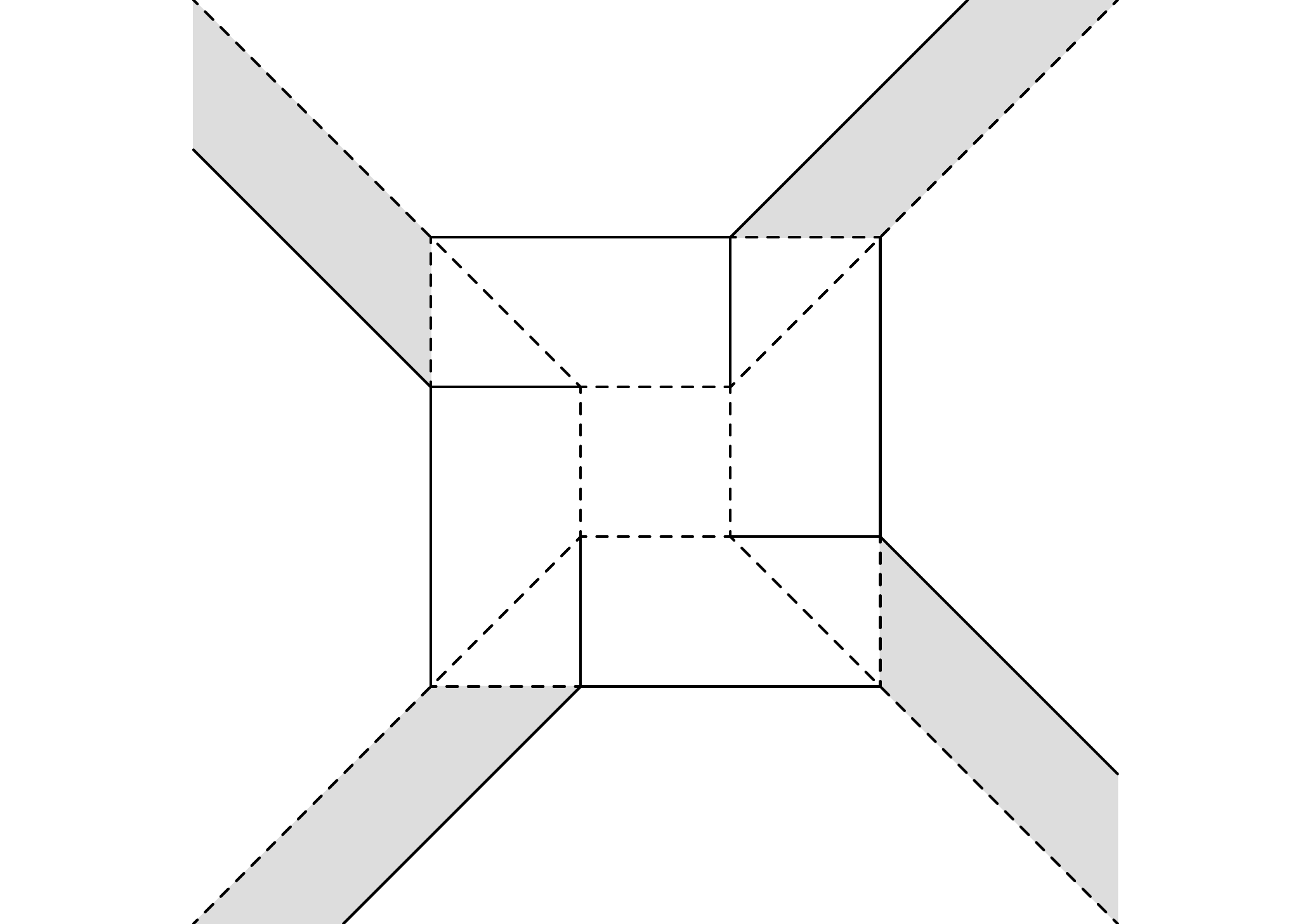}
\includegraphics[width=0.48\hsize]{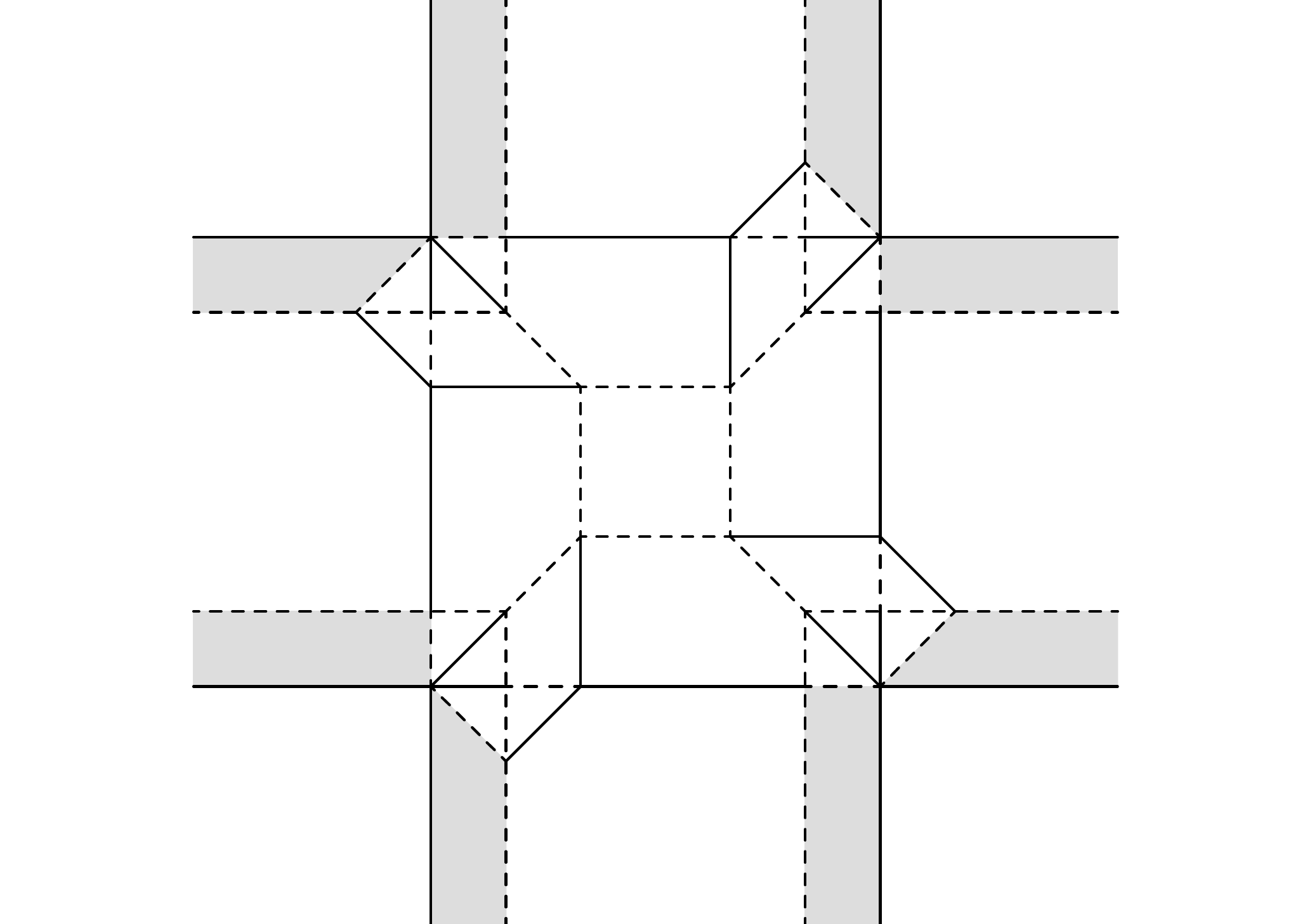}
    \caption{CPs of a negative cube extruded with negative cube gadgets by two known constructions before ours}
    \label{fig:CP_cubes_negative_1}
\end{figure}

In the present paper we give affirmative solutions to Problem $\ref{prob:existence_negative}$.
For this purpose, we present three constructions of negative $3$D gadgets with a supporting triangle on the back side.
These constructions are based on those presented in \cite{Doi19} and \cite{Doi20}, with a major difference that
two supporting flaps on the back side in the positive cases, which are called `ears' and overlap with the side faces,
form a single triangle and do not generally overlap with the side faces in the negative cases.
Among our three constructions, the first two are an extension of those presented in \cite{Doi19}, Section $9$, and the third is new.
We will see that our first construction produces infinitely many negative $3$D gadgets for any positive one in Problem $\ref{prob:existence_negative}$,
while our third one gives a unique negative $3$D gadget.

Our solutions to Problem $\ref{prob:existence_negative}$ enables us to deal with positive and negative $3$D gadgets on the same basis.
Thus from an origami extrusion constructed with $3$D gadgets, may they be a mixture of positive and negative ones,
we can construct its negative if there are no interferences among the $3$D gadgets.

We will also review two known constructions of negative $3$D gadgets before ours because they are useful in their own right.
The first is well-known, and the second is presented by Cheng in \cite{Cheng},
for which we shall give an extension corresponding to the changes of the angles of the outgoing pleats.

\begin{convention}\rm
For the later convenience, we will hereafter adopt the following convention on the standard orientation of the development
for both positive and negative $3$D gadgets:
\begin{itemize}
\item[] For a given $3$D gadget, we determine the standard side from which we see the development so that the extruded faces lie \emph{above} the ambient paper.
In other words, for a negative (resp. positive) $3$D gadget we consider its development seen from the \emph{back} (resp. front) side.
We add subscripts `$\Lt$' and `$\Rt$', which stand for `left' and `right' respectively,
to the points, lines, angles, etc. of the $3$D gadget according to this orientation.
\end{itemize}
\end{convention}
One advantage of the above convention is that a positive and a negative $3$D gadget constructed from the same development engage with each other.
We have another advantage when we consider both positive and negative $3$D gadgets.
If a positive and a negative $3$D gadget share a side face, then the top edge of one gadget is the bottom edge of the other.
However, if we see either gadget from its standard side, then the subscripts `$\Lt$' and `$\Rt$' of the other gadget are consistent with the orientation.
Also note that to construct the negative $3$D gadget corresponding to a given positive one in Problem $\ref{prob:existence_negative}$,
we have to begin with the horizontally flipped development of the positive one.\\

\noindent {\bfseries Notation and terminology.} 
To keep consistency with the previous papers \cite{Doi19} and \cite{Doi20}, we will use the same notation and terminology as possible as we can.
For example, we use subscript $\sigma$ for $\Lt$ and $\Rt$ which stands for `left' and `right' respectively,
and $\sigma'$ to indicate the other side of $\sigma$, that is,
\begin{equation*}
\sigma' =
\begin{dcases}
\Rt&\text{if }\sigma =\Lt ,\\
\Lt&\text{if }\sigma =\Rt .
\end{dcases}
\end{equation*}
Also, we will denote a simple pleat as $(\ell ,m)$, where $\ell$ is either a mountain or a valley fold, and $m$ is the reverse fold.\\

To fix the conditions and the construction of typical points, lines and angles common to most of the later constructions of negative $3$D gadgets,
we prepare the following.
\begin{figure}[htbp]
\addtocounter{theorem}{1}
\centering\includegraphics[width=0.75\hsize]{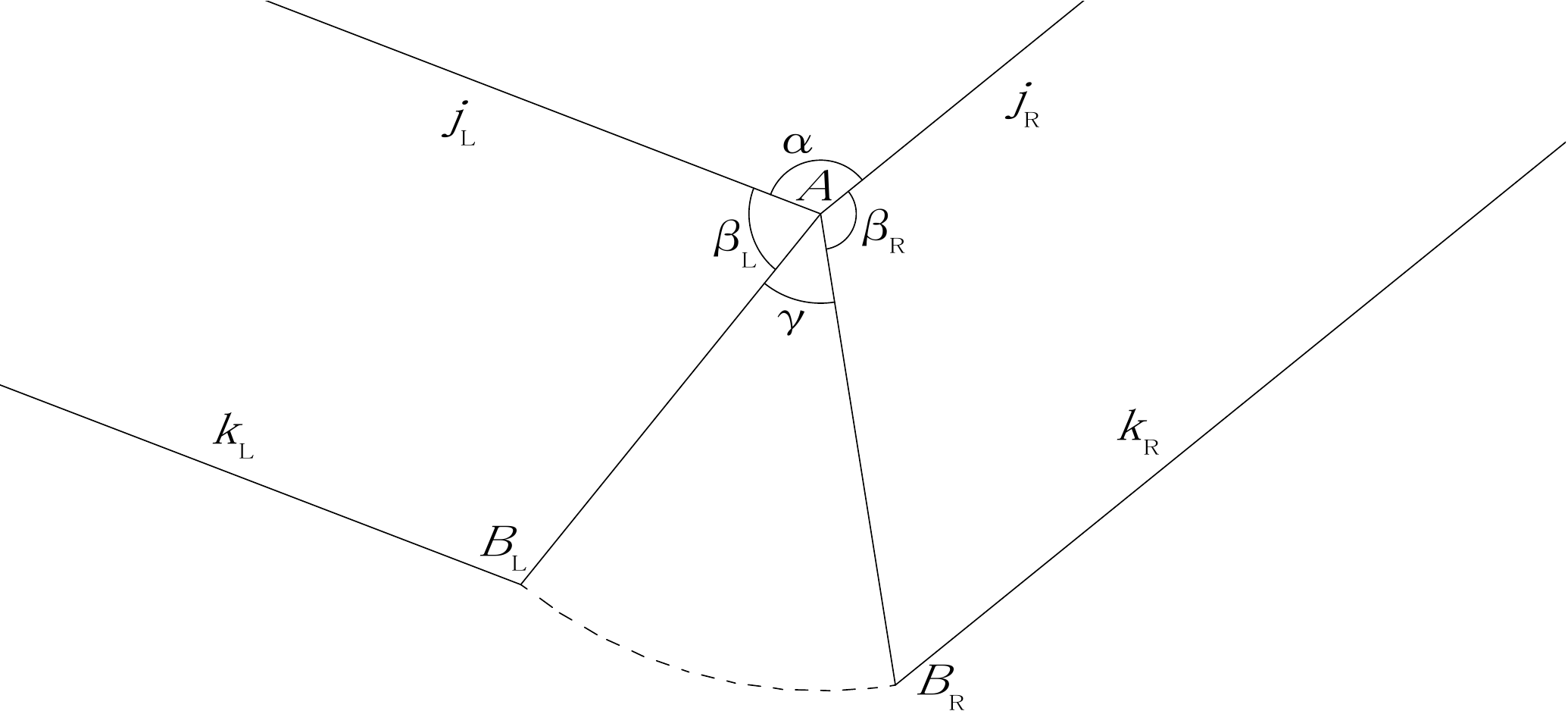}
    \caption{Net of a $3$D object which we want to extrude with a $3$D gadget}
    \label{fig:development_0}
\end{figure}
\begin{construction}\label{const:condition}\rm
We consider a net on a piece of paper as in Figure $\ref{fig:development_0}$.
We require the following conditions.
\begin{enumerate}[(i)]
\item $\alpha <\beta_\Lt + \beta_\Rt$, $\beta_\Lt <\alpha +\beta_\Rt$ and $\beta_\Rt <\alpha+ \beta_\Lt$.
\item $\alpha +\beta_\Lt +\beta_\Rt <2\pi$.
\end{enumerate}
To construct a negative gadget from the above net, we prescribe its simple outgoing pleats by introducing parameters $\delta_\sigma$
for their changes from the direction of $\ora{AB_\sigma}$ for $\sigma =\Lt ,\Rt$, for which we further require the following conditions:
\begin{enumerate}
\item[(iii.a)] $\delta_\Lt ,\delta_\Rt\geqslant 0$, where we take clockwise (resp. counterclockwise) direction as positive for $\sigma =\Lt$ (resp. $\sigma=\Rt$).
\item[(iii.b)] $\delta_\sigma <\beta_\sigma$ and $\delta_\sigma <\pi /2$ for $\sigma =\Lt ,\Rt$. 
\item[(iii.c)] $\alpha +\beta_\Lt +\beta_\Rt -\delta_\Lt -\delta_\Rt >\pi$, or equivalently, $\gamma +\delta_\Lt +\delta_\Rt <\pi$.
\end{enumerate}
In particular, if $\delta_\Lt =\delta_\Rt =0$, then conditions (iii.a)--(iii.c) are simplified as
\begin{enumerate}[(i)]
\setcounter{enumi}{2}
\item $\alpha +\beta_\Lt +\beta_\Rt >\pi$, or equivalently, $\gamma <\pi$.
\end{enumerate}
Then we construct creases $(\ell_\Lt ,m_\Lt )$ and $(\ell_\Rt ,m_\Rt )$ (resp. $(\ell'_\Lt ,m_\Lt )$ and $(\ell'_\Rt ,m_\Rt )$)
which we prescribe as the outgoing pleats of a negative $3$D gadget in Section $\ref{sec:negative_new}$ (resp. in Section $\ref{subsec:negative_Cheng}$)
as follows, where we regard $\sigma$ as taking both $\Lt$ and $\Rt$.
\begin{enumerate}[(1)]
\item Draw a ray $\ell_\Lt$ starting from $B_\Lt$ and going to the direction of $\ora{AB_\Lt}$ followed by a clockwise rotation by $\delta_\Lt$.
Also, draw a ray $\ell_\Rt$ starting from $B_\Rt$ ang going to the direction of $\ora{AB_\Rt}$ followed by a counterclockwise rotation by $\delta_\Rt$.
\item Draw a perpendicular to $\ell_\sigma$ through $B_\sigma$ for both $\sigma =\Lt ,\Rt$, letting $C$ be the intersection point. 
\item Draw a perpendicular bisector $m_\sigma$ to segment $B_\sigma C$ for both $\sigma =\Lt ,\Rt$, letting $P$ be the intersection point. 
\item Draw a ray $\ell'_\sigma$ parallel to and going in the same direction as $\ell_\sigma$ and $m_\sigma$, starting from $C$. 
\end{enumerate}
The resulting creases are shown as solid lines in Figure $\ref{fig:development_1}$.
\end{construction}
\begin{figure}[htbp]
\addtocounter{theorem}{1}
\centering\includegraphics[width=0.75\hsize]{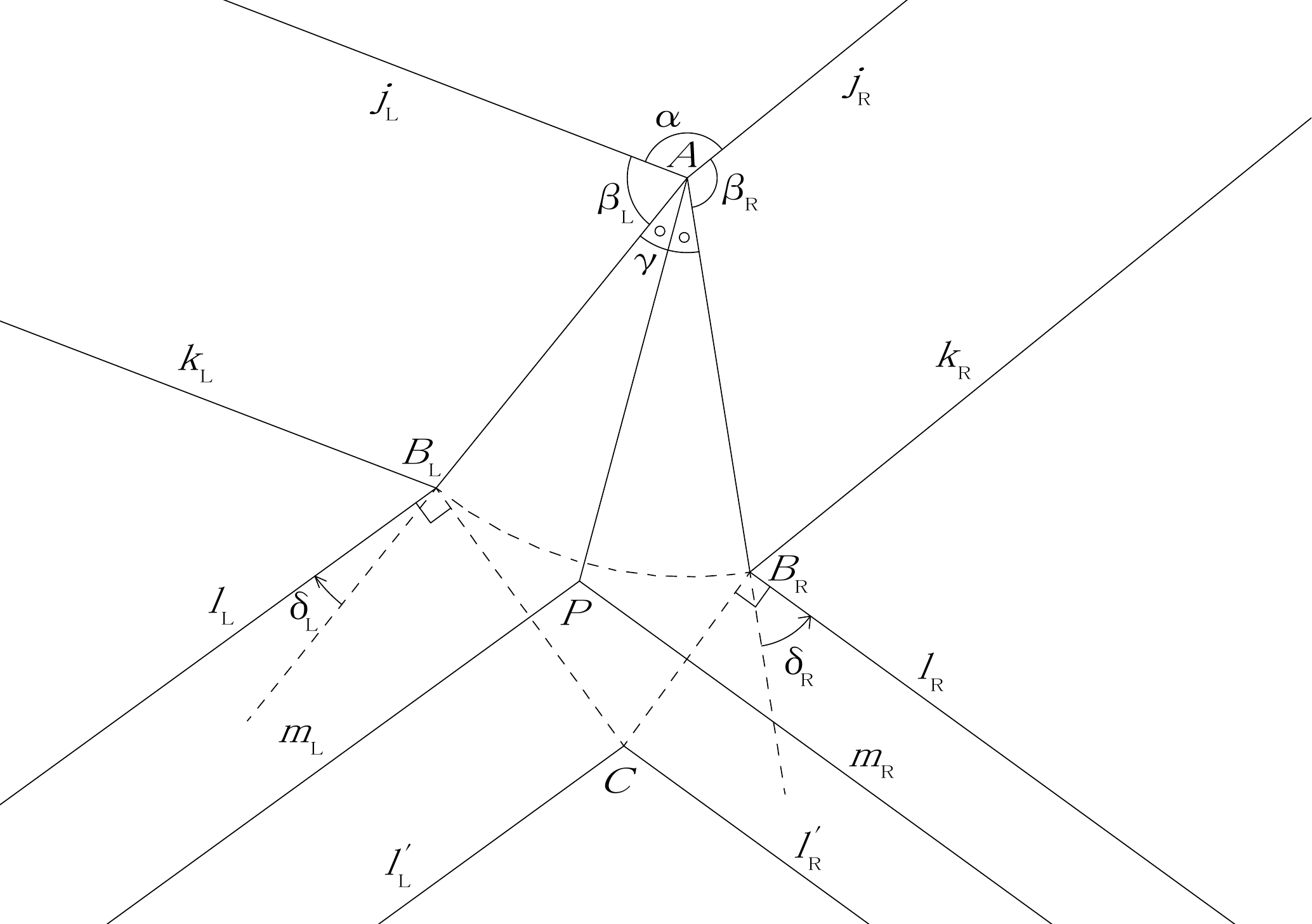}
    \caption{Construction of creases of prescribed pleats for a $3$D gadget}
    \label{fig:development_1}
\end{figure}
The following result will be crucial in Cheng's construction of negative $3$D gadgets given in Section $\ref{subsec:negative_Cheng}$.
\begin{proposition}\label{prop:AP}
Segment $AP$ bisects $\gamma =\angle B_\Lt AB_\Rt$.
Also, we have
\begin{equation*}
\angle AB_\Lt P=\angle AB_\Rt P=\gamma /2 +\delta_\Lt +\delta_\Rt .
\end{equation*}
\end{proposition}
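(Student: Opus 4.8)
The plan is to recognize $P$ as the circumcenter of triangle $B_\Lt B_\Rt C$ and then to read off both assertions from elementary isosceles-triangle and central-angle relations. First I would observe that by steps (2)--(3) of Construction \ref{const:condition} the line $m_\sigma$ is the perpendicular bisector of the segment $B_\sigma C$. Hence $P\in m_\Lt$ forces $PB_\Lt=PC$ and $P\in m_\Rt$ forces $PB_\Rt=PC$, so that $PB_\Lt=PB_\Rt=PC$ and $P$ is the circumcenter of $B_\Lt B_\Rt C$. This single fact is the engine behind the whole proof.

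For the bisection statement, the equality $PB_\Lt=PB_\Rt$ places $P$ on the perpendicular bisector of $B_\Lt B_\Rt$. Since the net of Figure \ref{fig:development_0} is drawn with $|AB_\Lt|=|AB_\Rt|$, the vertex $A$ lies on that same perpendicular bisector. Therefore the line $AP$ \emph{is} the perpendicular bisector of the base $B_\Lt B_\Rt$ of the isosceles triangle $AB_\Lt B_\Rt$, and such a line bisects the apex angle $\gamma=\angle B_\Lt AB_\Rt$. This disposes of the first assertion, and it is where the symmetry $|AB_\Lt|=|AB_\Rt|$ of the net is essential.

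For the value of $\angle AB_\Lt P$ I would chase angles at $B_\Lt$. Because $B_\sigma C\perp\ell_\sigma$ while $\ell_\sigma$ is obtained from $\ora{AB_\sigma}$ by a rotation through $\delta_\sigma$, a direct computation gives $\angle AB_\sigma C=\pi/2+\delta_\sigma$ for $\sigma=\Lt,\Rt$. Next, from the isosceles triangle $PB_\Lt C$ together with the inscribed-angle theorem applied to the chord $B_\Lt C$ of the circumcircle (central angle $\angle B_\Lt PC=2\angle B_\Lt B_\Rt C$), I obtain $\angle PB_\Lt C=\pi/2-\angle B_\Lt B_\Rt C$. The remaining angle follows from the base angle $\angle AB_\Rt B_\Lt=\pi/2-\gamma/2$ of the isosceles triangle $AB_\Lt B_\Rt$ and from $\angle AB_\Rt C=\pi/2+\delta_\Rt$, namely $\angle B_\Lt B_\Rt C=\angle AB_\Rt C-\angle AB_\Rt B_\Lt=\gamma/2+\delta_\Rt$. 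Substituting, $\angle AB_\Lt P=\angle AB_\Lt C-\angle PB_\Lt C=(\pi/2+\delta_\Lt)-(\pi/2-\gamma/2-\delta_\Rt)=\gamma/2+\delta_\Lt+\delta_\Rt$, and the identical computation with $\Lt$ and $\Rt$ interchanged yields the same value for $\angle AB_\Rt P$.

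The main obstacle is not any deep step but the bookkeeping of orientation and configuration. I must verify that $C$ falls on the side of $B_\sigma$ for which $\angle AB_\sigma C=\pi/2+\delta_\sigma$ rather than $\pi/2-\delta_\sigma$, that the decompositions $\angle AB_\Lt P=\angle AB_\Lt C-\angle PB_\Lt C$ and $\angle B_\Lt B_\Rt C=\angle AB_\Rt C-\angle AB_\Rt B_\Lt$ combine with the correct signs, and that the central/inscribed-angle identity is applied with the right orientation (note that $\angle B_\Lt B_\Rt C=\gamma/2+\delta_\Rt$ may exceed $\pi/2$ when $\gamma$ is close to $\pi$, which moves the circumcenter to the far side of $B_\Lt C$). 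I would eliminate all of these ambiguities simultaneously by placing $A$ at the origin with the bisector of $\gamma$ along a coordinate axis, describing every ray by its argument, and invoking conditions (iii.a)--(iii.c) to control the relevant ranges; the directed-angle versions of the identities above then hold verbatim.
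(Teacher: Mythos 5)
Your proof is correct. For the first assertion it is essentially the paper's own argument: both proofs rest on $\norm{PB_\Lt}=\norm{PB_\Rt}$ combined with $\norm{AB_\Lt}=\norm{AB_\Rt}$, so that $AP$ is the perpendicular bisector of $B_\Lt B_\Rt$ and therefore bisects $\gamma$ (incidentally, your identification of $P$ as the \emph{circumcenter} of $\triangle B_\Lt B_\Rt C$ is the accurate one, since $m_\Lt ,m_\Rt$ are perpendicular bisectors of $B_\Lt C$ and $B_\Rt C$; the paper's proof calls $P$ the ``excenter,'' a slip of terminology, though the property it actually uses is exactly the one you state). The divergence is in the second assertion: the paper gives no argument at all, simply citing Lemma $3.5$ of \cite{Doi20}, whereas you supply a self-contained derivation --- $\angle AB_\sigma C=\pi /2+\delta_\sigma$, hence $\angle B_\Lt B_\Rt C=\gamma /2+\delta_\Rt$, then the inscribed/central-angle relation in the circumcircle gives $\angle PB_\Lt C=\pi /2-\gamma /2-\delta_\Rt$, and subtraction yields $\gamma /2+\delta_\Lt +\delta_\Rt$, with the symmetric chase at $B_\Rt$ giving the same value. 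I checked these steps; they are sound. What your route buys is independence from the companion paper; what it costs is the configuration bookkeeping you yourself flag, and that caveat is a necessary part of the proof rather than an optional remark: conditions (iii.a)--(iii.c) genuinely allow $\gamma /2+\delta_\Rt\geqslant\pi /2$ (for instance $\gamma$ close to $\pi$, $\delta_\Lt =0$, $\delta_\Rt$ large, which satisfies $\gamma +\delta_\Lt +\delta_\Rt <\pi$), in which case $P$ lies on the far side of chord $B_\Lt C$ and the subtraction $\angle AB_\Lt C-\angle PB_\Lt C$ must become an addition, so the uniform treatment by directed angles or by your proposed coordinate normalization is what actually completes the argument in all admissible cases.
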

\begin{proof}
Since $P$ is the excenter of $\triangle CB_\Lt B_\Rt$, segment $AP$ is a perpendicular bisector of side $B_\Lt B_\Rt$ of $\triangle AB_\Lt B_\Rt$
with $\norm{AB_\Lt}=\norm{AB_\Rt}$, and thus $AP$ bisects $\gamma$.
This proves the first assersion.
For the second assersion, See \cite{Doi20}, Lemma $3.5$.
\end{proof}

We end this section with the organization of this paper.
In Section $\ref{sec:negative_known}$, we review two known constructions of negative $3$D gadgets before ours.
Section $\ref{subsec:negative_onepleat}$ gives a well-known construction of negative $3$D gadgets with one simple outgoing pleat,
and Section $\ref{subsec:negative_Cheng}$ gives an extension of Cheng's construction.
In Section $\ref{sec:negative_new}$, we present our three constructions of negative $3$D gadgets,
where the first two are an extension of those presented in \cite{Doi19}, Section $9$, and the third is new.
In particular, we give a solution to Problem $\ref{prob:existence_negative}$ in Sections $\ref{subsec:negative_new_1}$ and $\ref{subsec:negative_new_3}$.
Section $\ref{subsec:interferences}$ discusses the condition that there are no interferences between two $3$D gadgets
which share a side face and include at least a negative one.
Section $\ref{sec:repetition}$ deals with repetition/division of negative $3$D gadgets under certain conditions.
Finally, we give our conclusion in Section $\ref{sec:conclusion}$.

\section{Known constructions of negative $3$D gadgets}\label{sec:negative_known}
In this section we review two known constructions of negative $3$D gadgets.
A basic idea common to both constructions is to extend one side face by swinging the flap resulting from bisecting $\gamma$ to the other side face.
It turns out that we can swing the flap to either side face and obtain two negative $3$D gadgets for each choice of the angles.
The resulting $3$D gadgets are never symmetric even when both $\beta_\Lt =\beta_\Rt$ and $\delta_\Lt =\delta_\Rt$ hold.

\subsection{Construction with one simple pleat}\label{subsec:negative_onepleat}
This construction is simple, but the resulting pleat is easy to loosen.
Thus to stabilize the resulting extrusion, we may need to combine it with additional folds such as twists.
Also, the resulting $3$D gadget has a large interference distance to other $3$D gadgets.
For a development as in Figure $\ref{fig:development_1}$, precreasing as in Figure $\ref{fig:precreasing_negative_onepleat}$ and swinging the resulting flap
formed by $B_\Lt Ab$ and $B_\Rt Ab$ to either side face until it extends the other side face and the connected pleat overlaps with the ambient paper,
we obtain the following construction.
\begin{figure}[htbp]
\addtocounter{theorem}{1}
\centering\includegraphics[width=0.75\hsize]{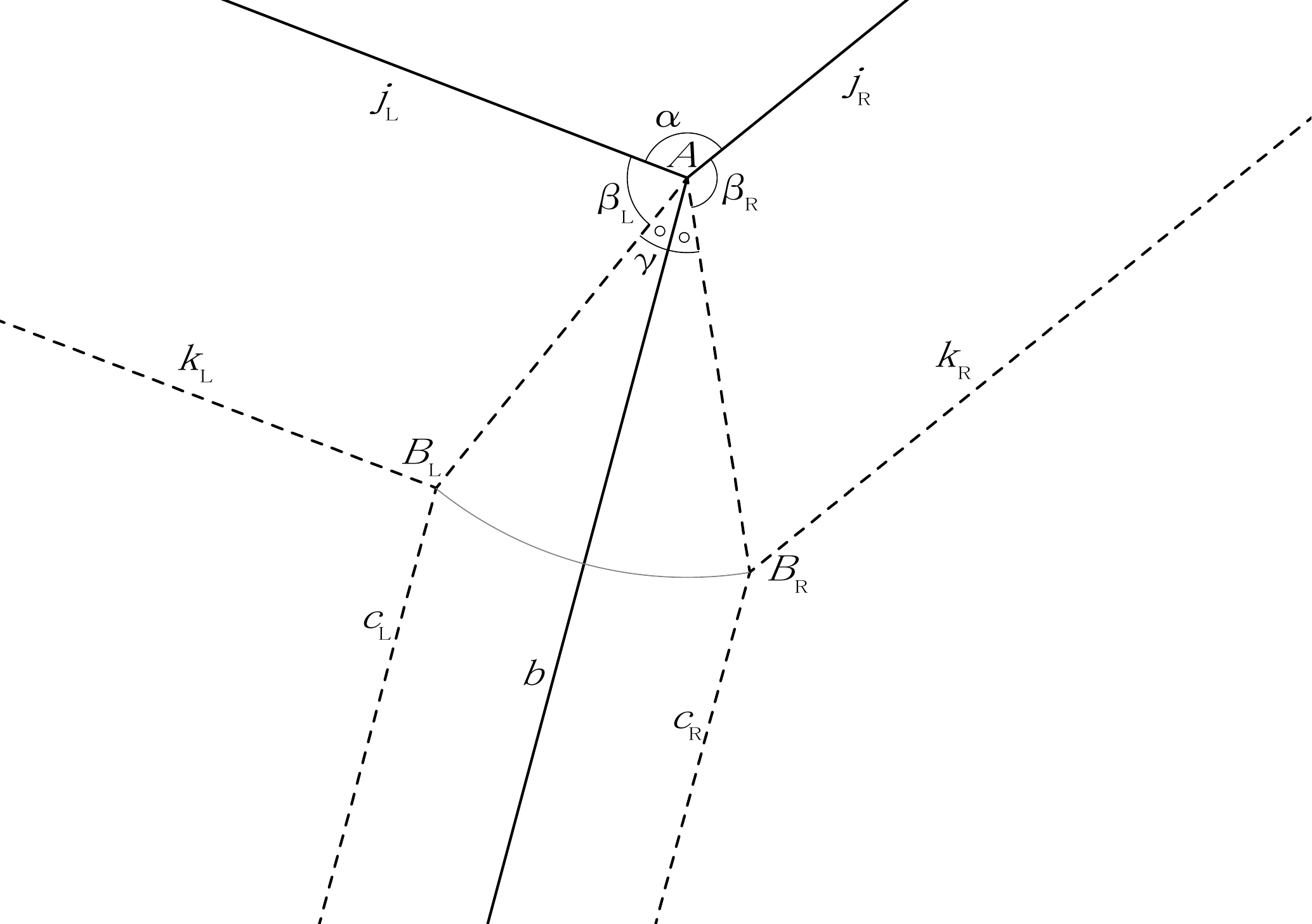}
    \caption{Precreasing for a negative $3$D gadget with one pleat}
    \label{fig:precreasing_negative_onepleat}
\end{figure}
\begin{construction}\label{const:negative_onepleat}\rm
Consider a net as in Figure $\ref{fig:development_1}$, for which we conditions (i), (ii) and (iii) of Construction $\ref{const:condition}$.
Then the crease pattern of the negative $3$D gadget with one simple outgoing pleat is constructed as follows.
\begin{enumerate}[(1)]
\item Draw a ray $b$ starting from $A$ which bisects $\gamma$.
\item Choose $\tau =\Lt$ or $\tau =\Rt$ and extend $k_\tau$ so that $k_\tau$ starts from a point $B'_\tau$ in $b$.
\item Draw a ray $c_{\tau'}$ starting from $B_{\tau'}$, parallel to $b$ and going in the same direction as $b$.
\item The desired crease pattern is shown as the solid lines in Figure $\ref{fig:CP_negative_onepleat}$,
and the assignment of mountain folds and valley folds is given in Table $\ref{tbl:assignment_negative_onepleat}$.
\end{enumerate}
\end{construction}
\begin{figure}[htbp]
\addtocounter{theorem}{1}
\centering
\includegraphics[width=0.48\hsize]{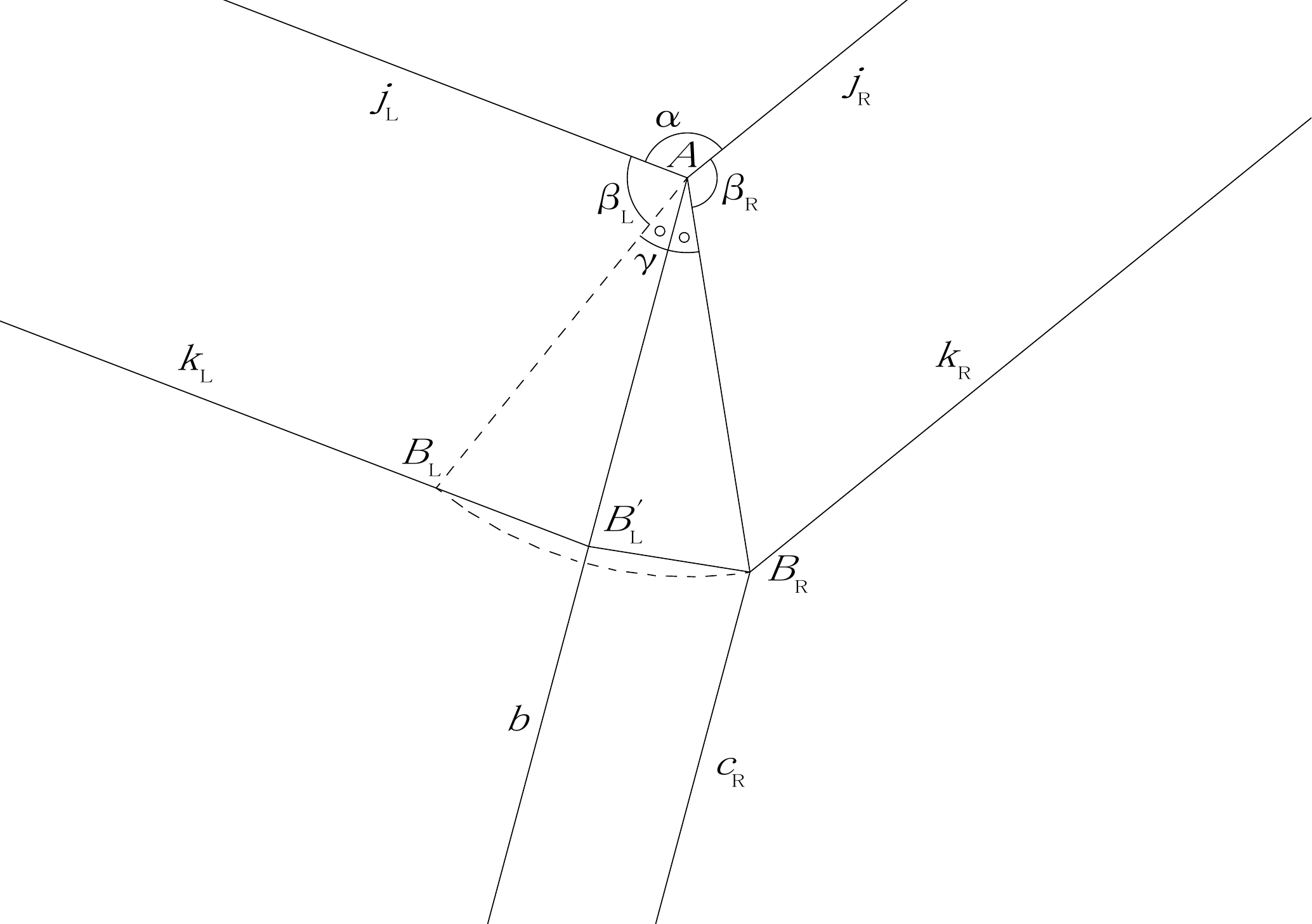}
\includegraphics[width=0.48\hsize]{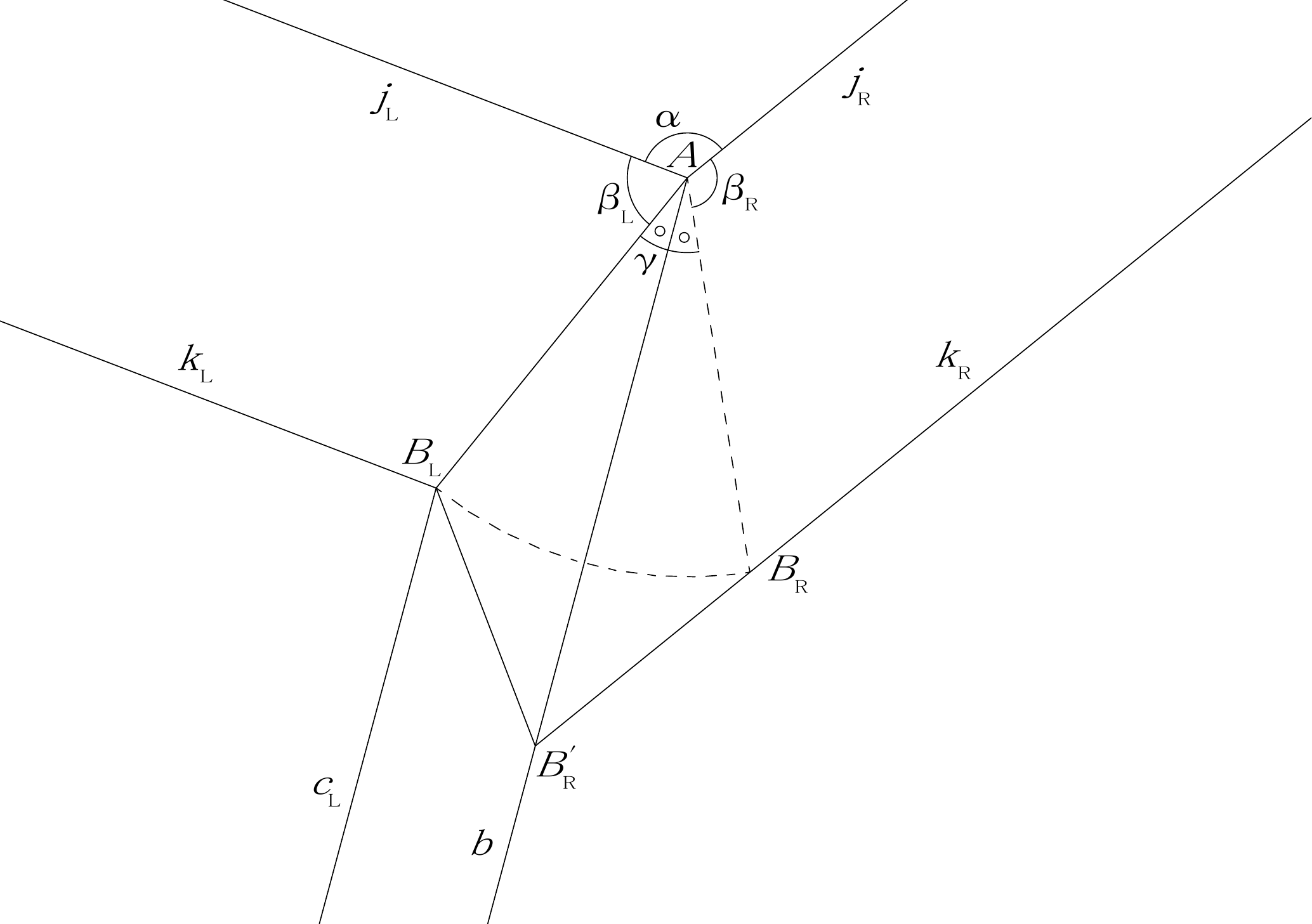}
    \caption{CPs of two possible negative $3$D gadgets with one simple outgoing pleat}
    \label{fig:CP_negative_onepleat}
\end{figure}
\renewcommand{\arraystretch}{1.5}
\addtocounter{theorem}{1}
\begin{table}[h]
\begin{tabular}{c|c}
mountain folds&$j_\Lt ,j_\Rt ,b,B'_\tau B_{\tau'}$\\ \hline
valley folds&$k_\Lt ,k_\Rt ,c_{\tau'},AB_{\tau'}$
\end{tabular}\vspace{0.5cm}
\caption{Assignment of mountain folds and valley folds to a negative $3$D gadget with one simple outgoing pleat}
\label{tbl:assignment_negative_onepleat}
\end{table}
Therefore we obtain two crease patterns according to the choice of $\tau$ in procedure $(2)$.
Note that for the choice of $\tau$ in $(2)$, the extended length of $k_\tau$, which may cause an interference on the side of $\tau'$, is calculated as
\begin{equation*}
\norm{B_\Lt B'_\tau}=\norm{B_\Rt B'_\tau}=\frac{\sin (\gamma /2)}{\sin (\beta_\tau +\gamma /2)}\cdot\norm{AB}.
\end{equation*}
Note also that there is no interference on the side of $\tau'$.

\subsection{Cheng's construction with prescribed outgoing pleats $(\ell'_\Lt ,m_\Lt )$ and $(\ell'_\Rt ,m_\Rt )$}\label{subsec:negative_Cheng}
This construction is originally given in \cite{Cheng}, but only for $\delta_\Lt =\delta_\Rt =0$. 
Here we extend the result for general $\delta_\Lt ,\delta_\Rt\geqslant 0$ satisfying conditions (iii.a)--(iii.c) of Construction $\ref{const:condition}$.
\begin{figure}[htbp]
\addtocounter{theorem}{1}
\centering\includegraphics[width=0.75\hsize]{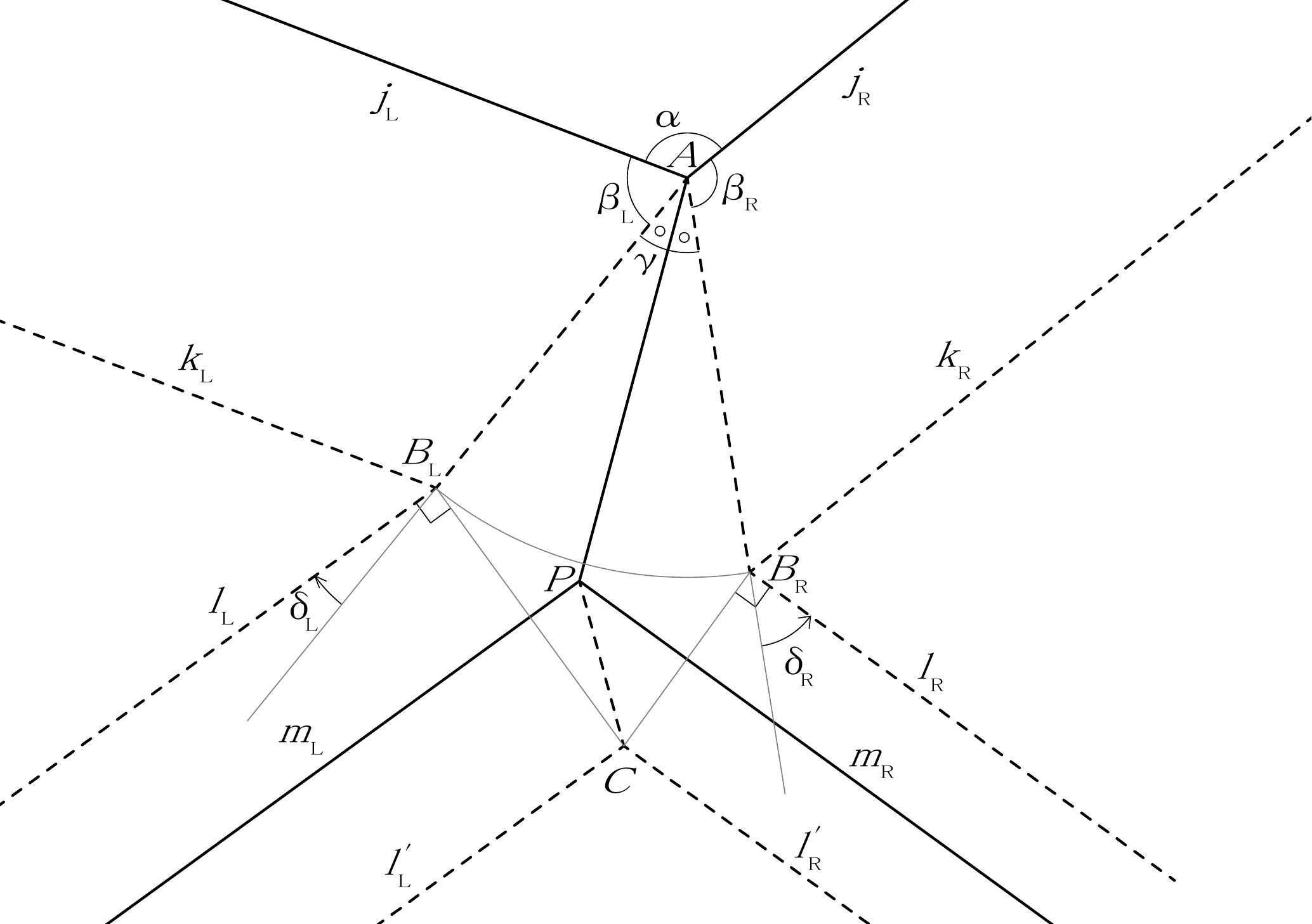}
    \caption{Precreasing for a negative $3$D gadget by Cheng's construction}
    \label{fig:precreasing_negative_Cheng}
\end{figure}
For a development as in Figure $\ref{fig:development_1}$, precreasing as in Figure $\ref{fig:precreasing_negative_Cheng}$ and swinging the resulting the flap
formed by $B_\Lt AP$ and $B_\Rt AP$ to either side face and at the same time the two simple pleats to the side of $\ell'_\Lt$ and $\ell'_\Rt$
until the flap extends the other side face and the two pleats overlap with the ambient paper, we obtain the following construction.
\begin{construction}\label{const:negative_Cheng}\rm
Consider a development as in Figure $\ref{fig:development_1}$, 
for which we require condition (i),(ii) and (iii.a)--(iii.c) of Construction $\ref{const:condition}$.
Then the crease pattern of the negative $3$D gadget with prescribed simple outgoing pleats $(\ell'_\Lt ,m_\Lt )$ and $(\ell'_\Rt ,m_\Rt )$
is constructed as follows.
\begin{enumerate}[(1)]
\item Choose $\tau =\Lt$ or $\tau =\Rt$ and extend $k_\tau$ so that $k_\tau$ starts from a point $B'_\tau$ in polygonal chain $APm_\tau$.
\item If $B'_\tau$ lies in $m_\tau\setminus\{ P\}$, or equivalently, if $\beta_\tau >\gamma /2+\delta_\Lt +\delta_\Rt$, then
draw a line which is a reflection of $B'_\tau C$ across segment $PC$, letting $B'_{\tau'}$ be the intersection point of the line and $m_{\tau'}$.
If $B'_\tau$ lies in $AP$, or equivalently, if $\beta_\tau\leqslant\gamma /2+\delta_\Lt +\delta_\Rt$, then we set $B'_{\tau'}=B'_\tau$.
\item Draw a ray parallel to $AP$ from $B_{\tau'}$ and let $Q_{\tau'}$ be the intersection point of the ray and $m_{\tau'}$.
\item The desired crease pattern is shown as the solid lines in Figure $\ref{fig:CP_negative_Cheng_L}$ for $\beta_\tau\leqslant\gamma /2+\delta_\Lt +\delta_\Rt$
and Figure $\ref{fig:CP_negative_Cheng_R}$ for $\beta_\tau >\gamma /2+\delta_\Lt +\delta_\Rt$,
and the assignment of mountain and valley folds is given in Table $\ref{tbl:assignment_negative_Cheng}$.
\end{enumerate}
\end{construction}
\begin{figure}[htbp]
\addtocounter{theorem}{1}
\centering\includegraphics[width=0.75\hsize]{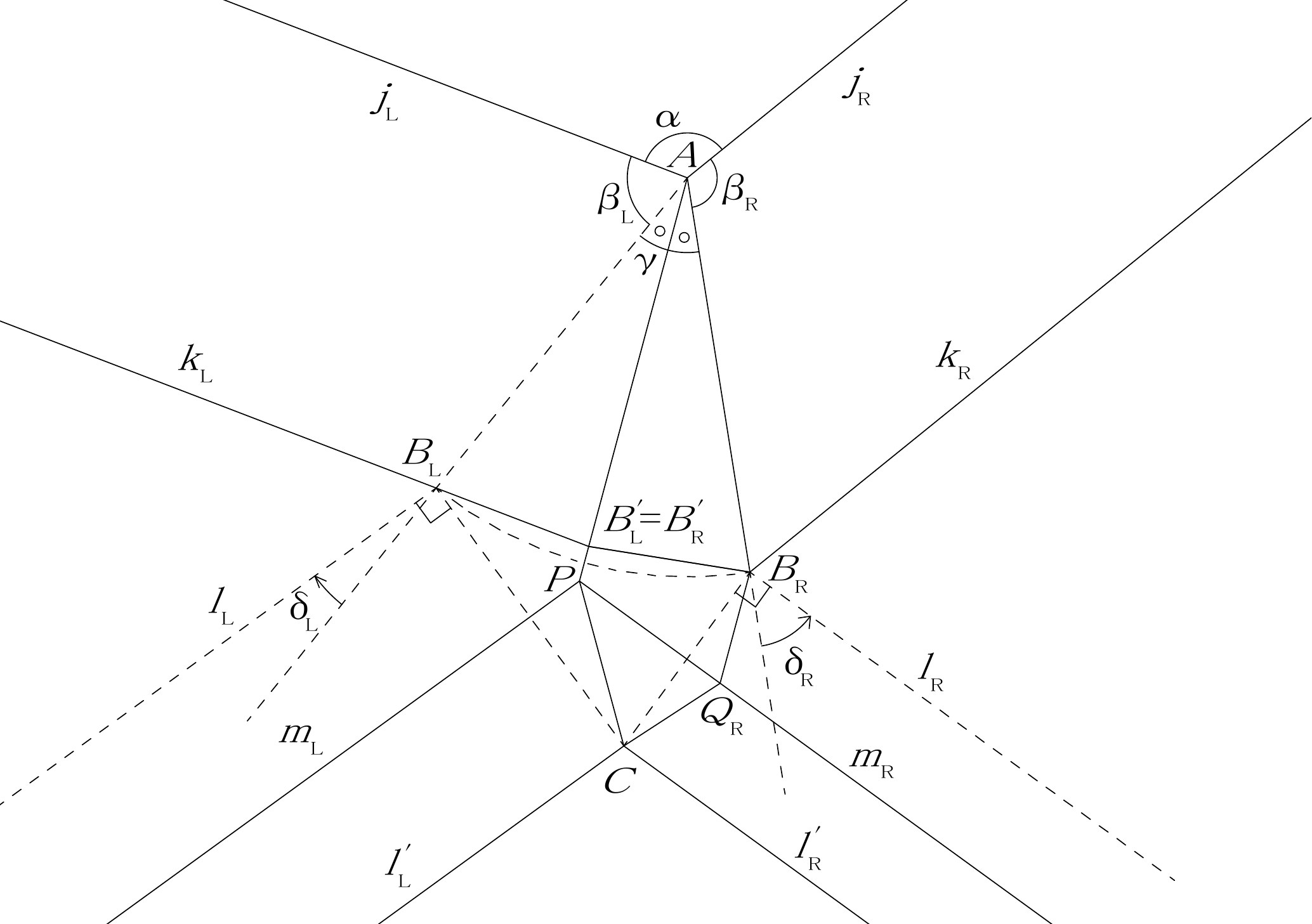}
    \caption{CP of a negative $3$D gadget by Cheng's construction for $\tau =\Lt$ with $\beta_\tau\leqslant\gamma /2+\delta_\Lt +\delta_\Rt$}
    \label{fig:CP_negative_Cheng_L}
\end{figure}
\begin{figure}[htbp]
\addtocounter{theorem}{1}
\centering\includegraphics[width=0.75\hsize]{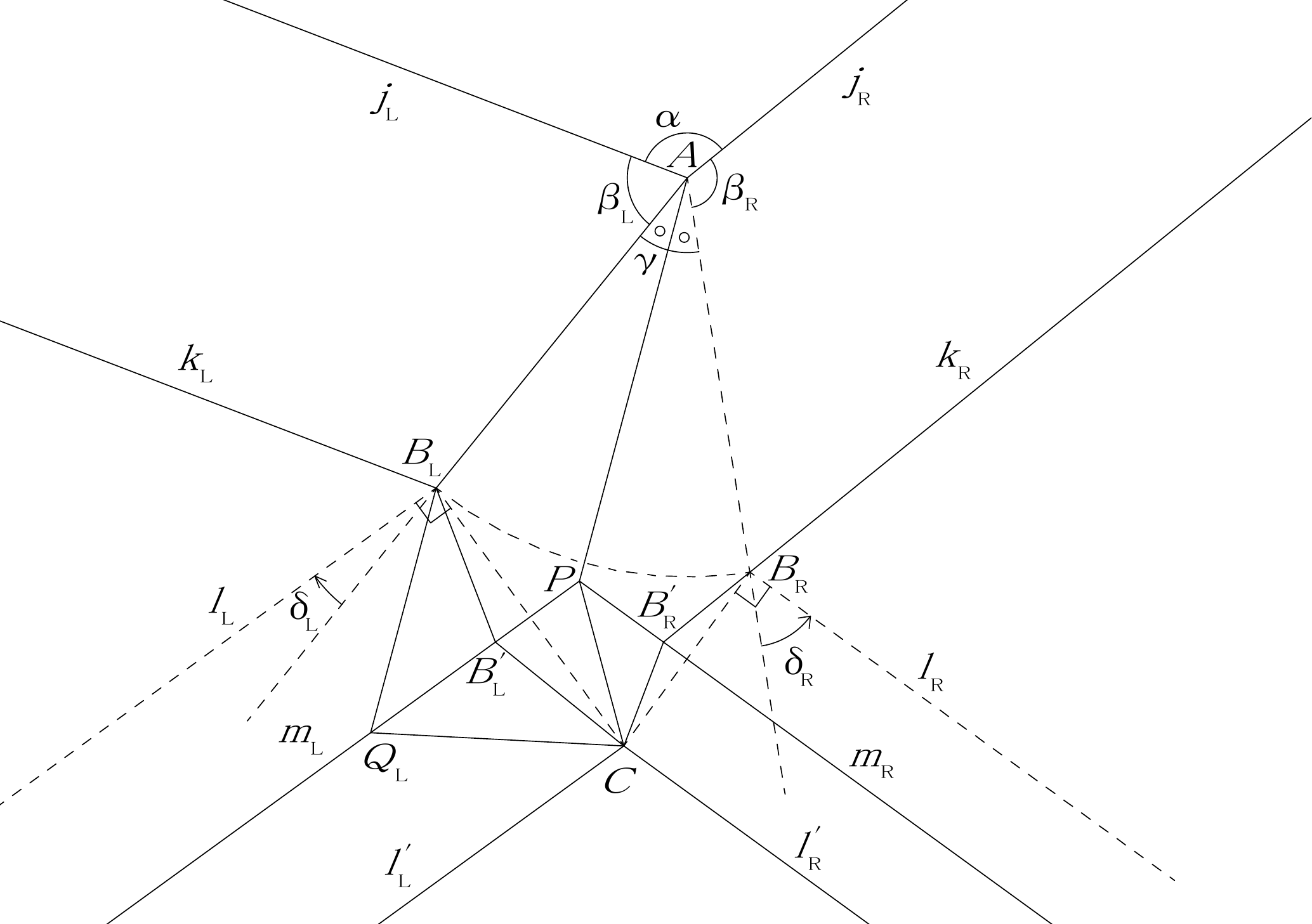}
    \caption{CP of a negative $3$D gadget by Cheng's construction for $\tau =\Rt$ with $\beta_\tau >\gamma /2+\delta_\Lt +\delta_\Rt$}
    \label{fig:CP_negative_Cheng_R}
\end{figure}
\addtocounter{theorem}{1}
\begin{table}[h]
\begin{tabular}{c|c|c}
&common creases&$\beta_\tau >\gamma /2+\delta_\Lt +\delta_\Rt$\\ \hline
mountain folds&$j_\Lt ,j_\Rt ,m_\Lt ,m_\Rt ,AP,B_{\tau'}B'_{\tau'},CQ_{\tau'}$&$B'_\tau C$\\ \hline
valley folds&$k_\Lt ,k_\Rt ,\ell'_\Lt ,\ell'_\Rt ,CP,B_{\tau'}Q_{\tau'}$&$B'_{\tau'}C$
\end{tabular}\vspace{0.5cm}
\caption{Assignment of mountain and valley folds to a negative $3$D gadget with prescribed outgoing pleats $(\ell'_\Lt ,m_\Lt )$ and $(\ell'_\Rt ,m_\Rt )$}
\label{tbl:assignment_negative_Cheng}
\end{table}
\begin{remark}\rm
In Cheng's original construction \cite{Cheng}, where $\delta_\Lt =\delta_\Rt =0$, 
$Q_{\tau'}\in m_{\tau'}$ in $(3)$ was determined so that $\norm{PC}=\norm{CQ_{\tau'}}(=\norm{B_{\tau'}Q_{\tau'}})$.
This is justified for $\delta_\Lt =\delta_\Rt =0$ because $AP$ and $PC$ form a straight line,
so that $PC$ and $B_{\tau'}Q_{\tau'}$ are parallel to each other with the same length.
Also note that if $\delta_\Lt =\delta_\Rt =0$, then $\beta_\tau +\gamma /2<\pi$ always holds, and thus the case $B'_\tau\in AP$ in $(2)$ does not occur.
\end{remark}
\begin{proposition}
For our choice of $\tau$ in $(2)$, we have $\angle AB_{\tau'}B'_{\tau'}=\beta_\tau$.
\end{proposition}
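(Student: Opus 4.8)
The plan is to deduce the whole statement from the single fact that $\angle AB_\tau B'_\tau=\beta_\tau$. This holds because $B'_\tau$ lies on the ray $k_\tau$, and $k_\tau$ issues from $B_\tau$ making angle $\beta_\tau$ with $\ora{B_\tau A}$; indeed, this is the same property that makes the length formula $\norm{B_\tau B'_\tau}=\frac{\sin(\gamma/2)}{\sin(\beta_\tau+\gamma/2)}\norm{AB}$ of Section~\ref{subsec:negative_onepleat} valid, since there $B'_\tau$ additionally lies on the bisector with $\angle B_\tau AB'_\tau=\gamma/2$. Writing $\mu=\gamma/2+\delta_\Lt+\delta_\Rt$, Proposition~\ref{prop:AP} gives $\angle AB_\sigma P=\mu$ for $\sigma=\Lt,\Rt$, which is exactly the threshold separating the two cases of step $(2)$: the ray $k_\tau$ meets $AP$ before $P$ precisely when $\beta_\tau\le\mu$.

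In the case $\beta_\tau\le\mu$ we have $B'_{\tau'}=B'_\tau\in AP$, and I would simply reflect across the line $AP$. By Proposition~\ref{prop:AP} this line is the axis of symmetry of the isosceles triangle $AB_\Lt B_\Rt$, so the reflection fixes $A$ and every point of $AP$—in particular $B'_\tau$—while interchanging $B_\Lt$ and $B_\Rt$. It therefore carries $\angle AB_\tau B'_\tau$ onto $\angle AB_{\tau'}B'_{\tau'}$, and the equality $\angle AB_{\tau'}B'_{\tau'}=\beta_\tau$ follows immediately.

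In the case $\beta_\tau>\mu$ we have $B'_\tau\in m_\tau\setminus\{P\}$, lying beyond $P$, and likewise $B'_{\tau'}\in m_{\tau'}$ beyond $P$. Since $\angle AB_\sigma P=\mu$, we may write $\angle AB_\sigma B'_\sigma=\mu+\angle PB_\sigma B'_\sigma$, so it suffices to prove $\angle PB_\tau B'_\tau=\angle PB_{\tau'}B'_{\tau'}$. For this I would pass to the circumcircle of $\triangle B_\Lt B_\Rt C$, which is centered at $P$ with radius $\rho=\norm{PC}=\norm{PB_\Lt}=\norm{PB_\Rt}$ (the equidistance already exploited in Proposition~\ref{prop:AP}). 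Placing $P$ at the origin with $PC$ along the positive real axis and $B_\tau,B_{\tau'}$ on opposite sides of $PC$, each $m_\sigma$ passes through $P$ and bisects the central angle $\angle B_\sigma PC$; writing $2a=\angle CPB_\tau$, $2a'=\angle CPB_{\tau'}$ and $t=\norm{PB'_\tau}$, the point $B'_\tau$ equals $t\,e^{ia}$, and a short computation gives $\angle PB_\tau B'_\tau=\arg\!\left(1-\tfrac{t}{\rho}e^{-ia}\right)$, whose tangent is $\frac{t\sin a}{\rho-t\cos a}$.

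It then remains to compute $s=\norm{PB'_{\tau'}}$ from the prescription in step $(2)$ and to verify that $\angle PB_{\tau'}B'_{\tau'}$ has the same tangent, i.e.\ $\frac{s\sin a'}{\rho-s\cos a'}=\frac{t\sin a}{\rho-t\cos a}$. Reflecting the line $B'_\tau C$ across $PC$ sends $B'_\tau=t\,e^{ia}$ to $t\,e^{-ia}$, and intersecting the reflected line through $C=\rho$ with $m_{\tau'}$ (the line $s\,e^{-ia'}$) yields $s=\frac{\rho\,t\sin a}{t\sin(a-a')+\rho\sin a'}$. Substituting this value and using $\sin(a-a')=\sin a\cos a'-\cos a\sin a'$, both $s\sin a'$ and $\rho-s\cos a'$ acquire the common factor $\frac{1}{t\sin(a-a')+\rho\sin a'}$, and the ratio collapses exactly to $\frac{t\sin a}{\rho-t\cos a}$, as required. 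This trigonometric collapse is the crux and the step I expect to be the main obstacle: it is precisely where the reflection-across-$PC$ recipe of step $(2)$ is revealed to equalize the base angles at $B_\tau$ and $B_{\tau'}$, thereby transporting $\beta_\tau$ from the $\tau$-side to the $\tau'$-side and completing the proof.
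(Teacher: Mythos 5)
Your proof is correct, and it splits into the same two cases as the paper's, with the same endgame: both reduce the second case, via $\angle AB_\sigma P=\gamma/2+\delta_\Lt+\delta_\Rt$ from Proposition \ref{prop:AP}, to the single equality $\angle PB_\tau B'_\tau=\angle PB_{\tau'}B'_{\tau'}$ (and your first case, reflection across $AP$, is just the paper's congruence $\triangle AB_\tau B'_\tau\cong\triangle AB_{\tau'}B'_{\tau'}$ in different words). The genuine difference is how that key equality is proved. The paper disposes of it synthetically in one line,
\begin{equation*}
\angle PB_\tau B'_\tau=\angle PCB'_\tau=\angle PCB'_{\tau'}=\angle PB_{\tau'}B'_{\tau'},
\end{equation*}
where the outer two equalities hold because $P$ and $B'_\sigma$ both lie on $m_\sigma$, the perpendicular bisector of $B_\sigma C$, so reflection across $m_\sigma$ swaps $B_\sigma\leftrightarrow C$ while fixing $P$ and $B'_\sigma$, and the middle equality is precisely the reflection across $PC$ prescribed in step $(2)$. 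You instead verify the same fact by a circumcircle-coordinate computation; I checked your intersection formula $s=\rho t\sin a/\bigl(t\sin(a-a')+\rho\sin a'\bigr)$ and the collapse of the ratio, and both are right, so your argument stands. What the paper's route buys is brevity and an explanation of \emph{why} step $(2)$ works (the construction is literally a composition of three reflections); what your route buys is a mechanical, self-contained verification that does not require spotting the perpendicular-bisector symmetry. One small point you should patch: equal tangents determine equal angles in $(0,\pi)$ only up to the sign of the denominator, so you should observe that the common factor $\rho\sin a'/\bigl(t\sin(a-a')+\rho\sin a'\bigr)$ is \emph{positive} (its denominator equals $\rho t\sin a/s>0$ since $B'_{\tau'}$ lies on the ray $m_{\tau'}$, i.e.\ $s>0$), so that the two direction vectors are positive multiples of each other and the angles themselves, not merely their tangents, coincide.
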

\begin{proof}
If $B'_\tau\in AP$, then $\triangle AB_\tau B'_\tau$ and $\triangle AB_{\tau'}B'_{\tau'}$ are congruent by Proposition $\ref{prop:AP}$, 
where $B'_\tau =B'_{\tau'}$.
Thus we have $\angle AB_{\tau'}B'_{\tau'}=\angle AB_\tau B'_\tau =\beta_\tau$.
Also, if $B'_\tau\in m_\tau\setminus\{ P\}$, then we have
\begin{equation*}
\angle PB_\tau B'_\tau =\angle PCB'_\tau =\angle PCB'_{\tau'}=\angle PB_{\tau'}B'_{\tau'},
\end{equation*}
so that by Proposition $\ref{prop:AP}$ again, we have
\begin{equation*}
\angle AB_{\tau'}B'_{\tau'}=\angle AB_{\tau'}P+\angle PB_{\tau'}B'_{\tau'}=\angle AB_\tau P+\angle PB_\tau B'_\tau =\beta_\tau .
\end{equation*}
This completes the proof.
\end{proof}
For the choice of $\tau$ in $(2)$, the extended length of the base $k_\tau$ of the side face, which may cause an interference on the side of $\tau'$, is given by
$\max\{\norm{B_\Lt B'_\Lt},\norm{B_\Rt ,B'_\Rt}\}$, where $\norm{B_\sigma B'_\sigma}$ for $\sigma =\Lt ,\Rt$ are calculated as
\begin{equation*}
\norm{B_\sigma B'_\sigma}=
\begin{dcases}
\frac{\sin (\gamma /2)}{\sin (\beta_\tau +\gamma /2)}\cdot\norm{AB}
&\text{if }\beta_\tau\leqslant\frac{\gamma}{2}+\delta_\Lt +\delta_\Rt ,\\
\frac{\cos\delta_{\sigma'}-\cos (\gamma +\delta_{\sigma'})}{2\sin (\gamma +\delta_\Lt +\delta_\Rt )\sin (\beta_\tau -\delta_\sigma )}\cdot\norm{AB}
&\text{if }\beta_\tau\geqslant\frac{\gamma}{2}+\delta_\Lt +\delta_\Rt .
\end{dcases}
\end{equation*}
Note that the extended length in the first case is the same as Construction $\ref{const:negative_onepleat}$.
Note also that there is no interference on the side of $\tau$.

Now let us check the flat-foldability around $P$ and $C$.
Around $P$, noting that
\begin{equation*}
\angle APm_\sigma -\angle CPm_\sigma=\angle APB_\sigma\quad\text{for }\sigma =\Lt ,\Rt ,
\end{equation*}
we have the flat-foldability condition
\begin{equation*}
\angle APm_\Rt -\angle CPm_\Rt +\angle CPm_\Lt -\angle APm_\Lt =\angle APB_\Rt -\angle APB_\Lt =0
\end{equation*}
as desired.

To prove the flat-foldability around $C$, first suppose $B'_\tau\in m_\tau\setminus\{P \}$.
Then we calculate as 
\begin{align*}
\angle PCB'_\tau &=\angle PB_\tau B'_\tau =\angle AB_\tau B'_\tau -\angle AB_\tau P=\beta_\tau -\gamma /2-\delta_\Lt -\delta_\Rt ,\\
\angle B'_{\tau'}CQ_{\tau'}&=\angle B'_{\tau'}B_{\tau'}Q_{\tau'}=\angle AB_{\tau'}Q_{\tau'}-\angle AB_{\tau'}B'_{\tau'}
=\pi -\gamma /2-\beta_\tau ,\\
\angle \ell'_\Lt C\ell'_\Rt&=\gamma +\delta_\Lt +\delta_\Rt ,
\end{align*}
so that we have
\begin{equation}\label{eq:flat-foldability_C_1}
\angle PCB'_\tau +\angle B'_{\tau'}CQ_{\tau'}+\angle \ell'_\Lt C\ell'_\Rt =\pi =\angle PCB'_{\tau'}+\angle Q_{\tau'}C\ell'_{\tau'}+\angle B'_\tau C\ell'_\tau .
\end{equation}

Next suppose $B'_\tau\in AP$. 
Then we have similarly
\begin{align*}
\angle PCQ_{\tau'}&=\angle PB_{\tau'}Q_{\tau'}=\angle AB_{\tau'}Q_{\tau'}-\angle AB_{\tau'}P=\pi -\gamma -\delta_\Lt -\delta_\Rt ,\\
\angle\ell'_\Lt C\ell'_\Rt&=\gamma +\delta_\Lt +\delta_\Rt ,
\end{align*}
so that
\begin{equation}\label{eq:flat-foldability_C_2}
\angle PCQ_{\tau'}+\angle\ell'_\Lt C\ell'_\Rt =\pi =\angle Q_{\tau'}C\ell'_{\tau'}+\angle PC\ell'_\tau .
\end{equation}
Hence $\eqref{eq:flat-foldability_C_1}$ and $\eqref{eq:flat-foldability_C_2}$ yield the flat-foldability around $C$.

\section{Constructions of negative $3$D gadgets with prescribed outgoing pleats $(\ell_\Lt ,m_\Lt )$ and $(\ell_\Rt ,m_\Rt )$}\label{sec:negative_new}
In this section we shall give three constructions of negative $3$D gadgets,
which all use the prescribed two simple pleats $(\ell_\Lt ,m_\Lt )$ and $(\ell_\Rt ,m_\Rt )$.
In particular, we will prove that our first and third constructions solve Problem $\ref{prob:existence_negative}$.
We will also discuss in Section $\ref{subsec:interferences}$ the interferences between two adjacent $3$D gadgets at least one of which is negative.

\subsection{First Construction}\label{subsec:negative_new_1}
This is an extension of our first construction of negative $3$D gadgets given in \cite{Doi19}, Section $9$, and
similar to Constructions $\ref{const:negative_onepleat}$ and $\ref{const:negative_Cheng}$ in that
we swing the supporting triangle to either side face until it overlaps with the other.
We can construct an infinite number of negative $3$D gadgets with the same outgoing pleats for each $\tau\in\{\Lt ,\Rt\}$.
\begin{construction}\label{const:negative_1}\rm
Consider a development as in Figure $\ref{fig:development_1}$, for which we require condition (i), (ii), (iii.a)--(iii.c) of Construction $\ref{const:condition}$.
Then the crease pattern of our first negative $3$D gadget with prescribed simple outgoing pleats $(\ell_\Lt ,m_\Lt )$ and $(\ell_\Rt ,m_\Rt )$
is constructed as follows.
\begin{enumerate}[(1)]
\item Choose $\tau =\Lt$ or $\tau =\Rt$.
Also choose a point $E_\tau$ in $m_\tau$ so that $\angle AB_\tau E_\tau$ satisfies
\begin{equation}\label{ineq:range_ABE_1}
\angle AB_\tau E_\tau\in\left[\pi -\beta_\tau ,\pi -\beta_\tau +\frac{\gamma}{2}+\delta_\Lt +\delta_\Rt\right]\cap (\gamma +\delta_\Lt +\delta_\Rt ,\pi ).
\end{equation}
(For a practical choice of $\angle AB_\tau E_\tau$, see Remark $\ref{rem:ABE_practical}$.)
Also redefine $m_\tau$ to be a ray starting from $E_\tau$ and going in the same direction as ray $\ell_\tau$.
\item Let $D$ be the point in the circular arc $B_\Lt B_\Rt$ with center $A$ such that $\phi_\tau =\angle B_\tau AD=2\angle B_\tau AE_\tau$.
\item Let $G_\tau$ be the point in segment $AE_\tau$ such that $\angle AB_\tau G_\tau =\pi -\beta_\tau$.
If we choose $\angle AB_\tau E_\tau =\pi -\beta_\tau$ in $\eqref{ineq:range_ABE_1}$, then we have $G_\tau =E_\tau$.
\item Let $E_{\tau'}$ be the intersection point of $m_{\tau'}$ and a bisector of $\angle DAB_{\tau'}$.
Also redefine $m_{\tau'}$ to be a ray starting from $E_{\tau'}$ and going in the same direction as ray $\ell_{\tau'}$.
\item Let $G_{\tau'}$ be the intersection point of a ray starting from $G_\tau$ through $D$ and segment $AE_{\tau'}$.
If we choose $\angle AB_\tau E_\tau =\pi -\beta_\tau +\gamma /2+\delta_\Lt +\delta_\Rt$ in $\eqref{ineq:range_ABE_1}$, then we have $G_{\tau'}=E_{\tau'}$.
\item Let $P_\tau$ be a point in $m_\tau$ such that $\angle E_\tau B_\tau P_\tau =\delta_\tau$.
If $\delta_\tau =0$, then we have $P_\tau =E_\tau$.
\item Let $P_{\tau'}$ be the intersection point of a ray starting from $P_\tau$ through $C$ and ray $m_{\tau'}$.
\item If $\angle AB_\tau E_\tau <\pi -\beta_\tau +\delta_\tau$,
then determine a point $Q_\tau$ in $m_\tau$ such that $\angle Q_\tau B_\tau\ell_\tau =\beta_\tau -\delta_\tau$.
Also, determine a point $R_\tau$ in segment $E_\tau P_\tau$ such that $\angle R_\tau B_\tau E_\tau =\angle Q_\tau B_\tau E_\tau$.
(In particular, if $\angle AB_\tau E_\tau =\pi -\beta_\tau$, then we have $E_\tau =G_\tau =R_\tau$.
For the existence of $Q_\tau$ and $R_\tau$, see Proposition $\ref{prop:existence_QR}$.)
Then draw segments $B_\tau Q_\tau ,B_\tau R_\tau ,CQ_\tau$ and $CR_\tau$.
\item If $\angle AB_\tau E_\tau >\beta_{\tau'}+\gamma +\delta_\tau$,
then determine a point $Q_{\tau'}$ in $\ell'_{\tau'}$ such that $\angle Q_{\tau'}B_{\tau'}\ell'_{\tau'}=\beta_{\tau'}-\delta_{\tau'}$.
Also, determine a point $R_{\tau'} (\neq E_{\tau'})$ in segment $E_{\tau'}P_{\tau'}$ such that
$\angle R_{\tau'}B_{\tau'}E_{\tau'}=\angle Q_{\tau'}B_{\tau'}E_{\tau'}$.
(For the existence of $Q_{\tau'}$ and $R_{\tau'}$, see Proposition $\ref{prop:existence_QR}$.)
Then draw segments $B_{\tau'}Q_{\tau'},B_{\tau'}R_{\tau'},CQ_{\tau'}$ and $CR_{\tau'}$.
\item If $\angle AB_\tau E_\tau\in [\pi -\beta_\tau +\delta_\tau ,\beta_{\tau'}+\gamma +\delta_\tau ]$, 
then the desired crease pattern is shown as the solid lines in Figure $\ref{fig:CP_negative_1}$.
If not, then the desired crease pattern is shown as the solid lines in Figure $\ref{fig:CP_negative_1_additional}$.
The assignment of mountain and valley folds to the additional creases is given in Table $\ref{tbl:assignment_negative_1}$,
where $\sigma$ takes both $\Lt$ and $\Rt$.
\end{enumerate}
\end{construction}
\begin{figure}[htbp]
\addtocounter{theorem}{1}
\centering\includegraphics[width=0.75\hsize]{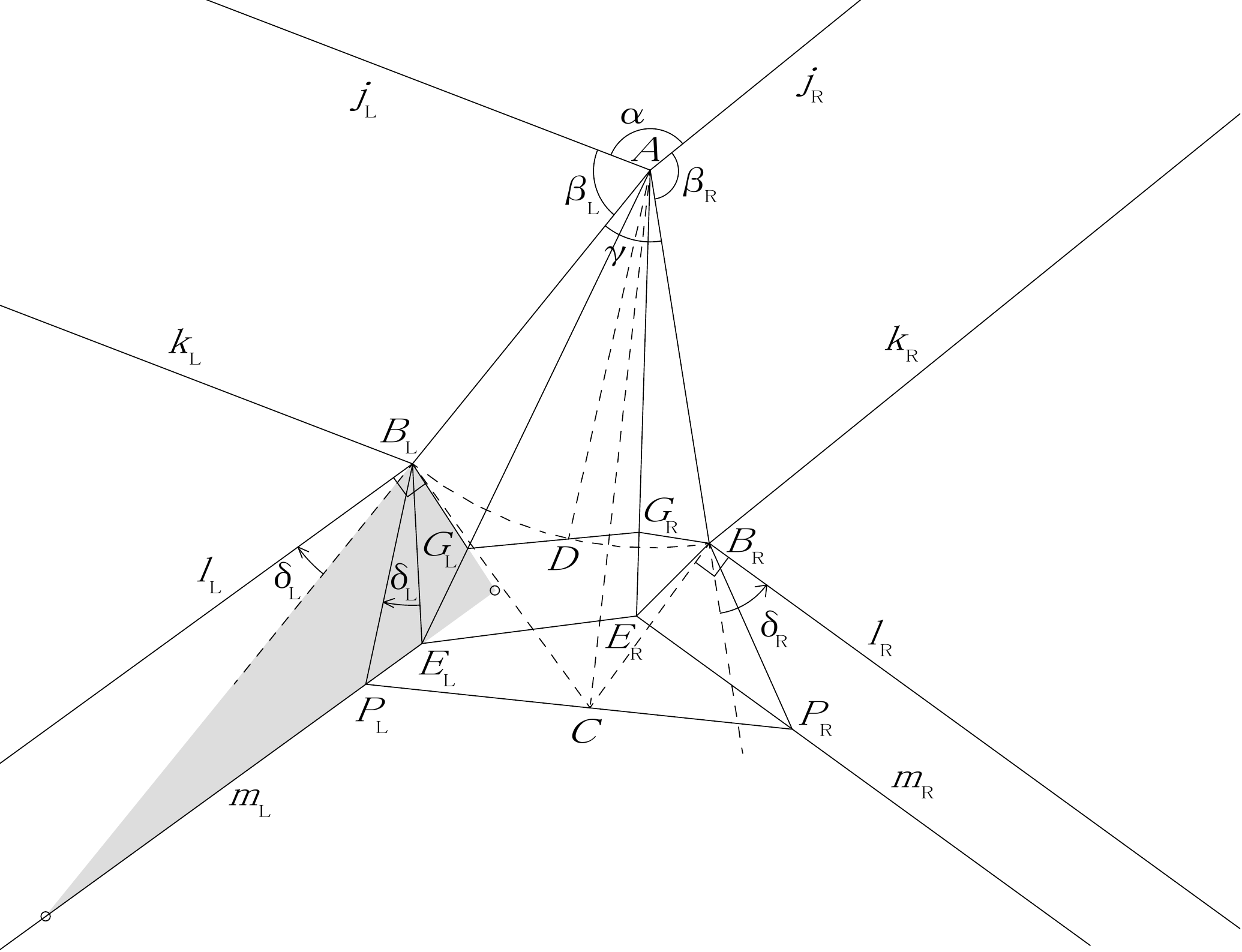}
    \caption{CP of a negative $3$D gadget by our first construction for $\tau =\Lt$ and $\angle AB_\tau E_\tau\geqslant\pi -\beta_\tau +\delta_\tau$, 
    where the possible range of $B_\tau E_\tau$ is shaded}
    \label{fig:CP_negative_1}
\end{figure}
\begin{figure}[htbp]
\addtocounter{theorem}{1}
\centering\includegraphics[width=0.75\hsize]{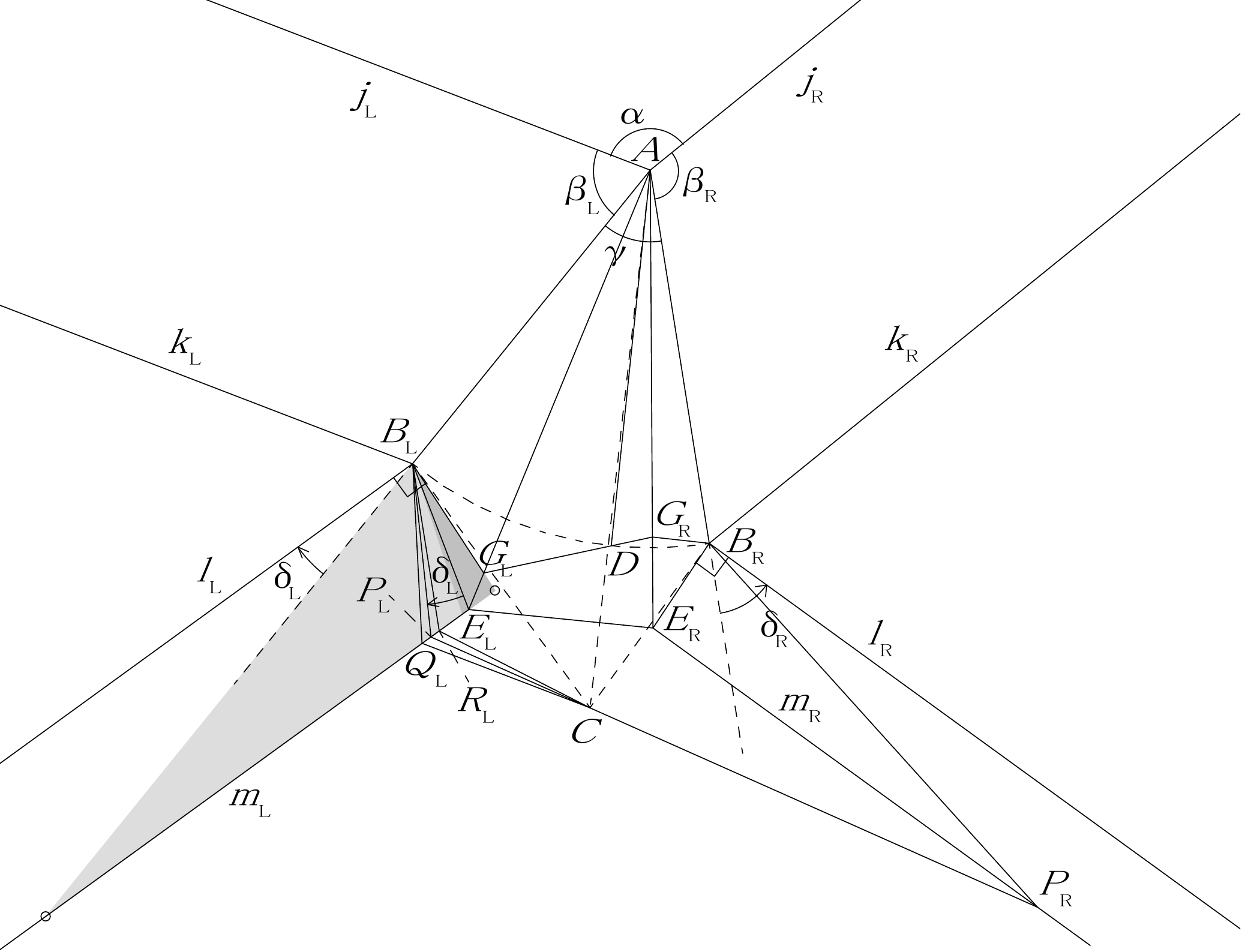}
    \caption{CP of a negative $3$D gadget by our first construction for $\tau =\Lt$ and $\angle AB_\tau E_\tau <\pi -\beta_\tau +\delta_\tau$,
    where the possible range of $B_\tau E_\tau$ with $\angle AB_\tau E_\tau <\pi -\beta_\tau +\delta_\tau$ is dark shaded}
    \label{fig:CP_negative_1_additional}
\end{figure}
\addtocounter{theorem}{1}
\begin{table}[h]
\begin{tabular}{c|c}
&common creases\\ \hline
mountain folds&$j_\sigma ,m_\sigma ,AE_\sigma ,E_\Lt E_\Rt$\\ \hline
valley folds&$k_\sigma ,\ell_\sigma ,AB_\sigma ,G_\Lt G_\Rt ,P_\Lt ,P_\Rt$
\end{tabular}\\
\vspace{0.5cm}
\begin{tabular}{c|c|c|c|c}
&\multicolumn{4}{c}{creases on the side of $\tau ;\quad m_1=\pi -\beta_\tau ,\; m_2=\pi -\beta_\tau +\delta_\tau$}\\
\cline{2-5}
&$\delta_\tau =0$&\multicolumn{3}{c}{$\delta_\tau >0\; (E_\tau\neq P_\tau )$}\\
\cline{3-5}
&$(E_\tau =P_\tau)$&$\angle AB_\tau E_\tau =m_1$&$m_1<\angle AB_\tau E_\tau <m_2$&$\angle AB_\tau E_\tau\geqslant m_2$\\
&&$(E_\tau =G_\tau =R_\tau )$&\multicolumn{2}{c}{$(E_\tau\neq G_\tau )$}\\
\hline
\multirow{3}{*}{\bettershortstack{mountain\\ folds}}&\multirow{3}{*}{$B_\tau G_\tau$}&---&\multicolumn{2}{c}{$B_\tau G_\tau$}\\
\cline{3-5}
&&\multicolumn{3}{c}{$B_\tau P_\tau$}\\
\cline{3-5}
&&\multicolumn{2}{c|}{$B_\tau Q_\tau ,CR_\tau$}&---\\
\hline
\multirow{3}{*}{\bettershortstack{valley\\ folds}}&&\multicolumn{3}{c}{$B_\tau E_\tau$}\\
\cline{3-5}
&\multicolumn{2}{c|}{---}&$B_\tau R_\tau$&\multirow{2}{*}{---}\\
\cline{3-4}
&&\multicolumn{2}{c|}{$CQ_\tau$}&
\end{tabular}\\
\vspace{0.5cm}
\begin{tabular}{c|rc|c|c|c}
&\multicolumn{5}{c}{\bettershortstack{creases on the side of $\tau'; \quad M_1=\beta_{\tau'}+\gamma +\delta_\tau$,\\
\hfill$M_2=\pi -\beta_\tau +\gamma /2+\delta_\Lt +\delta_\Rt$}}\\
\cline{2-6}
&\multicolumn{3}{c|}{$\qquad\angle AB_\tau E_\tau\leqslant M_1\qquad$}&\multicolumn{2}{c}{$\angle AB_\tau E_\tau >M_1$}\\
\cline{2-6}
&\multicolumn{2}{c|}{$\angle AB_\tau E_\tau =M_2$}&\multicolumn{2}{c|}{$\angle AB_\tau E_\tau \neq M_2$}&$\angle AB_\tau E_\tau =M_2$\\
&\multicolumn{2}{c|}{$(E_{\tau'}=G_{\tau'})$}&\multicolumn{2}{c|}{$(E_{\tau'}\neq G_{\tau'})$}&$(E_{\tau'}=G_{\tau'})$\\
\hline
\multirow{3}{*}{\bettershortstack{mountain\\ folds}}&\multicolumn{5}{c}{$B_{\tau'}P_{\tau'}$}\\
\cline{2-6}
&\multirow{2}{*}{$\qquad\qquad$---}&$\quad$
&\multicolumn{2}{c|}{$B_{\tau'}G_{\tau'}$}&\multicolumn{1}{l}{\multirow{2}{*}{\bettershortstack{$B_{\tau'}Q_{\tau'},$\\ $CR_{\tau'}$}}}\\
\cline{4-5}
&\multicolumn{3}{c|}{}&\multicolumn{2}{c}{}\\
\hline
valley&\multicolumn{5}{c}{$B_{\tau'}E_{\tau'}$}\\
\cline{2-6}
folds&\multicolumn{3}{c|}{---}&\multicolumn{2}{c}{$B_{\tau'}R_{\tau'},CQ_{\tau'}$}
\end{tabular}\\
\vspace{0.5cm}
\caption{Assignment of mountain and valley folds to a negative $3$D gadget by our first construction}
\label{tbl:assignment_negative_1}
\end{table}
The following result is immediate.
\begin{lemma}\label{lem:ADG'}
For the choice of $\tau$ in $(1)$, we have $\angle ADG_{\tau'}=\angle AB_{\tau'}G_{\tau'}=\beta_\tau$.
\end{lemma}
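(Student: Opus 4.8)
The plan is to exhibit \emph{two reflections} and read both claimed equalities off them. The guiding observation is that, since $\norm{AB_\Lt}=\norm{AB_\Rt}=\norm{AB}$ and $D$ lies on the circular arc through $B_\Lt ,B_\Rt$ centered at $A$, we have $\norm{AB_\tau}=\norm{AD}=\norm{AB_{\tau'}}$. Hence $\triangle AB_\tau D$ and $\triangle ADB_{\tau'}$ are both isosceles with apex $A$, and in each the \emph{apex-angle bisector coincides with the perpendicular bisector of the base}. The construction arranges exactly these bisectors: by step $(2)$ we have $\angle B_\tau AD=2\angle B_\tau AE_\tau$, so ray $AE_\tau$ bisects $\angle B_\tau AD$ and line $AE_\tau$ is the perpendicular bisector of $B_\tau D$; by step $(4)$, line $AE_{\tau'}$ bisects $\angle DAB_{\tau'}$ and is the perpendicular bisector of $DB_{\tau'}$. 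Writing $r_\tau$ and $r_{\tau'}$ for the reflections across these two lines, $r_\tau$ fixes $A$, swaps $B_\tau\leftrightarrow D$, and fixes every point of $AE_\tau$ (in particular $G_\tau$), while $r_{\tau'}$ fixes $A$, swaps $D\leftrightarrow B_{\tau'}$, and fixes every point of $AE_{\tau'}$ (in particular $G_{\tau'}$).

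Granting this, the angle chase is short. Applying $r_\tau$ to $\triangle AB_\tau G_\tau$ yields $\triangle ADG_\tau$, whence $\angle ADG_\tau=\angle AB_\tau G_\tau=\pi -\beta_\tau$, the last equality being the defining property of $G_\tau$ in step $(3)$. By step $(5)$, $G_{\tau'}$ lies on the ray from $G_\tau$ through $D$, beyond $D$, so $\angle ADG_{\tau'}$ and $\angle ADG_\tau$ are supplementary and $\angle ADG_{\tau'}=\pi -(\pi -\beta_\tau )=\beta_\tau$. Finally, applying $r_{\tau'}$ to $\triangle ADG_{\tau'}$ yields $\triangle AB_{\tau'}G_{\tau'}$, so $\angle AB_{\tau'}G_{\tau'}=\angle ADG_{\tau'}=\beta_\tau$. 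Combining the last two relations gives $\angle ADG_{\tau'}=\angle AB_{\tau'}G_{\tau'}=\beta_\tau$, as required.

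The only genuine work is the \emph{configuration bookkeeping} behind the two reflections, which I expect to be the main obstacle: I must confirm that ray $AE_\tau$ really lies inside $\angle B_\tau AD$ (so that $AE_\tau$ is the internal apex bisector, not its opposite ray), that $G_\tau$ and $G_{\tau'}$ lie on the \emph{segments} $AE_\tau$ and $AE_{\tau'}$ as asserted, and above all that $D$ is genuinely \emph{between} $G_\tau$ and $G_{\tau'}$ on the ray of step $(5)$, which is precisely what turns $\angle ADG_\tau$ and $\angle ADG_{\tau'}$ into supplementary angles. These betweenness facts are visible in Figures $\ref{fig:CP_negative_1}$ and $\ref{fig:CP_negative_1_additional}$ and are forced by the range $\eqref{ineq:range_ABE_1}$ together with conditions (i), (ii) and (iii.a)--(iii.c) of Construction $\ref{const:condition}$; once they are pinned down, the two reflections make the statement immediate.
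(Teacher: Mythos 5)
Your proposal is correct and is precisely the argument the paper leaves implicit: the paper offers no written proof (it states the lemma as ``immediate''), and what makes it immediate is exactly your pair of reflections across $AE_\tau$ and $AE_{\tau'}$ --- the apex bisectors of the isosceles triangles $AB_\tau D$ and $ADB_{\tau'}$ --- giving $\angle ADG_\tau =\angle AB_\tau G_\tau =\pi -\beta_\tau$, then the supplementary angle at $D$ along line $G_\tau G_{\tau'}$, then $\angle AB_{\tau'}G_{\tau'}=\angle ADG_{\tau'}=\beta_\tau$. The configuration facts you flag (that $AE_\tau$ is the internal bisector, and that $D$ lies strictly between $G_\tau$ and $G_{\tau'}$) do hold --- $G_\tau$ and segment $AE_{\tau'}\setminus\{A\}$ lie on opposite sides of line $AD$, so any intersection of the ray with $AE_{\tau'}$ is past $D$ --- and the existence of $G_{\tau'}$ itself is what the paper's subsequent derivation of the range $\eqref{ineq:range_ABE_1}$ (inequality $\eqref{ineq:existence_G'}$) is designed to guarantee, so your proof matches the paper's logical structure exactly.
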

Thus for the choice of $\tau$ in $(1)$ and also $\phi_\tau$ in $(2)$, the extended length of the base $k_\tau$ of the side face,
which may cause an interference on the side of $\tau'$, is given by $\norm{DG_{\tau'}}$, which is calculated as
\begin{equation*}
\norm{DG_{\tau'}}=\norm{B_{\tau'}G_{\tau'}}=\frac{\norm{AB}}{\sin\beta_\tau /\tan (\phi_{\tau'}/2)+\cos\beta_\tau}.
\end{equation*}
On the other hand, an interference occurs on the side of $\tau$ within the length $\norm{D G_\tau}$ given by
\begin{equation*}
\norm{DG_\tau}=\norm{B_\tau G_\tau}=\frac{\norm{AB}}{\sin\beta_\tau /\tan (\phi_\tau /2)-\cos\beta_\tau}. 
\end{equation*}
The above constuction of negative $3$D gadgets leads to the following two results,
where the validity of the construction and the foldability of the crease patterns are verified below.
\begin{theorem}\label{thm:negative_1_engaging}
For any positive $3$D gadget with $(\alpha ,\beta_\Lt ,\beta_\Rt ,\delta_\Lt ,\delta_\Rt )$
satisfying conditions $(\mathrm{i})$, $(\mathrm{ii})$ and $(\mathrm{iii.a})$--$(\mathrm{iii.c})$ of Construction $\ref{const:condition}$, 
we can apply Construction $\ref{const:negative_1}$ to obtain an infinite number of negative $3$D gadgets for each $\tau\in\{\Lt ,\Rt\}$
with the same net and prescribed outgoing pleats as the positive one in the development.
All of these negative $3$D gadgets engage with the original positive one.
\end{theorem}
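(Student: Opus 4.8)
The plan is to establish the four assertions of the theorem in turn: that Construction~\ref{const:negative_1} is well defined for every admissible choice of parameters, that the resulting crease pattern is flat-foldable and folds to the negative of the prescribed object, that its net and outgoing pleats agree with those of the positive gadget, and finally that it engages with the positive gadget. Since the freedom in the construction lies entirely in the parameter $\angle AB_\tau E_\tau$, I would first show that the admissible interval in~\eqref{ineq:range_ABE_1} is nondegenerate, which immediately yields infinitely many gadgets for each $\tau$. Writing $\gamma=2\pi-\alpha-\beta_\Lt-\beta_\Rt$, the intersection in~\eqref{ineq:range_ABE_1} has nonempty interior as soon as $\max\{\pi-\beta_\tau,\gamma+\delta_\Lt+\delta_\Rt\}<\min\{\pi-\beta_\tau+\gamma/2+\delta_\Lt+\delta_\Rt,\pi\}$. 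Three of the four resulting inequalities are immediate (two are trivial and one is condition~(iii.c)), while the remaining one, $\gamma/2+\beta_\tau<\pi$, is equivalent to $\beta_\tau<\alpha+\beta_{\tau'}$, i.e.\ precisely condition~(i). Thus the triangle inequalities~(i) are exactly what guarantees a whole interval of valid $E_\tau$, and hence an infinite family.

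Next I would check that the remaining points produced by the construction are well defined. The point $D$ lies on the arc $B_\Lt B_\Rt$ because one verifies $\phi_\tau=2\angle B_\tau AE_\tau\leqslant\gamma$ over the admissible range, and the existence of $E_{\tau'}$, $G_\sigma$ and $P_\sigma$ follows by elementary angle-chasing from conditions~(i)--(iii) together with Proposition~\ref{prop:AP}; for the auxiliary points $Q_\sigma$ and $R_\sigma$ in steps~(8) and~(9) I would invoke Proposition~\ref{prop:existence_QR}. Lemma~\ref{lem:ADG'}, which records $\angle ADG_{\tau'}=\angle AB_{\tau'}G_{\tau'}=\beta_\tau$, is the geometric heart of the swing: it guarantees that the supporting flap, after being rotated about $A$ so that $B_\tau$ lands on $D$, glues exactly onto the side face on the $\tau'$ side along a segment of the correct opening angle $\beta_\tau$, so that the two side faces are recovered without a tear.

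With the combinatorics in place I would verify flat-foldability at each interior vertex by Kawasaki's and Maekawa's conditions. The vertices $A$, $D$, $E_\Lt$, $E_\Rt$, $G_\Lt$, $G_\Rt$, and the pleat vertices $B_\sigma$, $C$, $P_\sigma$ (and $Q_\sigma$, $R_\sigma$ when present) are treated separately; the computations around $P_\sigma$ and $C$ mirror verbatim those already carried out for Cheng's construction in~\eqref{eq:flat-foldability_C_1} and~\eqref{eq:flat-foldability_C_2}, while the new vertices $D$, $E_\sigma$, $G_\sigma$ are handled using Lemma~\ref{lem:ADG'} and the bisection in Proposition~\ref{prop:AP}. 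That the folded image is the negative object then follows by reading off the dihedral data: the top and side faces fold below the ambient plane with the prescribed pleats $(\ell_\sigma,m_\sigma)$ as their outgoing creases, and the flap $E_\Lt E_\Rt$ becomes the supporting triangle on the back. Since the construction alters only the interior creases of the fixed development of Figure~\ref{fig:development_1} and reuses the pleats $(\ell_\sigma,m_\sigma)$ of Construction~\ref{const:condition}, the net and the outgoing pleats coincide with those of the positive gadget; engagement then follows from this shared development and shared pleats under the orientation convention, the two gadgets meeting along common side faces with compatible, orientation-reversed mountain--valley data.

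The main obstacle I anticipate is not the local flat-foldability, which is a finite angle computation, but the global layer ordering: one must show that the swung supporting triangle and the pleats stack in a consistent order without self-penetration, and this must be checked across all the case distinctions of Construction~\ref{const:negative_1}---in particular the subranges of $\angle AB_\tau E_\tau$ in steps~(8)--(10) that introduce the extra creases $B_\sigma Q_\sigma$, $B_\sigma R_\sigma$, $CQ_\sigma$, $CR_\sigma$ and switch the applicable figure between Figure~\ref{fig:CP_negative_1} and Figure~\ref{fig:CP_negative_1_additional}. Verifying that the mountain--valley assignment of Table~\ref{tbl:assignment_negative_1} yields a valid layer ordering in each subcase, and that the overlap of the supporting triangle with the opposite side face never forces two layers through each other, is where the real work lies.
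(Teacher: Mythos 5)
Your proposal is correct and follows essentially the same route as the paper: infinitude from nonemptiness of the interval \eqref{ineq:range_ABE_1}, which both you and the paper (via \eqref{ineq:nonempty}) reduce to $\beta_\tau +\gamma /2<\pi$ and $\gamma +\delta_\Lt +\delta_\Rt <\pi$, i.e.\ conditions (i) and (iii.c); existence of the auxiliary points via the derivation of \eqref{ineq:range_ABE_1} together with Proposition~\ref{prop:existence_QR}; Lemma~\ref{lem:ADG'} as the key gluing fact; and vertex-by-vertex flat-foldability checks. The only slips are cosmetic: the paper's checks around $E_\sigma$, $E_{\tau'}$ and $B_{\tau'}$ rely on Proposition~\ref{prop:angle_relations} and the computation $\angle k_{\tau'}B_{\tau'}G_{\tau'}=\pi -\alpha$ rather than mirroring Cheng's \eqref{eq:flat-foldability_C_1}--\eqref{eq:flat-foldability_C_2}, and the global layer-ordering question you flag as the real remaining work is likewise left implicit by the paper itself.
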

Also, if we consider the horizontally flipped crease pattern of each negative $3$D gadget
resulting in Theorem $\ref{thm:negative_1_engaging}$, we have the following result, which solves Problem $\ref{prob:existence_negative}$.
\begin{corollary}
For any positive $3$D gadget with $(\alpha ,\beta_\Lt ,\beta_\Rt ,\delta_\Lt ,\delta_\Rt )$
satisfying conditions $(\mathrm{i})$, $(\mathrm{ii})$ and $(\mathrm{iii.a})$--$(\mathrm{iii.c})$ of Construction $\ref{const:condition}$,
we can apply Construction $\ref{const:negative_3}$ to obtain an infinite number of $3$D gadgets for each $\tau\in\{\Lt ,\Rt\}$
with the same outgoing pleats such that the top and two side faces are negatively extruded, or in other words, sunk to the opposite direction.
\end{corollary}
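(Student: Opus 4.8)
The plan is to read the corollary through the standardization Convention of Section \ref{sec:introduction}: for a negative $3$D gadget the development is, by definition, the one seen from the back side, so obtaining the negative extrusion of a given positive gadget means first horizontally flipping its development and then folding a negative gadget on that flipped net. Accordingly, I would first check that the input data survive the flip. The horizontal flip interchanges $\Lt$ and $\Rt$, replacing $(\alpha,\beta_\Lt,\beta_\Rt,\delta_\Lt,\delta_\Rt)$ by $(\alpha,\beta_\Rt,\beta_\Lt,\delta_\Rt,\delta_\Lt)$, and every inequality in conditions (i), (ii) and (iii.a)--(iii.c) of Construction \ref{const:condition} is symmetric under this interchange. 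Hence the flipped net again satisfies all the hypotheses, and Construction \ref{const:negative_3} may legitimately be applied to it.

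Next I would apply Construction \ref{const:negative_3} to the flipped net and then flip the resulting crease pattern back. Because the prescribed pleats produced by Construction \ref{const:negative_3} on the flipped development are the images of $(\ell_\Lt,m_\Lt)$ and $(\ell_\Rt,m_\Rt)$, the backward flip returns precisely the outgoing pleats of the original positive gadget; this gives the assertion that the pleats are unchanged. The remaining point is to identify the folded image: I would verify that the top face is carried to the opposite side of the ambient paper and that the two side faces descend accordingly while still meeting along a ridge, so that the folded object is exactly the negative---the sink to the opposite direction---of the positive extrusion. This is where the ``seen from the back'' clause of the Convention does the essential work, since reversing the side from which the development is read reverses the sign of the solid angles at the top vertex.

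Finally, for the stated cardinality I would record that, for each choice of $\tau\in\{\Lt,\Rt\}$, Construction \ref{const:negative_3} carries a free parameter; letting it range over its admissible interval produces an infinite family of gadgets, all sharing the same net and the same outgoing pleats, which yields the ``infinite number for each $\tau$'' in the statement. I expect the main obstacle to be the $3$D verification of the middle step: one must confirm that the flip genuinely converts the gadget into one that \emph{sinks} the faces rather than merely re-expressing the same positive extrusion. Concretely, this amounts to tracking the mountain/valley assignment under the flip and checking that the dihedral angles along the ridge and along the two side creases open to the reverse side of the paper, which is exactly the geometric heart of Problem \ref{prob:existence_negative}.
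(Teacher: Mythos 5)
Your proposal is correct and takes essentially the same route as the paper: the corollary is deduced from Theorem \ref{thm:negative_1_engaging} by horizontally flipping the crease patterns, with all the substantive work (validity of the construction, foldability, and the free choice of $\angle AB_\tau E_\tau$ in the interval $\eqref{ineq:range_ABE_1}$ producing the infinite family for each $\tau$) already carried out in Section \ref{subsec:negative_new_1}, and your explicit check that conditions (i), (ii) and (iii.a)--(iii.c) are invariant under the $\Lt \leftrightarrow \Rt$ interchange merely makes explicit what the paper leaves implicit. One remark: the citation of Construction \ref{const:negative_3} in the corollary's statement is evidently a typo for Construction \ref{const:negative_1} (the third construction yields a \emph{unique} gadget, not an infinite family), and your ``free parameter'' reading correctly matches the intended first construction.
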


Now let us see where condition $\eqref{ineq:range_ABE_1}$ in $(1)$ of Construction $\ref{const:negative_1}$ comes from.
We will follow the discussion and terminology in \cite{Doi20}, Section $4$.
In particular, we set $\psi_\sigma =\angle B_\sigma AC-\angle B_\sigma AD$ and $\rho_\sigma =\angle ACE_\sigma -\angle DCE_\sigma$ for $\sigma =\Lt ,\Rt$,
so that we have $\gamma_\Lt +\gamma_\Rt =\gamma$ and $\psi_\Lt +\psi_\Rt =\rho_\Lt +\rho_\Rt =0$.
For simplicity, we denote below $B_\tau ,k_\tau ,\beta_\tau ,\psi_\tau$, etc. by $B,k,\beta ,\psi$, etc.,
and $B_{\tau'},k_{\tau'},\beta_{\tau'},\psi_{\tau'}$, etc. by $B',k',\beta',\psi'$, etc. 
with an exception that we denote $\gamma =\gamma_\Lt +\gamma_\Rt$ by $\gamma_+$ to avoid confusion.
Also, we denote $\delta_\Lt +\delta_\Rt$ by $\delta_+$.

Firstly, for the existence of point $G$, we need $\angle ABG\leqslant\angle ABE$, which is equivalent to
\begin{equation}\label{ineq:existence_G}
\frac{\pi}{2}-\beta\leqslant\frac{\gamma}{2}+\frac{\psi}{2}+\rho +\delta
\end{equation}
by \cite{Doi20}, Lemma $4.4$.
Secondly, it follows from $-\gamma'<\psi <\gamma$ and $\gamma'+\delta'-\pi /2<\rho <\pi /2 -\gamma -\delta$ that
\begin{equation}\label{ineq:existence_E}
\frac{\gamma_+}{2}+\delta'-\frac{\pi}{2}<\frac{\gamma}{2}+\frac{\psi}{2}+\rho <\frac{\pi}{2}-\delta .
\end{equation}
Thirdly, for the existence of point $G'$, we need
\begin{equation*}
\angle ADG'\leqslant\angle ADE'.
\end{equation*}
Then from Lemma $\ref{lem:ADG'}$ and
\begin{equation*}
\angle ADE'=\angle AB'E'=\frac{\pi}{2}+\frac{\gamma'}{2}+\frac{\psi'}{2}+\rho'+\delta'
=\frac{\pi}{2}+\frac{\gamma_+}{2}+\delta'-\left(\frac{\gamma}{2}+\frac{\psi}{2}+\rho\right) ,
\end{equation*}
we see that
\begin{equation}\label{ineq:existence_G'}
\beta +\frac{\gamma}{2}+\frac{\psi}{2}+\rho\leqslant\frac{\pi}{2}+\frac{\gamma_+}{2}+\delta'.
\end{equation}
Fourthly, for the existence of point $P'$, we need
\begin{equation}\label{ineq:existence_P_1}
\begin{dcases}
\angle CEE'+\angle EE'P'<\pi&\text{if }\delta =0,\text{ so that }P=E,\\
\angle CPE +\angle PEE'+\angle EE'P'<2\pi&\text{if }\delta >0.
\end{dcases}
\end{equation}
The left-hand side of the second inequality of $\eqref{ineq:existence_P_1}$ is calculated as
\begin{equation}\label{eq:E'P'P}
\begin{aligned}
\angle CPE&+\angle PEE'+\angle EE'P'\\
&=(\angle CPE+\angle CEP)+\angle CEE'+\angle CE'E +\angle CE'P'\\
&=(\pi -\delta )+\left(\frac{\gamma}{2}+\delta -\frac{\psi}{2}\right) +\left(\frac{\gamma'}{2}+\delta'-\frac{\psi'}{2}\right)
+\left(\frac{\pi}{2}+\frac{\gamma'}{2}+\frac{\psi'}{2}+\rho'\right)\\
&=\frac{3}{2}\pi +\gamma_++\delta'-\left(\frac{\gamma}{2}+\frac{\psi}{2}+\rho\right) .
\end{aligned}
\end{equation}
Thus the second inequality of $\eqref{ineq:existence_P_1}$ is rewritten as
\begin{equation}\label{ineq:existence_P}
\gamma_++\delta'-\frac{\pi}{2}<\frac{\gamma}{2}+\frac{\psi}{2}+\rho .
\end{equation}
In a similar way, we see that the first inequlity of $\eqref{ineq:existence_P_1}$ is included in $\eqref{ineq:existence_P}$.

Therefore we obtained four inequalities $\eqref{ineq:existence_G}$, $\eqref{ineq:existence_E}$, $\eqref{ineq:existence_G'}$ and $\eqref{ineq:existence_P}$,
which are arranged into the following inequalities for $\angle ABE =\pi /2+\gamma /2+\delta +\psi /2+\rho$:
\begin{equation}\label{ineq:range_ABE_2}
\begin{aligned}
\pi -\beta&\leqslant\angle ABE\leqslant\pi -\beta +\frac{\gamma_+}{2}+\delta_+,\\
\gamma_++\delta_+&<\angle ABE<\pi ,
\end{aligned}
\end{equation}
which is the same as condition $\eqref{ineq:range_ABE_1}$,
where the left (resp. the right) equality of the first inequality of corresponds to $G =E$ (resp. $G'=E'$).
This explains how we find point $E_\tau$ in Construction $\ref{const:negative_1}$, $(1)$.
Note that the range of $\angle ABE$ determined by $\eqref{ineq:range_ABE_2}$ is nonempty because of
\begin{equation}\label{ineq:nonempty}
\beta +\frac{\gamma_+}{2}<\pi\quad\text{and}\quad\gamma_++\delta_+<\pi ,
\end{equation}
which come from conditions (i) and (iii.c) of Construction $\ref{const:condition}$ respectively.

Now we have the following results.
\begin{proposition}\label{prop:angle_relations}
For the choice of $\tau\in\{\Lt ,\Rt\}$ in Construction $\ref{const:negative_1}$, $(1)$, we have the following relations:
\begin{equation}\label{eq:angle_relations_1}
\begin{aligned}
\angle E_\tau B_\tau G_\tau +\angle E_{\tau'}B_{\tau'}G_{\tau'}&=\frac{\gamma}{2}+\delta_\Lt +\delta_\Rt ,\quad
\angle E_{\tau'}B_{\tau'}P_{\tau'}=\frac{\gamma}{2}+\delta_{\tau'},\\
\angle E_\tau P_\tau P_{\tau'}=\pi -\angle AB_\tau E_\tau&,\quad\angle E_{\tau'}P_{\tau'}P_\tau =\angle AB_\tau E_\tau -(\gamma +\delta_\Lt +\delta_\Rt ),
\quad\text{so that}\\
\angle E_\tau P_\tau P_{\tau'}+\angle E_{\tau'}P_{\tau'}P_\tau&=\pi -(\gamma +\delta_\Lt +\delta_\Rt ).
\end{aligned}
\end{equation}
In particular, we have $E_{\tau'}\neq P_{\tau'}$. Also, we have
\begin{equation}\label{eq:angle_relations_2}
\begin{aligned}
\angle E_{\tau'}B_{\tau'}G_{\tau'}&=\pi -\beta_\tau +\frac{\gamma}{2}+\delta_\Lt +\delta_\Rt -\angle AB_\tau E_\tau ,\\
\angle AB_{\tau'}E_{\tau'}&=\pi +\frac{\gamma}{2}+\delta_\Lt +\delta_\Rt -\angle AB_\tau E_\tau ,\\
\angle P_{\tau'}B_{\tau'}\ell_{\tau'}&=\angle AB_\tau E_\tau -(\gamma +\delta_\Lt +\delta_\Rt ).
\end{aligned}
\end{equation}
\end{proposition}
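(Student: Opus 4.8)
\emph{Proof plan.} The plan is to establish every identity in \eqref{eq:angle_relations_1} and \eqref{eq:angle_relations_2} by angle chasing, resting on two structural facts. First, the reflection across the line containing $m_\sigma$ fixes that line pointwise and interchanges $B_\sigma$ and $C$, since by Construction \ref{const:condition}, $(3)$ this line is the perpendicular bisector of segment $B_\sigma C$; hence $\angle XB_\sigma Y=\angle XCY$ whenever $X,Y$ lie on $m_\sigma$. Second, the points $P_\tau$, $C$, $P_{\tau'}$ are collinear by Construction \ref{const:negative_1}, $(7)$, which converts angles at $P_\tau$ into supplementary angles at $P_{\tau'}$ through $C$. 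The numerical input is the identity $\angle AB_\tau E_\tau =\pi /2+\gamma_\tau /2+\psi_\tau /2+\rho_\tau +\delta_\tau$ derived above, its primed analogue, and the relations $\gamma_\Lt +\gamma_\Rt =\gamma$, $\psi_\Lt +\psi_\Rt =\rho_\Lt +\rho_\Rt =0$.

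First I would prove the second identity of \eqref{eq:angle_relations_2}: substituting $\gamma_{\tau'}=\gamma -\gamma_\tau$, $\psi_{\tau'}=-\psi_\tau$, $\rho_{\tau'}=-\rho_\tau$ into the primed formula for $\angle AB_{\tau'}E_{\tau'}$ gives $\angle AB_{\tau'}E_{\tau'}=\pi +\gamma /2+\delta_\Lt +\delta_\Rt -\angle AB_\tau E_\tau$ at once. Because $G_{\tau'}$ lies on segment $AE_{\tau'}$, the ray $B_{\tau'}G_{\tau'}$ lies between $B_{\tau'}A$ and $B_{\tau'}E_{\tau'}$, so subtracting $\angle AB_{\tau'}G_{\tau'}=\beta_\tau$ (Lemma \ref{lem:ADG'}) yields the first identity of \eqref{eq:angle_relations_2}. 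The first identity of \eqref{eq:angle_relations_1} is then immediate: Construction \ref{const:negative_1}, $(3)$ gives $\angle AB_\tau E_\tau =\angle E_\tau B_\tau G_\tau +(\pi -\beta_\tau)$, and adding this to the expression just obtained for $\angle E_{\tau'}B_{\tau'}G_{\tau'}$ cancels both $\angle AB_\tau E_\tau$ and $\beta_\tau$, leaving $\gamma /2+\delta_\Lt +\delta_\Rt$.

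For the relations at $P_\tau$ and $P_{\tau'}$ I would work at the vertex $C$. The $m_\tau$-mirror turns Construction \ref{const:negative_1}, $(6)$ into $\angle E_\tau CP_\tau =\angle E_\tau B_\tau P_\tau =\delta_\tau$, and the $m_{\tau'}$-mirror reduces $\angle E_{\tau'}B_{\tau'}P_{\tau'}$ to $\angle E_{\tau'}CP_{\tau'}$. Using that the rays $\ell'_\Lt ,\ell'_\Rt$ issuing from $C$ are parallel to $m_\Lt ,m_\Rt$, are perpendicular to $CB_\Lt ,CB_\Rt$, and satisfy $\angle \ell'_\Lt C\ell'_\Rt =\gamma +\delta_\Lt +\delta_\Rt$, I can evaluate $\angle E_\tau CE_{\tau'}$ and then, via the collinearity of $P_\tau ,C,P_{\tau'}$ and the supplementarity $\angle E_{\tau'}CP_{\tau'}=\pi -\angle E_{\tau'}CP_\tau$, obtain the second identity of \eqref{eq:angle_relations_1}, namely $\angle E_{\tau'}B_{\tau'}P_{\tau'}=\gamma /2+\delta_{\tau'}$; this also forces $E_{\tau'}\neq P_{\tau'}$, since the angle is positive. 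For the chain angles I would read off $\angle E_\tau P_\tau P_{\tau'}=\angle E_\tau P_\tau C$ (same ray) in triangle $CE_\tau P_\tau$ as $\pi -\delta_\tau -\angle CE_\tau P_\tau$, simplify $\angle CE_\tau P_\tau =\angle B_\tau E_\tau P_\tau =\angle AB_\tau E_\tau -\delta_\tau$ by the $m_\tau$-mirror together with the parallel $\ell_\tau \parallel m_\tau$, and conclude $\angle E_\tau P_\tau P_{\tau'}=\pi -\angle AB_\tau E_\tau$; the companion computation on the primed side gives $\angle E_{\tau'}P_{\tau'}P_\tau =\angle AB_\tau E_\tau -(\gamma +\delta_\Lt +\delta_\Rt)$, and their sum is the stated $\pi -(\gamma +\delta_\Lt +\delta_\Rt)$. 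Finally, the third identity of \eqref{eq:angle_relations_2} follows by subtracting $\angle AB_{\tau'}P_{\tau'}=\angle AB_{\tau'}E_{\tau'}+\angle E_{\tau'}B_{\tau'}P_{\tau'}$ from the pleat-direction angle $\angle AB_{\tau'}\ell_{\tau'}=\pi +\delta_{\tau'}$ fixed by the rotation sense in Construction \ref{const:condition}, $(1)$.

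The hard part will be the orientation bookkeeping, not the arithmetic. The delicate points are the \emph{sense} of each outgoing pleat --- which forces $\angle AB_{\tau'}\ell_{\tau'}=\pi +\delta_{\tau'}$ rather than $\pi -\delta_{\tau'}$ --- and the evaluation of $\angle E_\tau CE_{\tau'}$ together with the several betweenness relations at $C$; a single misread configuration flips the sign of a $\delta$-term and destroys the consistency between \eqref{eq:angle_relations_1} and \eqref{eq:angle_relations_2}. I would verify all of these against Figure \ref{fig:CP_negative_1} in the generic range $\angle AB_\tau E_\tau \in[\pi -\beta_\tau +\delta_\tau ,\beta_{\tau'}+\gamma +\delta_\tau ]$, handling the boundary cases $G_\tau =E_\tau$ and $G_{\tau'}=E_{\tau'}$ of \eqref{ineq:range_ABE_2} separately.
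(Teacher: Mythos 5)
Your plan is correct: every identity you assert checks out, and the structural facts you rely on (the mirror across the line of $m_\sigma$ interchanging $B_\sigma$ and $C$, the collinearity of $P_\tau ,C,P_{\tau'}$, the betweenness from $G_\sigma\in AE_\sigma$, and the orientation convention $\angle AB_\sigma E_\sigma +\angle E_\sigma B_\sigma\ell_\sigma =\pi +\delta_\sigma$) are exactly the ones the paper also uses. However, your derivation runs in the opposite direction to the paper's. You take as primitive the formula $\angle AB_\sigma E_\sigma =\pi /2+\gamma_\sigma /2+\psi_\sigma /2+\rho_\sigma +\delta_\sigma$ applied on \emph{both} sides $\sigma =\tau ,\tau'$, so the reciprocity in the second line of $\eqref{eq:angle_relations_2}$ comes first, the first line of $\eqref{eq:angle_relations_2}$ follows from Lemma $\ref{lem:ADG'}$, and only then do you recover $\eqref{eq:angle_relations_1}$. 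The paper does the reverse: it first proves the first line of $\eqref{eq:angle_relations_1}$ by the synthetic chain $\angle E_\tau B_\tau G_\tau +\angle E_{\tau'}B_{\tau'}G_{\tau'}=\angle E_\tau DG_\tau +\angle E_{\tau'}DG_{\tau'}=\pi -\angle E_\tau DE_{\tau'}=\pi -\angle E_\tau CE_{\tau'}=\angle E_\tau CP_\tau +\angle E_{\tau'}CP_{\tau'}=\angle E_\tau B_\tau P_\tau +\angle E_{\tau'}B_{\tau'}P_{\tau'}$, which uses two reflections you never need, namely the one across $AE_\sigma$ sending $B_\sigma\mapsto D$ and the one across $E_\Lt E_\Rt$ (the perpendicular bisector of $CD$) sending $D\mapsto C$, together with the collinearity of $G_\tau ,D,G_{\tau'}$; it then evaluates the common value as $\gamma /2+\delta_\Lt +\delta_\Rt$ by the same trigonometric identities and deduces $\eqref{eq:angle_relations_2}$ last. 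The trade-off is modest but real: the paper's chain yields both equations of the first line of $\eqref{eq:angle_relations_1}$ simultaneously and explains geometrically why the $G$-angles and the $P$-angles have the same sum, whereas your route isolates the single analytic input cleanly and dispenses with the geometry of $D$ in the first half. One caution: your claim that $\angle E_\tau CE_{\tau'}$ can be evaluated from the rays $\ell'_\Lt ,\ell'_\Rt$ is under-specified as written, since the directions of $m_\Lt ,m_\Rt$ alone cannot determine an angle subtended at $C$ by points lying on those lines; you must either combine $\angle B_\Lt CB_\Rt =\pi -\gamma -\delta_\Lt -\delta_\Rt$ with the isosceles relations $\angle E_\sigma CB_\sigma =\angle E_\sigma B_\sigma C$ (which needs the value of $\angle AB_\sigma E_\sigma$ on both sides, available to you at that stage), or fall back on the paper's configuration-independent computation $\pi -\angle E_\tau CE_{\tau'}=\angle CE_\tau E_{\tau'}+\angle CE_{\tau'}E_\tau =\gamma /2+\delta_\Lt +\delta_\Rt$.
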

\begin{proof}
First we will prove $\eqref{eq:angle_relations_1}$.
Using
\begin{align*}
\angle EBG+\angle E'B'G'&=\angle EDG+\angle E'DG'\\
&=\pi -\angle EDE'=\pi -\angle ECE'\\
&=\angle ECP+\angle E'CP'=\angle EBP+\angle E'B'P'
\end{align*}
and
\begin{equation*}
\pi -\angle ECE'=\angle CEE' +\angle CE'E
=\left(\frac{\gamma}{2}+\delta -\frac{\psi}{2}\right) +\left(\frac{\gamma'}{2}+\delta'-\frac{\psi'}{2}\right) =\frac{\gamma_+}{2}+\delta_+
\end{equation*}
which follows from the calculation of $\eqref{eq:E'P'P}$, we obtain the first equation in the first line, and also the second equation by $\angle EBP=\delta$.
For the second line, the first equation is easy and the second follows from the calculation of $\eqref{eq:E'P'P}$, which leads to the third line.

Next we will prove $\eqref{eq:angle_relations_2}$.
Using the first line of $\eqref{eq:angle_relations_1}$, we calculate as
\begin{align*}
\angle E'B'G'&=\gamma_+/2+\delta_+-\angle EBG\\
&=\gamma_+/2+\delta_+-(\angle ABE -\angle ABG)=\pi -\beta +\gamma_+/2+\delta_+-\angle ABE,
\end{align*}
so that using Lemma $\ref{lem:ADG'}$ we have
\begin{align*}
\angle AB'E'&=\angle AB'G'+\angle E'B'G'=\pi +\gamma_+/2+\delta_+-\angle ABE.
\end{align*}
Thus using the above equations and the first line of $\eqref{eq:angle_relations_1}$, we have
\begin{align*}
\angle P'B'\ell'&=(\pi +\delta')-\angle AB'E'-\angle E'B'P'=\angle ABE-(\gamma_++\delta_+)
\end{align*}
as desired.
This completes the proof of Propositon $\ref{prop:angle_relations}$.
\end{proof}
\begin{proposition}\label{prop:existence_QR}
If $AB_\tau E_\tau\in [\pi -\beta_\tau ,\pi -\beta_\tau +\delta_\tau )$, then we have
\begin{equation*}
\angle E_\tau B_\tau P_\tau\geqslant\angle P_\tau B_\tau\ell_\tau -\angle k_\tau B_\tau\ell_\tau >0,
\end{equation*}
where equality holds for $\angle AB_\tau E_\tau =\pi -\beta_\tau$, or equivalently, $E_\tau =G_\tau$.
Thus there exist points $Q_\tau$ and $R_\tau$ in $(8)$ of Construction $\ref{const:negative_1}$.

If $\angle AB_\tau E_\tau\in (\beta_{\tau'}+\gamma +\delta_\tau ,\pi -\beta_\tau +\gamma /2+\delta_\Lt +\delta_\Rt ]$, then we have
\begin{equation*}
\angle E_{\tau'}B_{\tau'}P_{\tau'}>\angle k_{\tau'}B_{\tau'}\ell_{\tau'}-\angle P_{\tau'}B_{\tau'}\ell_{\tau'}>0.
\end{equation*}
Thus there exist points $Q_{\tau'}$ and $R_{\tau'}$ in $(9)$ of Construction $\ref{const:negative_1}$ such that $R_{\tau'}\neq E_{\tau'}$.
\end{proposition}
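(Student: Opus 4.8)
Both assertions are existence statements for the auxiliary points of steps $(8)$ and $(9)$, so the plan is to turn each into an angle comparison at $B_\sigma$ and then evaluate every angle involved. The point $Q_\sigma$ is, by definition, the point whose ray from $B_\sigma$ makes the prescribed angle $\beta_\sigma-\delta_\sigma$ with the pleat direction, and $R_\sigma$ is the point of segment $E_\sigma P_\sigma$ with $\angle R_\sigma B_\sigma E_\sigma=\angle Q_\sigma B_\sigma E_\sigma$. Hence $R_\sigma$ lands on that segment precisely when
\begin{equation*}
0\leqslant\angle Q_\sigma B_\sigma E_\sigma\leqslant\angle E_\sigma B_\sigma P_\sigma ,
\end{equation*}
with the lower inequality strict on the side of $\tau'$ (this is exactly the requirement $R_{\tau'}\neq E_{\tau'}$). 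The displayed chains are nothing but these comparisons once $\angle Q_\sigma B_\sigma E_\sigma$ has been re-expressed through $\angle P_\sigma B_\sigma\ell_\sigma$ and $\angle k_\sigma B_\sigma\ell_\sigma$. Thus the whole task reduces to computing the four angles at each $B_\sigma$ and checking the comparison against the stated range of $\angle AB_\tau E_\tau$.

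\textbf{The side of $\tau$.} Here I would read the relevant values straight off Construction $\ref{const:negative_1}$. Step $(6)$ gives $\angle E_\tau B_\tau P_\tau=\delta_\tau$, while step $(1)$ gives $\angle AB_\tau\ell_\tau=\pi-\delta_\tau$, so that $\angle E_\tau B_\tau\ell_\tau=\pi-\delta_\tau-\angle AB_\tau E_\tau$ and $\angle P_\tau B_\tau\ell_\tau=\angle E_\tau B_\tau\ell_\tau-\delta_\tau$. Since $Q_\tau$ is the point of $m_\tau$ with $\angle Q_\tau B_\tau\ell_\tau=\beta_\tau-\delta_\tau$, this yields $\angle Q_\tau B_\tau E_\tau=\angle AB_\tau E_\tau-(\pi-\beta_\tau)$. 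Feeding these into $0\leqslant\angle Q_\tau B_\tau E_\tau\leqslant\angle E_\tau B_\tau P_\tau$ collapses the comparison to $\pi-\beta_\tau\leqslant\angle AB_\tau E_\tau\leqslant\pi-\beta_\tau+\delta_\tau$, which is exactly the hypothesis of the first part, the lower equality being the case $E_\tau=G_\tau$. Rewriting the same two inequalities through $\angle P_\tau B_\tau\ell_\tau-\angle k_\tau B_\tau\ell_\tau$, using the base–crease angle $\angle k_\tau B_\tau\ell_\tau$ fixed by the net and the pleat, produces the displayed chain $\angle E_\tau B_\tau P_\tau\geqslant\angle P_\tau B_\tau\ell_\tau-\angle k_\tau B_\tau\ell_\tau>0$ and hence the existence of $Q_\tau,R_\tau$.

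\textbf{The side of $\tau'$.} This is the mirror computation, but now I would feed in the values already established in Proposition $\ref{prop:angle_relations}$, namely $\angle E_{\tau'}B_{\tau'}P_{\tau'}=\gamma/2+\delta_{\tau'}$, $\angle P_{\tau'}B_{\tau'}\ell_{\tau'}=\angle AB_\tau E_\tau-(\gamma+\delta_\Lt+\delta_\Rt)$ and $\angle AB_{\tau'}E_{\tau'}=\pi+\gamma/2+\delta_\Lt+\delta_\Rt-\angle AB_\tau E_\tau$, together with $\angle Q_{\tau'}B_{\tau'}\ell'_{\tau'}=\beta_{\tau'}-\delta_{\tau'}$. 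Running the same computation as for $\tau$ turns the chain $\angle E_{\tau'}B_{\tau'}P_{\tau'}>\angle k_{\tau'}B_{\tau'}\ell_{\tau'}-\angle P_{\tau'}B_{\tau'}\ell_{\tau'}>0$ into a pair of bounds on $\angle AB_\tau E_\tau$ whose strict lower member is the step-$(9)$ entry condition $\angle AB_\tau E_\tau>\beta_{\tau'}+\gamma+\delta_\tau$ and whose upper member is the global bound $\angle AB_\tau E_\tau\leqslant\pi-\beta_\tau+\gamma/2+\delta_\Lt+\delta_\Rt$ from $\eqref{ineq:range_ABE_2}$. This is precisely the hypothesis of the second part, and it forces $\angle Q_{\tau'}B_{\tau'}E_{\tau'}\in(0,\angle E_{\tau'}B_{\tau'}P_{\tau'}]$, so $R_{\tau'}$ lies strictly inside $E_{\tau'}P_{\tau'}$ and $Q_{\tau'}$ meets $\ell'_{\tau'}$.

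\textbf{Main obstacle.} I expect the delicate step to be the angular bookkeeping rather than the algebra: for each of $E_\sigma,P_\sigma,k_\sigma,Q_\sigma$ one must decide on which side of the pleat direction the ray from $B_\sigma$ lies, and hence whether a given difference of angles is the intended (nonnegative) angle or its negative. This is especially awkward on the side of $\tau'$, where $\angle AB_{\tau'}E_{\tau'}$ can be close to $\pi$ and where the swung supporting flap reverses orientation, so that the naive sign of $\angle P_{\tau'}B_{\tau'}\ell_{\tau'}-\angle k_{\tau'}B_{\tau'}\ell_{\tau'}$ is easy to get wrong. To avoid relying on the figure, I would fix the cyclic order of the rays around $B_{\tau'}$ from the fact that all the angles listed in Proposition $\ref{prop:angle_relations}$ are nonnegative under the standing conditions $(\mathrm{i})$–$(\mathrm{iii.c})$, and only then carry out the substitutions; once the arrangement is settled, the two chains follow by direct computation.
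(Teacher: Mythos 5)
Your overall strategy coincides with the paper's (reduce existence of $Q$ and $R$ to explicit angle comparisons at $B_\sigma$ and test them against the stated range of $\angle AB_\tau E_\tau$), but the angle bookkeeping --- exactly the point you flag as the main obstacle --- goes wrong on the side of $\tau$, and the error is fatal to the displayed chain. The pleat $\ell_\tau$ is obtained by rotating the direction $\ora{AB_\tau}$ by $\delta_\tau$ \emph{away} from the side on which $E_\tau$ and $P_\tau$ lie; the correct relations, which the paper's own flat-foldability check around $B_\tau$ encodes via $\angle AB_\tau E_\tau+\angle P_\tau B_\tau\ell_\tau=\pi$, are $\angle E_\tau B_\tau\ell_\tau=\pi+\delta_\tau-\angle AB_\tau E_\tau$ and $\angle P_\tau B_\tau\ell_\tau=\pi-\angle AB_\tau E_\tau$, not your $\pi-\delta_\tau-\angle AB_\tau E_\tau$ and $\pi-2\delta_\tau-\angle AB_\tau E_\tau$. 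With your values the middle term of the first chain would be $\angle P_\tau B_\tau\ell_\tau-\angle k_\tau B_\tau\ell_\tau=\pi-\beta_\tau-\delta_\tau-\angle AB_\tau E_\tau$, which is \emph{negative} on all of $[\pi-\beta_\tau,\pi-\beta_\tau+\delta_\tau)$, so the inequality you must prove is false in your configuration; with the correct values it equals $\pi-\beta_\tau+\delta_\tau-\angle AB_\tau E_\tau>0$, which is the paper's computation. Your reduction to the correct interval survives only through a compensating sign flip (an implicit absolute value) in your formula $\angle Q_\tau B_\tau E_\tau=\angle AB_\tau E_\tau-(\pi-\beta_\tau)$, which does not follow from your own intermediate values. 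Note also that, read literally, your entry criterion is vacuous: $Q_\tau$ and $R_\tau$ both lie on ray $m_\tau$, so $\angle R_\tau B_\tau E_\tau=\angle Q_\tau B_\tau E_\tau$ would force $R_\tau=Q_\tau$. The relation that makes step $(8)$ and the chain meaningful (and is consistent with the parenthetical $\angle AB_\tau E_\tau=\pi-\beta_\tau\Rightarrow R_\tau=E_\tau$) is that $B_\tau R_\tau$ is the reflection of $B_\tau Q_\tau$ in $B_\tau P_\tau$, so the criterion is $0<\angle Q_\tau B_\tau P_\tau\leqslant\angle E_\tau B_\tau P_\tau$, whose middle quantity is precisely $\angle P_\tau B_\tau\ell_\tau-\angle k_\tau B_\tau\ell_\tau$.

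On the side of $\tau'$ there is a second gap: the chain is \emph{not} equivalent to the hypothesis range, contrary to your claim. With the values from Proposition $\ref{prop:angle_relations}$ that you cite, the outer inequality reads $\gamma/2+\delta_{\tau'}>\angle AB_\tau E_\tau-(\beta_{\tau'}+\gamma+\delta_\tau)$, i.e.\ $\angle AB_\tau E_\tau<\beta_{\tau'}+3\gamma/2+\delta_\Lt+\delta_\Rt$, and the hypothesis's upper bound $\pi-\beta_\tau+\gamma/2+\delta_\Lt+\delta_\Rt$ yields this only when combined with $\beta_\Lt+\beta_\Rt+\gamma>\pi$ (equivalently $\alpha<\pi$, from conditions $(\mathrm{i})$--$(\mathrm{ii})$ of Construction $\ref{const:condition}$); this extra ingredient is the crux of the paper's estimate and never appears in your argument. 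Finally, under the hypothesis the positive difference is $\angle P_{\tau'}B_{\tau'}\ell_{\tau'}-\angle k_{\tau'}B_{\tau'}\ell_{\tau'}=\angle AB_\tau E_\tau-(\beta_{\tau'}+\gamma+\delta_\tau)$, taken in this order; asserting positivity of the reversed difference and claiming it produces the \emph{lower} bound of the range is another sign slip (the proposition's displayed formula has the two terms transposed, as the paper's own proof shows), and carrying out the computation rather than asserting its outcome would have exposed it.
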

\begin{proof}
First suppose $ABE\in [\pi -\beta ,\pi -\beta +\delta )$.
Then we have
\begin{equation*}
\angle PB\ell -\angle kB\ell =(\pi +\delta -\angle ABE-\delta )-(\beta -\delta )=\pi -\beta +\delta -\angle ABE>0,
\end{equation*}
so that
\begin{equation*}
\angle EBP -(\angle kB\ell -\angle PB\ell )=\delta -(\pi -\beta+\delta -\angle ABE)=\angle ABE -(\pi -\beta )\geqslant 0,
\end{equation*}
where equality holds for $\angle ABE=\pi -\beta$.
This proves the first assertion.

Next suppose $\angle ABE\in (\beta'+\gamma_++\delta ,\pi -\beta +\gamma_+/2+\delta_+]$.
Then by Proposition $\ref{prop:angle_relations}$ we have
\begin{equation*}
\angle P'B'\ell' -\angle k'B'\ell'=(\angle ABE-\gamma_+-\delta_+)-(\beta'-\delta')=\angle ABE-(\beta'+\gamma_++\delta )>0,
\end{equation*}
so that by Proposition $\ref{prop:angle_relations}$ again we have
\begin{align*}
\angle E'B'P' -(\angle P'B'\ell' -\angle k'B'\ell')&=(\gamma_+/2+\delta')-(\angle ABE-\beta'-\gamma_+-\delta )\\
=\beta'+3\gamma_+/2+\delta_+-\angle ABE&\geqslant\beta'+3\gamma_+/2+\delta_+-(\pi -\beta +\gamma_+/2+\delta_+)\\
&=\beta +\beta'+\gamma_+-\pi >0
\end{align*}
as desired.
This proves the second assertion.
\end{proof}
\begin{remark}\label{rem:ABE_practical}\rm
Here we discuss a practical choice of $\angle ABE$, not to say the best.
Although we can construct a negative $3$D gadget under $\eqref{ineq:range_ABE_2}$,
we need additional creases if we have $\angle kB\ell <\angle PB\ell$ or $\angle k'B'\ell'<\angle P'B'\ell'$,
which are equivalent to $\angle ABE<\pi -\beta +\delta$ and $\angle ABE>\beta'+\gamma_++\delta$ respectively.
This is because if so the face bounded by polygonal chain $\ell_\sigma B_\sigma P_\sigma m_\sigma$ overlaps
with the side face for $\sigma$, and particularly with $k_\sigma$ for either $\sigma =\Lt$ or $\sigma =\Rt$.
Thus it is practical to assume
\begin{equation}\label{ineq:range_ABE_3}
\angle AB_\tau E_\tau\in\left[\pi -\beta_\tau +\delta_\tau ,
\min\left\{\pi -\beta_\tau +\frac{\gamma}{2}+\delta_\Lt +\delta_\Rt ,\beta_{\tau'}+\gamma +\delta_\tau\right\}\right]
\cap (\gamma +\delta_\Lt +\delta_\Rt ,\pi )
\end{equation}
instead of $\eqref{ineq:range_ABE_1}$.
The range determined by $\eqref{ineq:range_ABE_3}$ is still nonempty
because of $\beta_\tau +\beta_{\tau'}+\gamma =2\pi -\alpha >\pi$ and $\beta_\tau -\delta_\tau >0$
coming from conditions (i) and (iii.b) respectively of Construction $\ref{const:condition}$, as well as $\eqref{ineq:nonempty}$.

Also, so as not to make $\angle EPP'$ and $\angle E'P'P$ too small, we should take $\angle ABE$ well over $\gamma_++\delta_+$ and well below $\pi$
by the second line of $\eqref{eq:angle_relations_1}$ in Proposition $\ref{prop:angle_relations}$.
In particular, by the fourth line of $\eqref{eq:angle_relations_1}$ 
we have $\angle EPP'=\angle E'PP'$ if we can take $\angle ABE$ satisfying $\eqref{ineq:range_ABE_1}$ as
\begin{equation*}
\angle ABE=(\pi +\gamma_++\delta_+)/2.
\end{equation*}
Even if we can not, it is practical to take $\angle ABE$ as close to $(\pi +\gamma_++\delta_+)/2$ as possible.
\end{remark}

To end this subsection, we check the foldability of the resulting crease patterns.\\
$\bullet$ \emph{flat-foldability around $B$. }
This holds without the creases made with $k$, and thus we will forget them.
The creases starting from $B$ except for those made with $k$ are
\begin{equation*}
\begin{dcases}
AB\text{ and }\ell&\text{if }\delta =0,\\
AB ,BE ,BP\text{ and }\ell&\text{if }\delta >0,
\end{dcases}
\end{equation*}
The flat-foldability in the first case is trivial because $AB$ and $\ell$ form a straight line,
and that in the second case holds because we have
\begin{equation*}
\angle EBP+\angle AB\ell =\delta +(\pi -\delta )=\pi =\angle ABE+\angle PB\ell .
\end{equation*}
$\bullet$ \emph{flat-foldability around $G$ and $P$. }
This is obvious except for $G=E$ or $P=E$, which is included in the flat-foldability around $E$ below.\\
$\bullet$ \emph{flat-foldability around $E$. }
As with $B$, the creases from $E$ except for those made with $k$ are
\begin{equation*}
\begin{dcases}
AE ,EE', EP'\text{ and }m&\text{if }\delta =0,\\
AE ,BE ,EE'\text{ and }m&\text{if }\delta >0.
\end{dcases}
\end{equation*}
Then we can check the flat-foldability in the respective cases as
\begin{align*}
&\angle AEE'-\angle E'EP'+\angle P'Em-\angle AEm\\
&=(\angle AED+\angle DEE')-\angle CEE'+\angle CEm-(\angle AEB+\angle BEm)=0,\\
&\angle AEE'-\angle E'Em+\angle BEm-\angle AE B\\
&=(\angle AED+\angle DEE')-(\angle CEE'+\angle CEm)+\angle BEm-\angle AEB=0
\end{align*}
using 
\begin{equation*}
\angle AEB=\angle AED,\quad\angle CEE'=\angle DEE',\quad\angle BEm=\angle CEm.
\end{equation*}
$\bullet$ \emph{flat-foldability around $P'$. }This is obvious.\\
$\bullet$ \emph{flat-foldability around $E'(\neq G')$. }We have
\begin{align*}
&\angle AE'B'-\angle B'E'P'+\angle EE'P'-\angle AE'E\\
&=\angle AE'B'-\angle B'E'P'+(\angle CE'P'+\angle CE'E)-(\angle DE'E+\angle AE'D)=0
\end{align*}
as desired.\\
$\bullet$ \emph{foldability around $B'$. }We check that $\angle k'B'G'=\pi -\alpha$.
Let us forget the creases made with $k'$.
Then using Proposition $\ref{prop:angle_relations}$, we calculate as
\begin{equation}\label{eq:pi-alpha}
\begin{aligned}
\angle k'B'G'&=\angle k'B'\ell'-\angle P'B'\ell'+\angle E'B'P'-\angle E'B'G'\\
&=(\beta'-\delta')-(\pi +\delta'-\angle AB'E'-\angle E'B'P')+\angle E'B'P'-(\angle AB'E'-\beta )\\
&=\beta +\beta' +\gamma_+-\pi =\pi -\alpha
\end{aligned}
\end{equation}
as desired.
\begin{remark}\rm
We can include the case $\alpha =\beta_\Lt +\beta_\Rt$ in Construction $\ref{const:negative_1}$, where the resulting extrusion is flat.
The assignment of mountain and valley folds to the creases on the side of $\tau'$ for $E_{\tau'}=G_{\tau'}$
is given in Table $\ref{tbl:assignment_negative_1_flat}$,
while the assignment for $E_{\tau'}\neq G_{\tau'}$ is the same as Tables $\ref{tbl:assignment_negative_1}$.

To verify this, suppose $\alpha =\beta +\beta'$.
Note that we have
\begin{equation*}
\angle AB'G'=\beta ,\quad\angle AB'k'=\pi -\beta',
\end{equation*}
so that
\begin{equation}\label{eq:alpha-pi}
\angle AB'G'-\angle AB'k'=\alpha -\pi .
\end{equation}
Thus if $G'\neq E'$, then summing up $\eqref{eq:pi-alpha}$ and $\eqref{eq:alpha-pi}$ gives that
\begin{equation*}
\angle k'B'\ell'-\angle P'B'\ell'+\angle E'B'P'-\angle E'B'G' +\angle AB'G' -\angle AB'k'=0,
\end{equation*}
which implies the flat-foldability around $B'$.
If $G' =E'$, where $\angle ABE=\pi -\beta +\gamma_+/2+\delta_+$, then we have
\begin{align*}
\angle AB'P'&=\angle AB'G'+\angle E'B'P'=\beta +\gamma_+/2+\delta',\quad k'B'\ell' =\beta'-\delta',\\
\angle P'B'\ell'&=\angle ABE-(\gamma_++\delta_+)=\pi -\beta -\gamma_+/2,\quad\angle AB'k'=\pi -\beta',
\end{align*}
where we used Proposition $\ref{prop:angle_relations}$, so that
\begin{equation*}
\angle AB'P'+\angle k'B'\ell' =\beta+\beta'+\gamma_+/2 =\pi =2\pi -(\beta +\beta'+\gamma_+/2)=\angle AB'k'+\angle P'B'\ell',
\end{equation*}
which implies the flat-foldability around $B'$.
Also, we have
\begin{align*}
\angle AE'P'&-\angle EE'P'+\angle EE'G-\angle AE'G\\
&=(\angle AG'B'+\angle B'E'P')-(\angle CE'P'+\angle CE'E)+\angle DE'E-\angle AG'D\\
&=(\angle AG'B'-\angle AG'D)+(\angle B'E'P'-\angle CE'P')-(\angle CE'E-\angle DE'E)=0,
\end{align*}
which implies the flat-foldability around $E'$.
\end{remark}
\begin{table}[h]
\begin{tabular}{c|c|c}
&\multicolumn{2}{c}{\bettershortstack{creases on the side of $\tau'$ for\\
$\alpha =\beta_\Lt +\beta_\Rt$ and $E_{\tau'}=G_{\tau'};$\\
\hfill$M_1=\beta_{\tau'}+\gamma +\delta_\tau$}}\\ 
\cline{2-3}
&$\angle AB_\tau E_\tau\leqslant M_1$&$\angle AB_\tau E_\tau >M_1$\\
\hline
\multirow{2}{*}{mountain folds}&\multicolumn{2}{c}{$B_{\tau'}P_{\tau'}$}\\
\cline{2-3}
&---&$B_{\tau'}Q_{\tau'},CR_{\tau'}$\\
\hline
valley folds&---&$B_{\tau'}R_{\tau'},CQ_{\tau'}$
\end{tabular}\vspace{0.5cm}
\caption{Assignment of mountain and valley folds to the creases of a negative $3$D gadget by our first construction
on the side of $\tau'$ for $\alpha =\beta_\Lt +\beta_\Rt$ and $E_{\tau'}=G_{\tau'}$}
\label{tbl:assignment_negative_1_flat}
\end{table}

\subsection{Second construction}\label{subsec:negative_new_2}
This is an extension to $\delta_\sigma\geqslant 0$ of our second construction of negative $3$D gadgets given in \cite{Doi19}, Section $9$.
Suppose either $\beta_\Lt ,\beta_\Rt\leqslant\pi /2$ or $\beta_\Lt ,\beta_\Rt\geqslant\pi /2$ holds. 
Here we consider a development as in Figure $\ref{fig:development_negative_2}$ with prescribed simple outgoing pleats
$(\ell_\Lt ,m_\Lt )$ and $(\ell_\Rt ,m_\Rt )$ instead of Figure $\ref{fig:development_1}$.
Then we consider the crease pattern as in Figure $\ref{fig:CP_negative_2_initial}$,
where we take $G_\Lt ,G_\Rt ,P'_\Lt ,P'_\Rt$ so that segments $G_\Lt G_\Rt$ and $P'_\Lt P'_\Rt$ are parallel to segment $E_\Lt E_\Rt$.
Since $\triangle AG_\Lt G_\Rt$ does not face the correct direction,
this is \emph{not} the correct crease pattern, and thus $\triangle AG_\Lt G_\Rt$ has to be rotated with a rotation of crease $P'_\Lt P'_\Rt$.
\begin{figure}[htbp]
\addtocounter{theorem}{1}
\centering\includegraphics[width=0.75\hsize]{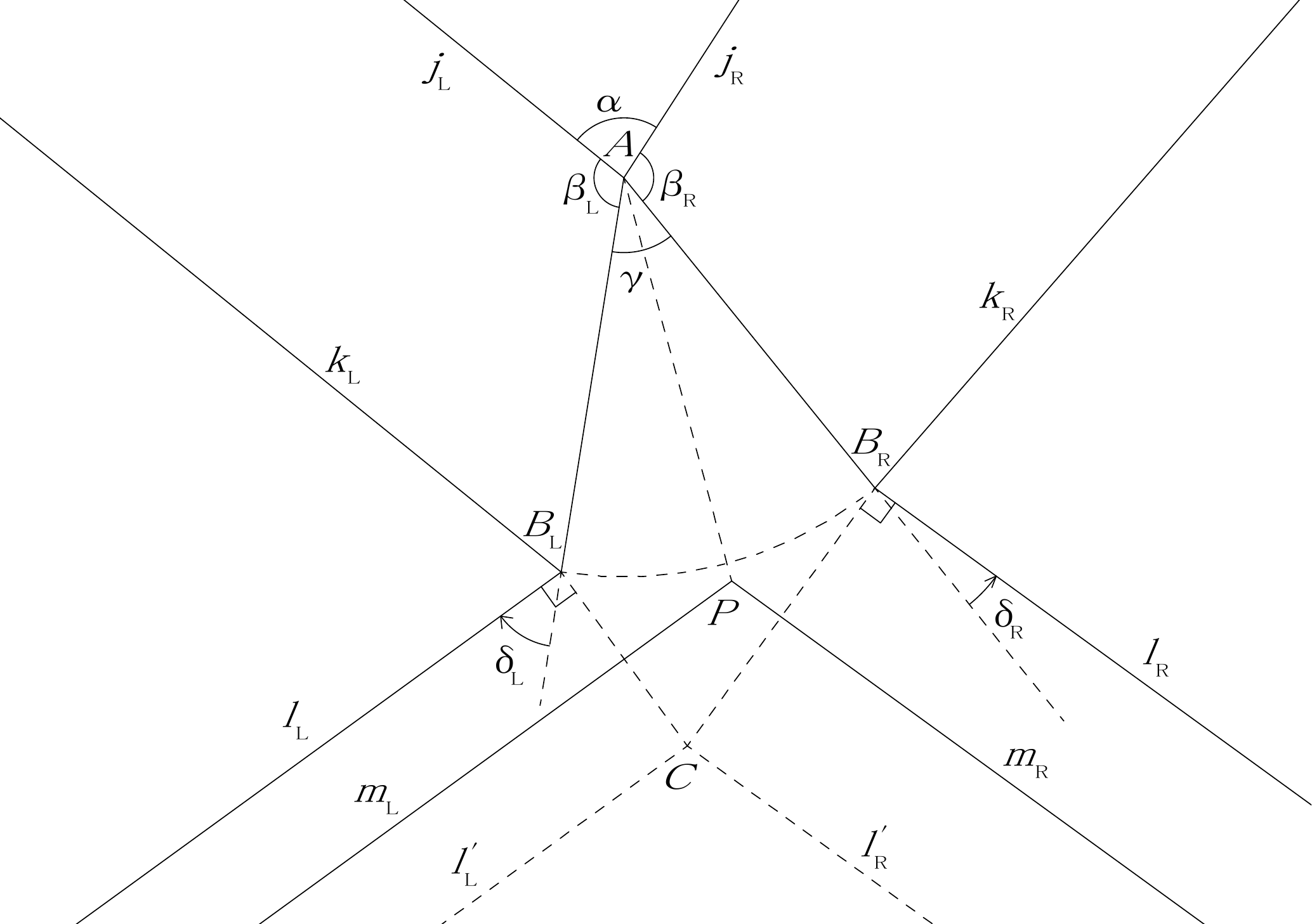}
    \caption{Development of a negative $3$D gadget to which we apply our second construction}
    \label{fig:development_negative_2}
\end{figure}
\begin{figure}[htbp]
\addtocounter{theorem}{1}
\centering\includegraphics[width=0.75\hsize]{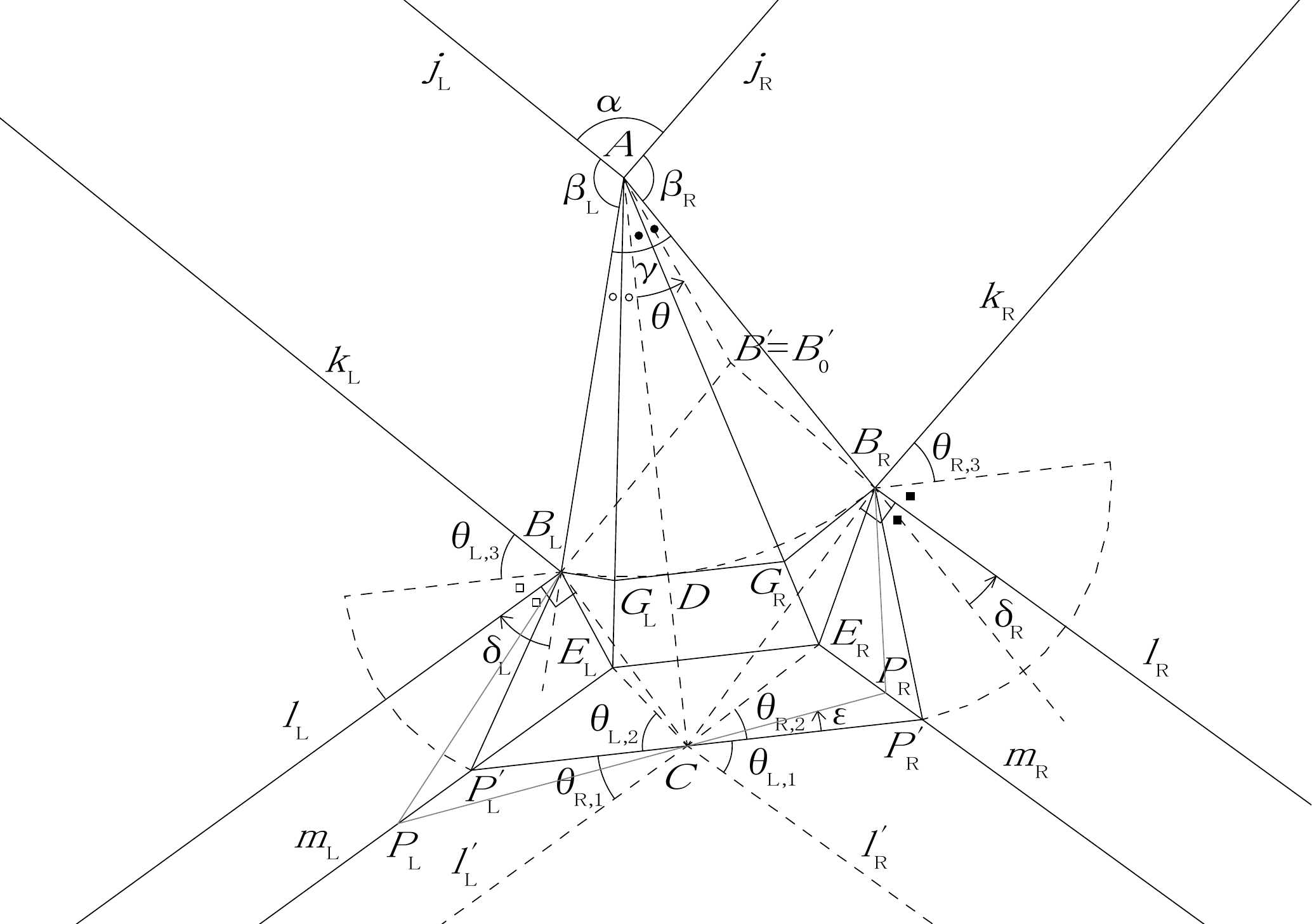}
    \caption{CP of a negative $3$D gadget where crease $P'_\Lt P'_\Rt$ is to be rotated}
    \label{fig:CP_negative_2_initial}
\end{figure}
Let $B'$ be the intersection point of a perpendicular through $B_\Lt$ to $k_\Lt$ and a perpendicular through $B_\Rt$ to $k_\Rt$ in the crease pattern.
Also, let $\theta$ be the angle such that if $AC$ rotates counterclockwise around $A$ by $\theta\in (-\pi/2 ,\pi /2]$,
then $AC$ overlaps with an extension of $AB'$.
To be more precise, we define $B'_0$ as $B'$ if $\beta_\Lt ,\beta_\Rt\geqslant\pi$
and the antipodal point of $B'$ in the circle with center $A$ and radius $\norm{AB'}$ if $\beta_\Lt ,\beta_\Rt\leqslant\pi$.
We see easily that $B'_0$ lies strictly inside $\angle B_\Lt AB_\Rt$ if $\beta_\Lt ,\beta_\Rt >\pi /2$ or $\beta_\Lt ,\beta_\Rt <\pi /2$,
and in $AB_\sigma$ if $\beta_\sigma =\pi /2$ and $\beta_{\sigma'}\neq\pi /2$, while $B'=B'_0=A$ if $\beta_\Lt =\beta_\Rt =\pi /2$.
Then $\theta$ is defined for $B'_0\neq A$ by
\begin{equation*}
\theta =\angle B_\Lt AB'_0-\angle B_\Lt AC
\end{equation*}
For $B'_0\neq A$, we calculate $\theta$ as
\begin{equation}\label{eq:theta}
\theta =\tan^{-1}\left\{\frac{\tan\beta_\Rt -\tan\beta_\Lt}{2+(\tan\beta_\Lt +\tan\beta_\Rt )/\tan (\gamma /2)}\right\}
-\tan^{-1}\left\{\frac{\tan\delta_\Rt -\tan\delta_\Lt}{2+(\tan\delta_\Lt +\tan\delta_\Rt )/\tan (\gamma /2)}\right\} ,
\end{equation}
and we particularly have
\begin{equation*}
\theta=\begin{dcases}
-\gamma_\Lt&\text{if }\beta_\Lt =\pi /2\text{ and }\beta_\Rt\neq\pi /2,\\
\phantom{-}\gamma_\Rt&\text{if }\beta_\Rt =\pi /2\text{ and }\beta_\Lt\neq\pi /2.
\end{dcases}
\end{equation*}
Let us define $\proj_{A,0}$ to be the projection of the resulting extrusion to the crease pattern in the bottom plane $\{ z=0\}$ with $\proj_{A,0}(A)=A$.
Then we have $\proj_{A,0}(B)=B'$.
Since $G_\Lt G_\Rt$ in the bottom plane $\{ z=0\}$ must be perpendicular to $AB$ in the resulting extrusion, and also perpendicular to $AB'$ in the crease pattern,
we have to rotate $G_\Lt G_\Rt$ counterclockwise around $D$ by $\theta$ with a rotation of crease $P'_\Lt P'_\Rt$ in the crease pattern.

Let us observe how a rotation of crease $P'_\Lt P'_\Rt$ in the crease pattern affects the resulting extrusion.
As in Figure $\ref{fig:CP_negative_2_initial}$, suppose we use instead of $P'_\Lt P'_\Rt$
a new crease $P_\Lt P_\Rt$ made with a counterclockwise rotation of $P'_\Lt P'_\Rt$ by $\epsilon$,
and accordingly $B_\sigma P_\sigma$ instead of $B_\sigma P'_\sigma$ for $\sigma =\Lt ,\Rt$.
Then in the resulting extrusion we have
\begin{equation}\label{eq:kDG_L}
\begin{aligned}
\angle k_\Lt DG_\Lt =\angle k_\Lt BG_\Lt&=\angle k_\Lt B_\Lt \ell_\Lt -\angle P_\Lt B_\Lt\ell_\Lt +\angle E_\Lt B_\Lt P_\Lt -\angle E_\Lt B_\Lt G_\Lt\\
&=\angle k_\Lt B_\Lt \ell_\Lt -(\angle P'_\Lt B_\Lt\ell_\Lt -\epsilon )+(\angle E_\Lt B_\Lt P'_\Lt +\epsilon )-\angle E_\Lt B_\Lt G_\Lt\\
&=(\beta_\Lt -\delta_\Lt )-(\pi +\delta_\Lt -\angle AB_\Lt E_\Lt -\angle E_\Lt B_\Lt P'_\Lt )\\
&\phantom{{}=(\beta_\Lt -\delta_\Lt )}+\angle E_\Lt B_\Lt P'_\Lt -(\angle AB_\Lt E_\Lt -\pi /2)+2\epsilon\\
&=\beta_\Lt -2\delta_\Lt -\pi /2+2\angle E_\Lt B_\Lt P'_\Lt +2\epsilon\\
&=\beta_\Lt +\gamma_\Lt -\pi /2+2\epsilon ,
\end{aligned}
\end{equation}
and similarly
\begin{equation}\label{eq:kDG_R}
\angle k_\Rt DG_\Rt =\beta_\Rt +\gamma_\Rt -\pi /2-2\epsilon ,
\end{equation}
where we used 
\begin{equation*}
\angle E_\sigma B_\sigma P'_\sigma =\angle E_\sigma CP'_\sigma =\angle CE_\sigma E_{\sigma'}=\gamma_\sigma /2+\delta_\sigma\quad\text{for }\sigma =\Lt ,\Rt
\end{equation*}
by \cite{Doi20}, Lemma $4.4$ for $\psi_\sigma =\rho_\sigma =0$.
On the other hand, to make $G_\Lt G_\Rt$ orthogonal $AB$ in the resulting extrusion we need
\begin{equation}\label{eq:kDG_orthogonal}
\begin{aligned}
\angle k_\Lt DG_\Lt&=\angle j_\Lt AB'_0-\pi /2=\beta_\Lt +\gamma_\Lt -\pi /2+\theta ,\\
\angle k_\Lt DG_\Rt&=\angle j_\Rt AB'_0-\pi /2=\beta_\Rt +\gamma_\Rt -\pi /2-\theta .
\end{aligned}
\end{equation}
Hence comparing $\eqref{eq:kDG_L}$ and $\eqref{eq:kDG_R}$ with $\eqref{eq:kDG_orthogonal}$, we conclude that $\epsilon =\theta /2$.

However, we have some constraints on the range of $P_\sigma$, and thus on that of $\epsilon$.

Firstly, let $\theta_{\sigma ,1}$ and $\theta_{\sigma ,2}$ for $\sigma =\Lt ,\Rt$ be the angles shown in Figure $\ref{fig:CP_negative_2_initial}$,
which are calculated as
\begin{equation*}\label{eq:theta_sigma_12}
\begin{aligned}
\theta_{\sigma ,1}&=\angle P'_{\sigma'}C\ell'_{\sigma'}=\angle P'_{\sigma'}B_{\sigma'}\ell_{\sigma'}
=\pi +\delta_{\sigma'}-\angle AB_{\sigma'}E_{\sigma'}-\angle E_{\sigma'}B_{\sigma'}P'_{\sigma'}=\frac{\pi}{2}-\gamma_{\sigma'}-\delta_{\sigma'},\\
\theta_{\sigma ,2}&=\angle E_\sigma CP'_\sigma =\angle CE_\sigma E_{\sigma'}=\frac{\gamma_\sigma}{2}+\delta_\sigma ,
\end{aligned}
\end{equation*}
where we used \cite{Doi20}, Lemma $4.4$ for $\psi_\sigma =\rho_\sigma =0$.
Then $P_\Lt\in m_\Lt$ and $P_\Rt\in m_\Rt$ imply
\begin{equation*}
-\theta_{\Lt ,2}\leqslant\epsilon <\theta_{\Rt ,1}\quad\text{and}\quad-\theta_{\Lt ,1}<\epsilon\leqslant\theta_{\Rt ,2}
\end{equation*}
respectively, which we arrange into
\begin{align}
-\theta_{\Lt ,1}&<\epsilon <\theta_{\Rt ,1}\quad\text{and}\label{ineq:range_epsilon_1}\\
-\theta_{\Lt ,2}&\leqslant\epsilon\leqslant\theta_{\Rt ,2}.\label{ineq:range_epsilon_2}
\end{align}

Secondly, we have $\angle k_\sigma DG_\sigma\geqslant 0$ for $\sigma =\Lt ,\Rt$.
It follows from $\eqref{eq:kDG_L}$ and $\eqref{eq:kDG_R}$ that
\begin{equation}\label{ineq:range_epsilon_3_1}
\pi /2-\beta_\Lt -\gamma_\Lt\leqslant 2\epsilon\leqslant\beta_\Rt +\gamma_\Rt -\pi /2.
\end{equation}
Thus letting $\theta_{\sigma ,3}$ be the angles shown in Figure $\ref{fig:CP_negative_2_initial}$, we have
\begin{equation*}
\theta_{\sigma ,3}=\angle k_\sigma B_\sigma\ell_\sigma -\angle P'_\sigma B_\sigma\ell_\sigma
=\beta_\sigma -\delta_\sigma -\theta_{\sigma',2}=\beta_\sigma +\gamma_\sigma -\frac{\pi}{2},
\end{equation*}
so that $\eqref{ineq:range_epsilon_3_1}$ becomes
\begin{equation}\label{ineq:range_epsilon_3_2}
-\theta_{\Lt ,3}\leqslant 2\epsilon\leqslant\theta_{\Rt ,3}.
\end{equation}

\begin{lemma}\label{lem:conditions_theta_23}
Suppose $\beta_\Lt ,\beta_\Rt\geqslant\pi /2$ or $\beta_\Lt ,\beta_\Rt\leqslant\pi /2$, and either $\beta_\Lt\neq\pi /2$ or $\beta_\Rt\neq\pi /2$,
so that $\theta$ is well-defined.
For $\epsilon =\theta /2$, conditions $\eqref{ineq:range_epsilon_2}$ and $\eqref{ineq:range_epsilon_3_2}$ hold automatically.
\end{lemma}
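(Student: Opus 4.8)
The plan is to reduce the lemma to two inequalities for $\theta$ and then to locate $B'_0$ precisely. Setting $\epsilon =\theta /2$ and recalling $\theta_{\sigma ,2}=\gamma_\sigma /2+\delta_\sigma$ and $\theta_{\sigma ,3}=\beta_\sigma +\gamma_\sigma -\pi /2$, condition $\eqref{ineq:range_epsilon_2}$ becomes $-(\gamma_\Lt +2\delta_\Lt )\leqslant\theta\leqslant\gamma_\Rt +2\delta_\Rt$, and condition $\eqref{ineq:range_epsilon_3_2}$ becomes $\pi /2-\beta_\Lt -\gamma_\Lt\leqslant\theta\leqslant\beta_\Rt +\gamma_\Rt -\pi /2$. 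The basic tool is the defining relation $\theta =\angle B_\Lt AB'_0-\gamma_\Lt$ together with $\angle B_\sigma AC=\gamma_\sigma$ and the already recorded fact that $B'_0$ lies in the closed wedge $\angle B_\Lt AB_\Rt$, so that $\angle B_\Lt AB'_0+\angle B_\Rt AB'_0=\gamma$ with both summands nonnegative; hence $\theta\in [-\gamma_\Lt ,\gamma_\Rt ]$.

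Given $\theta\in [-\gamma_\Lt ,\gamma_\Rt ]$ and $\delta_\sigma\geqslant 0$, condition $\eqref{ineq:range_epsilon_2}$ is immediate. For $\eqref{ineq:range_epsilon_3_2}$, substituting $\theta =\angle B_\Lt AB'_0-\gamma_\Lt$ and using $\angle B_\Rt AB'_0=\gamma -\angle B_\Lt AB'_0$ shows that it is equivalent to $\angle B_\sigma AB'_0\geqslant\pi /2-\beta_\sigma$ for both $\sigma =\Lt ,\Rt$. When $\beta_\Lt ,\beta_\Rt\geqslant\pi /2$ this holds at once, since then $\pi /2-\beta_\sigma\leqslant 0\leqslant\angle B_\sigma AB'_0$, so here the crude wedge bound already does the job.

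The remaining and main case is $\beta_\Lt ,\beta_\Rt\leqslant\pi /2$, where $\angle B_\sigma AB'_0\geqslant 0$ no longer suffices and I must pin down $\angle B_\Lt AB'_0$ itself. Since $B'_0$ is determined by the two side faces alone, independently of the pleats (hence of $\delta_\sigma$), I can read its position off $\eqref{eq:theta}$ with $\delta_\Lt =\delta_\Rt =0$, where $\gamma_\Lt =\gamma /2$ and $\theta$ equals its first arctangent term $\Theta_\beta$; thus $\angle B_\Lt AB'_0=\gamma /2+\Theta_\beta$ with $\Theta_\beta =\tan^{-1}\{(\tan\beta_\Rt -\tan\beta_\Lt )/(2+(\tan\beta_\Lt +\tan\beta_\Rt )/\tan (\gamma /2))\}$. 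The desired inequality $\gamma /2+\Theta_\beta\geqslant\pi /2-\beta_\Lt$ is equivalent, after taking tangents (all angles involved lie in $(-\pi /2,\pi /2)$, where $\tan$ is increasing), to $\tan\Theta_\beta\geqslant\cot (\gamma /2+\beta_\Lt )$. Clearing the denominators, which are positive because $\gamma\in (0,\pi )$ and $\beta_\Lt ,\beta_\Rt\in (0,\pi /2]$, the difference of the two sides factors — using $1-\tan\beta_\Lt\tan\beta_\Rt =\cos (\beta_\Lt +\beta_\Rt )/(\cos\beta_\Lt\cos\beta_\Rt )$, the addition formula for $\tan (\beta_\Lt +\beta_\Rt )+\tan\gamma$, and $\beta_\Lt +\beta_\Rt +\gamma =2\pi -\alpha$ — into
\begin{equation*}
\frac{\sin\alpha}{\cos\beta_\Lt\cos\beta_\Rt}\sec^2 (\gamma /2),
\end{equation*}
which is strictly positive since $0<\alpha <\pi$ (condition (i) together with $\beta_\Lt ,\beta_\Rt\leqslant\pi /2$). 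The companion inequality $\angle B_\Rt AB'_0\geqslant\pi /2-\beta_\Rt$ follows by the left--right symmetry, and the boundary values $\beta_\sigma =\pi /2$ are covered by the special position $B'_0\in AB_\sigma$ noted before the lemma. I expect this factorization — collapsing a messy rational trigonometric inequality into the single clean quantity above — to be the crux; once it is in hand, positivity is immediate from the standing hypotheses on $\alpha$ and $\gamma$.
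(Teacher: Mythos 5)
Your proof is correct, and it reaches the crucial inequality by a genuinely different route from the paper's. The reduction is identical in both: with $\epsilon =\theta /2$, condition $\eqref{ineq:range_epsilon_2}$ follows from the wedge bound $-\gamma_\Lt\leqslant\theta\leqslant\gamma_\Rt$ together with $\delta_\sigma\geqslant 0$, and $\eqref{ineq:range_epsilon_3_2}$ amounts to $\angle B_\sigma AB'_0\geqslant\pi /2-\beta_\sigma$ for both $\sigma$, which the wedge bound already gives when $\beta_\Lt ,\beta_\Rt\geqslant\pi /2$. The divergence is in the essential case $\beta_\Lt ,\beta_\Rt\leqslant\pi /2$. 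The paper stays geometric: it introduces the rays $p_\sigma$ parallel to $B'B_\sigma$ (Figure $\ref{fig:construction_p}$), decomposes $\angle B_\Lt AB'_0=(\pi /2-\beta_\Lt )+\angle B'_0Ap_\Lt$ and $\angle B_\Lt AB'_0=\beta_\Rt +\gamma -\pi /2-\angle B'_0Ap_\Rt$ as in $\eqref{ineq:B'Ap_L}$--$\eqref{ineq:B'Ap_R}$, and gets the sign $\angle B'_0Ap_\sigma\geqslant 0$ from the projection formula $\eqref{eq:tan_B'Ap}$. You instead observe that $B'_0$, hence $\angle B_\Lt AB'_0$, is independent of $\delta_\Lt ,\delta_\Rt$, read it off $\eqref{eq:theta}$ at $\delta_\Lt =\delta_\Rt =0$ as $\gamma /2+\Theta_\beta$, and verify $\gamma /2+\Theta_\beta\geqslant\pi /2-\beta_\Lt$ by a planar tangent computation; I rechecked your factorization, and it is right: with $g=\tan (\gamma /2)$ and $t_\sigma =\tan\beta_\sigma$, the cleared numerator is $2g(t_\Lt t_\Rt -1)+(g^2-1)(t_\Lt +t_\Rt )=-\sin (\beta_\Lt +\beta_\Rt +\gamma )\sec^2(\gamma /2)/(\cos\beta_\Lt\cos\beta_\Rt )=\sin\alpha\sec^2(\gamma /2)/(\cos\beta_\Lt\cos\beta_\Rt )>0$, using $\beta_\Lt +\beta_\Rt +\gamma =2\pi -\alpha$. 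As for what each approach buys: the paper's argument is shorter and its rays $p_\sigma$ are reused immediately afterwards (Lemma $\ref{lem:pi-beta_B'Ap}$, Proposition $\ref{prop:kBl>PBl}$), but it rests on $\eqref{eq:tan_B'Ap}$, which encodes the three-dimensional position of the projected ridge and is not derived in this paper; yours is self-contained given only the planar formula $\eqref{eq:theta}$ displayed in the setup, and it isolates the true source of the inequality, namely the strict positivity of $\sin\alpha$ --- the same quantity that sits, only implicitly, in the numerator of $\eqref{eq:tan_B'Ap}$. The one loose end is your boundary case: when $\beta_\Lt =\pi /2$ the position $B'_0\in AB_\Lt$ gives equality in the left inequality, but the right one still needs the one-line check $\gamma -(\pi /2-\beta_\Rt )=\pi -\alpha >0$ (or a continuity argument as $\beta_\Lt\to\pi /2$); this is immediate under the standing hypotheses, so it is a cosmetic omission rather than a gap.
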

\begin{proof}
Condition $\eqref{ineq:range_epsilon_2}$, and condition $\eqref{ineq:range_epsilon_3_2}$ for $\beta_\Lt ,\beta_\Rt\geqslant\pi /2$
are obvious because we have $-\gamma_\Lt\leqslant\theta\leqslant\gamma_\Rt$.

Suppose $\beta_\Lt ,\beta_\Rt\leqslant\pi /2$ in condition $\eqref{ineq:range_epsilon_3_2}$.
For $\sigma =\Lt ,\Rt$, let $p_\sigma$ be a ray starting from $A$ which is parallel to $B'B_\sigma$ and going to the side of $B_\sigma$
as shown in Figure $\ref{fig:construction_p}$.
Then it follows from
\begin{equation}\label{ineq:B'Ap_L}
\gamma_\Lt +\theta =\angle B_\Lt AB'_0=\angle B_\Lt Ap_\Lt +\angle B'_0Ap_\Lt =\pi /2-\beta_\Lt +\angle B'_0Ap_\Lt\geqslant\pi /2-\beta_\Lt ,
\end{equation}
we have the left inequality of $\eqref{ineq:range_epsilon_3_2}$.
Also, it follows from
\begin{equation}\label{ineq:B'Ap_R}
\gamma_\Lt +\theta =\angle B_\Lt AB_\Rt -\angle B'_0 Ap_\Rt -\angle B_\Rt Ap_\Rt =\beta_\Rt +\gamma -\pi /2-\angle B'_0Ap_\Rt
\leqslant\beta_\Rt +\gamma -\pi /2,
\end{equation}
we have the right inequality of $\eqref{ineq:range_epsilon_3_2}$.
Equality in the left (resp. the right) inequality of $\eqref{ineq:range_epsilon_3_2}$ holds if $\angle B'_0Ap_\Lt =0$ (resp. $\angle B'_0Ap_\Rt =0$),
or equivalently $\beta_\Lt =\pi /2$ and $\beta_\Rt <\pi /2$ (resp. $\beta_\Rt =\pi /2$ and $\beta_\Lt <\pi /2$),
because $\angle B'_0Ap_\sigma\in [0,(\pi -\alpha )/2]$ can be determined by
\begin{equation}\label{eq:tan_B'Ap}
\tan\angle B'_0Ap_\sigma =\frac{\sin\alpha\cos\beta_\sigma}{\cos\beta_{\sigma'}-\cos\alpha\cos\beta_\sigma}.
\end{equation}
This completes the proof of Lemma $\ref{lem:conditions_theta_23}$.
\end{proof}
\begin{figure}[htbp]
\addtocounter{theorem}{1}
\centering\includegraphics[width=0.75\hsize]{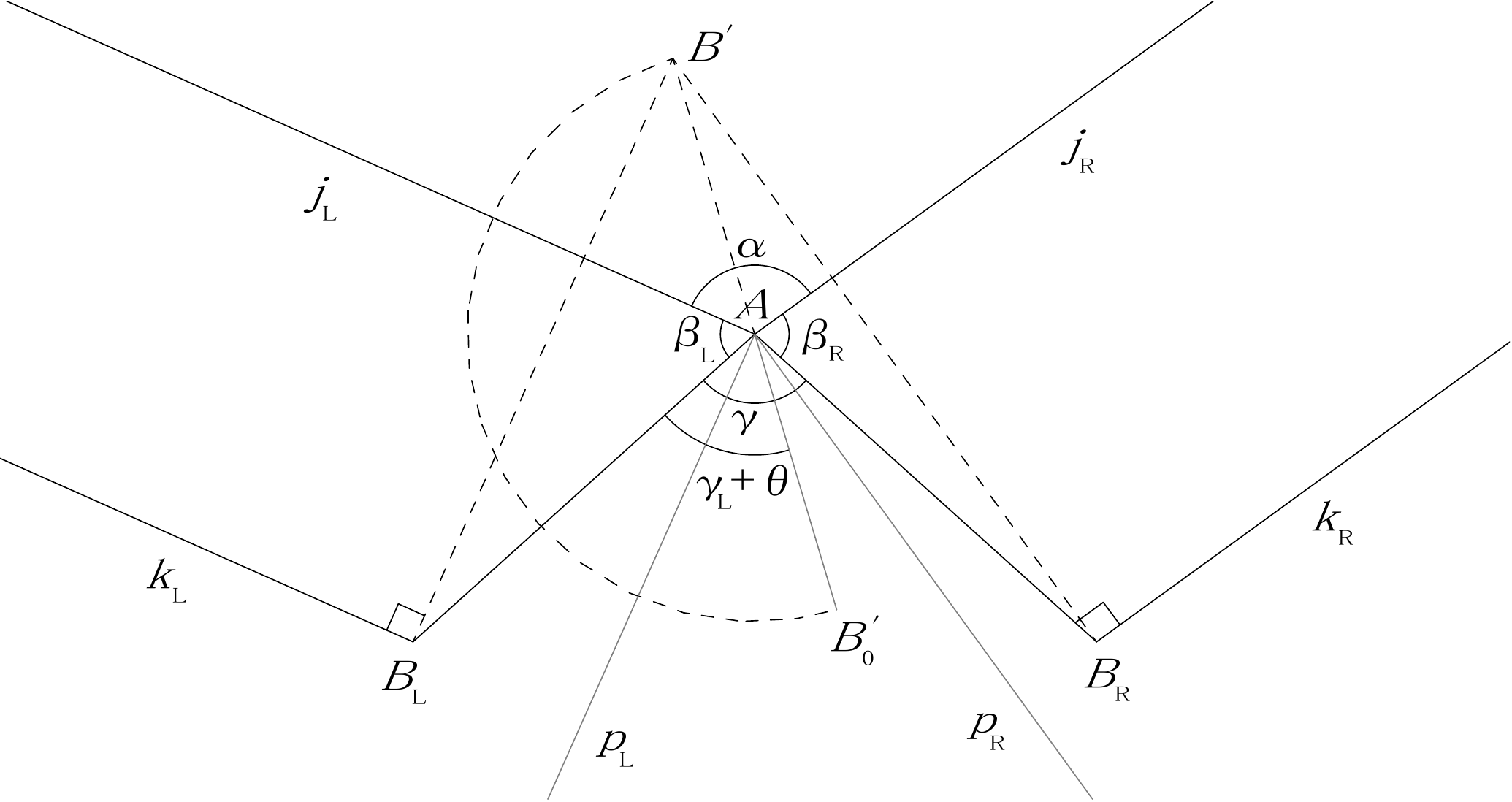}
    \caption{Construction of rays $p_\Lt$ and $p_\Rt$}
    \label{fig:construction_p}
\end{figure}
\begin{lemma}\label{lem:pi-beta_B'Ap}
Suppose $\beta_\Lt ,\beta_\Rt\leqslant\pi /2$, and either $\beta_\Lt\neq\pi /2$ or $\beta_\Rt\neq\pi /2$, so that $B'_0\neq A$.
For $\sigma =\Lt ,\Rt$, let $p_\sigma$ be a ray starting from $A$ which is parallel to $B'B_\sigma$ and going to the side of $B_\sigma$
as shown in Figure $\ref{fig:construction_p}$.
Then we have
\begin{equation}\label{ineq:pi-beta_B'Ap}
\pi /2-\beta_\sigma\leqslant\angle B'_0Ap_\sigma\quad\text{for }\sigma =\Lt ,\Rt ,
\end{equation}
where equality holds if $\beta_\sigma =\pi /2$.
\end{lemma}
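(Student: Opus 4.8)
The plan is to prove the inequality by computing $\angle B'_0Ap_\sigma$ through its tangent and comparing it with $\pi/2-\beta_\sigma$. I would take as given formula \eqref{eq:tan_B'Ap}, already recorded in the proof of Lemma \ref{lem:conditions_theta_23}, which gives
\[
\tan\angle B'_0Ap_\sigma=\frac{\sin\alpha\cos\beta_\sigma}{\cos\beta_{\sigma'}-\cos\alpha\cos\beta_\sigma}.
\]
Since $\angle B'_0Ap_\sigma$ and $\pi/2-\beta_\sigma$ are both angles between rays issuing from $A$, the former lies in $[0,\pi)$ and the latter in $[0,\pi/2]$, so it suffices to control the tangent whenever the former is acute.

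First I would dispose of the degenerate case $\beta_\sigma=\pi/2$. Here the numerator $\sin\alpha\cos\beta_\sigma$ vanishes while, by the hypothesis that not both of $\beta_\Lt,\beta_\Rt$ equal $\pi/2$, we have $\beta_{\sigma'}\neq\pi/2$ and hence $\cos\beta_{\sigma'}>0$; thus $\tan\angle B'_0Ap_\sigma=0$ and $\angle B'_0Ap_\sigma=0=\pi/2-\beta_\sigma$, giving the asserted equality.

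For $\beta_\sigma<\pi/2$ I would argue as follows. Note first that $\alpha<\beta_\Lt+\beta_\Rt\leqslant\pi$ by condition (i) of Construction \ref{const:condition} together with $\beta_\Lt,\beta_\Rt\leqslant\pi/2$, so $\sin\alpha>0$ and the numerator $\sin\alpha\cos\beta_\sigma$ is strictly positive. If the denominator $\cos\beta_{\sigma'}-\cos\alpha\cos\beta_\sigma$ is non-positive, then $\angle B'_0Ap_\sigma\geqslant\pi/2>\pi/2-\beta_\sigma$ and there is nothing more to prove. If it is positive, then $\angle B'_0Ap_\sigma\in(0,\pi/2)$, and since $\tan$ is increasing on $[0,\pi/2)$ the desired strict inequality $\angle B'_0Ap_\sigma>\pi/2-\beta_\sigma$ is equivalent to $\tan\angle B'_0Ap_\sigma>\cot\beta_\sigma$. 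Clearing the (positive) denominators and cancelling $\cos\beta_\sigma>0$, this reduces to $\sin\alpha\sin\beta_\sigma+\cos\alpha\cos\beta_\sigma>\cos\beta_{\sigma'}$, that is,
\[
\cos(\alpha-\beta_\sigma)>\cos\beta_{\sigma'}.
\]
The two inequalities $\alpha-\beta_\sigma<\beta_{\sigma'}$ and $\beta_\sigma-\alpha<\beta_{\sigma'}$ are exactly condition (i) of Construction \ref{const:condition}, so $|\alpha-\beta_\sigma|<\beta_{\sigma'}$; as both $|\alpha-\beta_\sigma|$ and $\beta_{\sigma'}$ lie in $[0,\pi]$, where cosine is strictly decreasing, this yields the displayed strict inequality and hence $\angle B'_0Ap_\sigma>\pi/2-\beta_\sigma$. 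Combining the cases proves $\pi/2-\beta_\sigma\leqslant\angle B'_0Ap_\sigma$ with equality precisely when $\beta_\sigma=\pi/2$.

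The main obstacle is exactly the sign of the denominator $\cos\beta_{\sigma'}-\cos\alpha\cos\beta_\sigma$: one is tempted to cross-multiply in \eqref{eq:tan_B'Ap} unconditionally, but under conditions (i), (ii) and (iii.a)--(iii.c) this quantity can in fact be negative (for instance when $\beta_{\sigma'}$ is close to $\pi/2$), in which case $\angle B'_0Ap_\sigma$ exceeds $\pi/2$ and must be handled separately rather than through the tangent comparison. Splitting on the sign of the denominator, as above, is what makes the argument uniform; everything else is the routine reduction to $\cos(\alpha-\beta_\sigma)>\cos\beta_{\sigma'}$ and the monotonicity of cosine.
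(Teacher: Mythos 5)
Your proof is correct and follows essentially the same route as the paper's: both start from $\eqref{eq:tan_B'Ap}$, split on the sign of the denominator $\cos\beta_{\sigma'}-\cos\alpha\cos\beta_\sigma$ (the non-positive case giving $\angle B'_0Ap_\sigma\geqslant\pi /2$ immediately), and reduce the remaining tangent comparison to $\cos (\alpha -\beta_\sigma )\geqslant\cos\beta_{\sigma'}$, which follows from condition (i) of Construction $\ref{const:condition}$. The only cosmetic differences are that you treat the equality case $\beta_\sigma =\pi /2$ separately and finish via monotonicity of cosine on $[0,\pi ]$ applied to $\norm{\alpha -\beta_\sigma}<\beta_{\sigma'}$, whereas the paper keeps the factor $\cos\beta_\sigma$ and concludes with the product-to-sum identity $\cos (\alpha -\beta_\sigma )-\cos\beta_{\sigma'}=2\sin\frac{\alpha +\beta_{\sigma'}-\beta_\sigma}{2}\sin\frac{\beta_\sigma +\beta_{\sigma'}-\alpha}{2}$.
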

\begin{proof}
If $\cos\beta_{\sigma'}-\cos\alpha\cos\beta_\sigma\leqslant 0$ in $\eqref{eq:tan_B'Ap}$, then we have $\angle B'_0Ap_\sigma\geqslant\pi /2$, so that
$\eqref{ineq:pi-beta_B'Ap}$ holds.
Thus we may assume $\cos\beta_{\sigma'}-\cos\alpha\cos\beta_\sigma >0$,
and it is sufficient to prove 
\begin{equation*}
\tan (\pi /2-\beta_\sigma )\leqslant\tan\angle B'_0Ap_\sigma ,
\end{equation*}
which is rewritten as
\begin{equation}\label{ineq:frac_alpha_beta}
\frac{\cos\beta_\sigma}{\sin\beta_\sigma}<\frac{\sin\alpha\cos\beta_\sigma}{\cos\beta_{\sigma'}-\cos\alpha\cos\beta_\sigma}
\end{equation}
by $\eqref{eq:tan_B'Ap}$.
By the assumption, we see that $\eqref{ineq:frac_alpha_beta}$ holds because
\begin{align*}
&\sin\alpha\cos\beta_\sigma\sin\beta_\sigma-(\cos\beta_{\sigma'}-\cos\alpha\cos\beta_\sigma )\cos\beta_\sigma\\
&=(\sin\alpha\sin\beta_\sigma -\cos\beta_{\sigma'}+\cos\alpha\cos\beta_\sigma )\cos\beta_\sigma\\
&=\{\cos (\alpha -\beta_\sigma )-\cos\beta_{\sigma'}\}\cos\beta_\sigma\\
&=2\sin\frac{\alpha +\beta_{\sigma'}-\beta_\sigma}{2}\sin\frac{\beta_\sigma +\beta_{\sigma'}-\alpha}{2}\cos\beta_\sigma\geqslant 0,
\end{align*}
where equality holds if $\beta_\sigma =\pi /2$.
This completes the proof.
\end{proof}
\begin{proposition}\label{prop:kBl>PBl}
Suppose $\beta_\Lt ,\beta_\Rt\geqslant\pi /2$ or $\beta_\Lt ,\beta_\Rt\leqslant\pi /2$, and either $\beta_\Lt\neq\pi /2$ or $\beta_\Rt\neq\pi /2$,
so that the angle $\theta$ is well-defined.
Also suppose $\theta$ satisfy conditions $\eqref{ineq:range_epsilon_1}$, $\eqref{ineq:range_epsilon_2}$
and $\eqref{ineq:range_epsilon_3_2}$ for $\epsilon =\theta /2$.
Then for this $\theta$ we have
\begin{equation*}
\angle k_\sigma B_\sigma\ell_\sigma >\angle P_\sigma B_\sigma\ell_\sigma\quad\text{for }\sigma =\Lt ,\Rt .
\end{equation*}
Thus in our second construction,
there appear no creases like $B_\sigma Q_\sigma$ or $B_\sigma R_\sigma$ in Construction $\ref{const:negative_1}$ which are made with $k_\sigma$.
\end{proposition}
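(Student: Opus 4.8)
The plan is to reduce each of the two inequalities to a positivity statement about the quantity $\theta_{\sigma,3}=\beta_\sigma+\gamma_\sigma-\pi/2$, corrected by the rotation angle $\epsilon=\theta/2$, and then to invoke Lemmas~\ref{lem:conditions_theta_23} and~\ref{lem:pi-beta_B'Ap}.

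First I would record how replacing the crease $P'_\Lt P'_\Rt$ by its counterclockwise rotation $P_\Lt P_\Rt$ through $\epsilon$ changes the relevant angle. By exactly the bookkeeping that produced \eqref{eq:kDG_L} and \eqref{eq:kDG_R}, this rotation sends $\angle P'_\Lt B_\Lt\ell_\Lt$ to $\angle P'_\Lt B_\Lt\ell_\Lt-\epsilon$ and $\angle P'_\Rt B_\Rt\ell_\Rt$ to $\angle P'_\Rt B_\Rt\ell_\Rt+\epsilon$. Combined with $\theta_{\sigma,3}=\angle k_\sigma B_\sigma\ell_\sigma-\angle P'_\sigma B_\sigma\ell_\sigma$ this gives
\begin{equation*}
\angle k_\Lt B_\Lt\ell_\Lt-\angle P_\Lt B_\Lt\ell_\Lt=\theta_{\Lt,3}+\epsilon,\qquad
\angle k_\Rt B_\Rt\ell_\Rt-\angle P_\Rt B_\Rt\ell_\Rt=\theta_{\Rt,3}-\epsilon,
\end{equation*}
so that with $\epsilon=\theta/2$ the assertion becomes $\theta_{\Lt,3}+\theta/2>0$ and $\theta_{\Rt,3}-\theta/2>0$.

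Next I would split along the hypothesis. If $\beta_\Lt,\beta_\Rt\geqslant\pi/2$ then $\theta_{\sigma,3}=(\beta_\sigma-\pi/2)+\gamma_\sigma\geqslant\gamma_\sigma$, and the bound $-\gamma_\Lt\leqslant\theta\leqslant\gamma_\Rt$ established in Lemma~\ref{lem:conditions_theta_23} yields $\theta_{\Lt,3}+\theta/2\geqslant\gamma_\Lt-\gamma_\Lt/2=\gamma_\Lt/2>0$ and, symmetrically, $\theta_{\Rt,3}-\theta/2\geqslant\gamma_\Rt/2>0$. This case is immediate because $\theta_{\sigma,3}$ is already comfortably positive.

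The harder case is $\beta_\Lt,\beta_\Rt\leqslant\pi/2$, where $\theta_{\sigma,3}$ may itself be negative so the crude bound on $\theta$ no longer suffices; here I would borrow positivity from the angles $\angle B'_0Ap_\sigma$ of Lemma~\ref{lem:pi-beta_B'Ap}. Solving \eqref{ineq:B'Ap_L} and \eqref{ineq:B'Ap_R} for $\theta$ gives $\theta=-\theta_{\Lt,3}+\angle B'_0Ap_\Lt$ and $\theta=\theta_{\Rt,3}-\angle B'_0Ap_\Rt$, so that both target quantities collapse to the single symmetric expression $\angle k_\sigma B_\sigma\ell_\sigma-\angle P_\sigma B_\sigma\ell_\sigma=\tfrac12\bigl(\theta_{\sigma,3}+\angle B'_0Ap_\sigma\bigr)$. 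Applying $\angle B'_0Ap_\sigma\geqslant\pi/2-\beta_\sigma$ from Lemma~\ref{lem:pi-beta_B'Ap} then gives the lower bound $\tfrac12(\beta_\sigma+\gamma_\sigma-\pi/2+\pi/2-\beta_\sigma)=\gamma_\sigma/2>0$, and the closing sentence about $B_\sigma Q_\sigma$, $B_\sigma R_\sigma$ follows at once, since by Remark~\ref{rem:ABE_practical} these auxiliary creases are introduced in Construction~\ref{const:negative_1} precisely when $\angle k_\sigma B_\sigma\ell_\sigma<\angle P_\sigma B_\sigma\ell_\sigma$. The main obstacle is exactly this last case: the correction $\theta/2$ must cancel the possibly negative $\pi/2-\beta_\sigma$ part of $\theta_{\sigma,3}$ and leave the strictly positive remainder $\gamma_\sigma/2$, and seeing this cancellation requires rewriting $\theta$ through Lemma~\ref{lem:pi-beta_B'Ap} rather than bounding it directly.
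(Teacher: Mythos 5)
Your proof is correct and takes essentially the same route as the paper's: the same reduction to positivity of $\beta_\sigma +\gamma_\sigma -\pi /2\pm\theta /2$ (the paper's $\eqref{ineq:kBl-PBl}$, obtained by the same bookkeeping as $\eqref{eq:kDG_L}$), the same case split, and in the harder case $\beta_\Lt ,\beta_\Rt\leqslant\pi /2$ the identical combination of $\eqref{ineq:B'Ap_L}$, $\eqref{ineq:B'Ap_R}$ and Lemma $\ref{lem:pi-beta_B'Ap}$ yielding the lower bound $\gamma_\sigma /2>0$. The only cosmetic difference is that in the case $\beta_\Lt ,\beta_\Rt\geqslant\pi /2$ you invoke the bound $-\gamma_\Lt\leqslant\theta\leqslant\gamma_\Rt$ appearing in the proof of Lemma $\ref{lem:conditions_theta_23}$, whereas the paper cites condition $\eqref{ineq:range_epsilon_2}$ directly; these amount to the same thing.
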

\begin{proof}
Since we have
\begin{equation}\label{ineq:kBl-PBl}
\begin{aligned}
\angle k_\Lt B_\Lt\ell_\Lt -\angle P_\Lt B_\Lt\ell_\Lt&=\beta_\Lt +\gamma_\Lt -\pi /2+\theta /2,\\
\angle k_\Rt B_\Rt\ell_\Rt -\angle P_\Rt B_\Rt\ell_\Rt&=\beta_\Rt +\gamma_\Rt -\pi /2-\theta /2
\end{aligned}
\end{equation}
by a calculation similar to $\eqref{eq:kDG_L}$, it is sufficient to prove that the right-hand sides are positive.

First suppose $\beta_\Lt ,\beta_\Rt\geqslant\pi /2$.
Then using $\eqref{ineq:range_epsilon_2}$ for $\epsilon =\theta /2$, we have
\begin{align*}
\gamma_\Lt +\theta /2\geqslant\gamma_\Lt /2+\delta_\Lt >0,\quad\gamma_\Rt -\theta /2\geqslant\gamma_\Rt /2+\delta_\Rt >0,
\end{align*}
from which follows the positivity of the right-hand sides of $\eqref{ineq:kBl-PBl}$.

Next suppose $\beta_\Lt ,\beta_\Rt\leqslant\pi /2$.
Then we calculate the double of the first line of $\eqref{ineq:kBl-PBl}$ as
\begin{align*}
2\beta_\Lt +2\gamma_\Lt -\pi +\theta&=(\beta_\Lt +\gamma_\Lt +\theta -\pi /2)+(\beta_\Lt +\gamma_\Lt -\pi /2)\\
&=\angle B'_0Ap_\Lt +\beta_\Lt +\gamma_\Lt -\pi /2\geqslant\gamma_\Lt >0,
\end{align*}
where we used $\eqref{ineq:B'Ap_L}$ and Lemma $\ref{lem:pi-beta_B'Ap}$ in the second line.
Also, we can use $\eqref{ineq:B'Ap_R}$ and Lemma $\ref{lem:pi-beta_B'Ap}$ to prove that the second line of $\eqref{ineq:kBl-PBl}$ is positive.
This completes the proof of Proposition $\ref{prop:kBl>PBl}$.
\end{proof}
Hence setting $\theta_\sigma =2\theta_{\sigma ,1}$, we see from $\eqref{ineq:range_epsilon_1}$, $\eqref{ineq:range_epsilon_2}$, $\eqref{ineq:range_epsilon_3_2}$
and Lemma $\ref{lem:conditions_theta_23}$ that our second construction is possible if $\theta =2\epsilon$ satisfies $\theta\in (-\theta_\Lt ,\theta_\Rt )$.
Note that the flat-foldability around $P_\sigma$ is obvious and that around $E_\sigma$ is verified in the same way as in Section $\ref{subsec:negative_new_1}$.

Now our second construction is summarized as follows.
\begin{construction}\label{const:negative_2}\rm
Consider a development as in Figure $\ref{fig:development_negative_2}$, 
for which we require condition (i), (ii) and (iii.a)--(iii.c) of Construction $\ref{const:condition}$.
Suppose either $\beta_\Lt ,\beta_\Rt\leqslant\pi /2$ or $\beta_\Lt ,\beta_\Rt\geqslant\pi /2$ holds.
Let $B'$ be the intersection point of a perpendicular through $B_\Lt$ to $k_\Lt$ and a perpendicular through $B_\Rt$ to $k_\Rt$.
Also, let $B'_0$ be $B'$ if $\beta_\Lt ,\beta_\Rt\geqslant\pi$
and the antipodal point of $B'$ in the circle with center $A$ and radius $\norm{AB'}$ if $\beta_\Lt ,\beta_\Rt\leqslant\pi$,
so that $B'_0$ lies inside $\angle B_\Lt AB_\Rt$.
If either $\beta_\Lt\neq\pi /2$ or $\beta_\Lt\neq\pi /2$, 
then define an angle $\theta$ numerically by $\eqref{eq:theta}$ or geometrically in Figure $\ref{fig:CP_negative_2_initial}$ by
\begin{equation*}
\theta =\angle B_\Lt AB'_0-\angle B_\Lt AC.
\end{equation*}
Also define $\theta_\sigma$ for $\sigma =\Lt ,\Rt$ by
\begin{equation*}
\theta_\sigma =\pi -2\gamma_{\sigma'}-2\delta_{\sigma'}
\end{equation*}
(We can also define $\theta_\sigma$ geometrically in Figure $\ref{fig:CP_negative_2_initial}$ by
\begin{equation*}
\theta_\sigma /2=\theta_{\sigma ,1}=\angle P'_{\sigma'}C\ell'_{\sigma'}=\angle P'_{\sigma'}B_{\sigma'}\ell_{\sigma'}
\end{equation*}
after procedure $(2)$ below.)
Suppose $\theta\in (-\theta_\Lt ,\theta_\Rt )$ if either $\beta_\Lt\neq\pi /2$ or $\beta_\Rt\ne	\pi /2$,
or choose any $\theta\in (-\theta_\Lt ,\theta_\Rt )$ if $\beta_\Lt =\beta_\Rt =\pi /2$.
Then the crease pattern of our second negative $3$D gadget with prescribed simple outgoing pleats $(\ell_\Lt ,m_\Lt )$ and $(\ell_\Rt ,m_\Rt )$
is constructed as follows, where we regard $\sigma$ as taking both $\Lt$ and $\Rt$.
\begin{enumerate}[(1)]
\item Let $E_\sigma$ be the intersection point of $m_\sigma$ and a bisector of $\angle B_\sigma AC$.
Also redefine $m_\sigma$ to be a ray starting from $E_\sigma$ and going in the same direction as $\ell_\sigma$.
\item Let $D$ be the intersection point of $AC$ and the circular arc $B_\Lt B_\Rt$ with center $A$.
Draw a perpendicular through $D$ to $AD$, letting $G_\sigma$ be the intersection point of the perpendicular and $AE_\sigma$.
Also draw a perpendicular through $C$ to $AC$, letting $P'_\sigma$ be the intersection point of the perpendicular and $m_\sigma$.
\item Rotate the line through $P'_\Lt$ and $P'_\Rt$ around $C$ by $\theta /2$ in the counterclockwise direction, 
letting $P_\sigma$ be the intersection point of the rotated line and $m_\sigma$.
\item The desired crease pattern is shown in Figure $\ref{fig:CP_negative_2}$,
and the assignment of mountain and valley folds is given in Table $\ref{tbl:assignment_negative_2}$.
\end{enumerate}
\end{construction}
\begin{figure}[htbp]
\addtocounter{theorem}{1}
\centering\includegraphics[width=0.75\hsize]{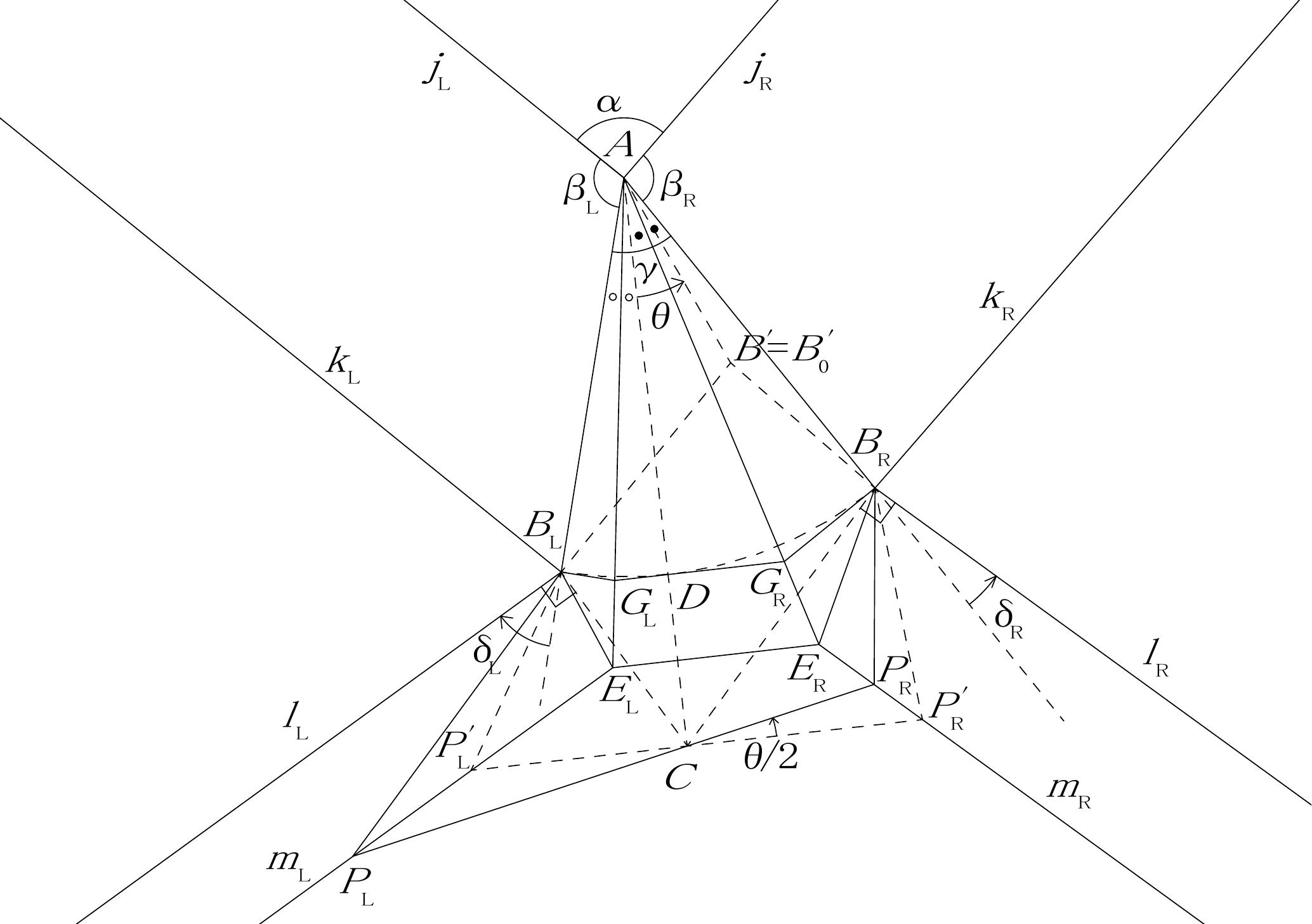}
    \caption{CP of a negative $3$D gadget by our second construction}
    \label{fig:CP_negative_2}
\end{figure}
\addtocounter{theorem}{1}
\begin{table}[h]
\begin{tabular}{c|c}
mountain folds&$j_\sigma ,m_\sigma ,AE_\sigma ,B_\sigma G_\sigma ,B_\sigma P_\sigma ,E_\Lt E_\Rt$\\ \hline
valley folds&$k_\sigma ,\ell_\sigma ,AB_\sigma ,B_\sigma E_\sigma ,G_\Lt G_\Rt ,P_\Lt P_\Rt$
\end{tabular}\vspace{0.5cm}
\caption{Assignment of mountain and valley folds to a negative $3D$ gadget by our second construction}
\label{tbl:assignment_negative_2}
\end{table}
The triangular flap $\triangle AG_\Lt G_\Rt$ may interfere with adjacent $3$D gadgets within the length
$\norm{DG_\sigma}$ on the side of $\sigma$, which is calculated as
\begin{equation*}
\norm{DG_\sigma}=\norm{B_\sigma G_\sigma}=\norm{AB}\tan\frac{\gamma_\sigma}{2}.
\end{equation*}

\subsection{Third construction}\label{subsec:negative_new_3}
We begin with the following construction.
\begin{construction}\label{const:negative_3}\rm
Consider a development as in Figure $\ref{fig:development_1}$, 
for which we require condition (i), (ii) and (iii.a)--(iii.c) of Construction $\ref{const:condition}$.
Then the crease pattern of our third negative $3$D gadget with prescribed simple outgoing pleats $(\ell_\Lt ,m_\Lt )$ and $(\ell_\Rt ,m_\Rt )$
is constructed as follows, where we regard $\sigma$ as taking both $\Lt$ and $\Rt$.
\begin{enumerate}
\item Determine  a point $D$ in the circular arc $B_\Lt B_\Rt$ with center $A$ 
such that $\psi_\Lt =\angle B_\Lt AC -\angle B_\Lt AD$ or $\rho_\Lt =\angle B_\Lt CA -\angle B_\Lt CD$ is calculated using Theorem $\ref{thm:existence_rho}$.
\item Let $E_\sigma$ be the intersection point of $m_\sigma$ and the bisector of $\angle B_\sigma AD$,
and redefine $m_\sigma$ to be a ray starting from $E_\sigma$ and going in the same direction as $\ell_\sigma$.
\item Draw a parallel to segment $E_\Lt E_\Rt$ through $D$, letting $G_\sigma$ be the intersection point of the parallel and segment $AE_\sigma$.
\item Draw a parallel to segment $E_\Lt E_\Rt$ through $C$, letting $P_\sigma$ be the intersection point of the parallel and ray $m_\sigma$.
\item The desired crease pattern is shown as the solid lines in Figure $\ref{fig:CP_negative_3}$, 
and the assignment of mountain and valley folds is given in Table $\ref{tbl:assignment_negative_3}$.
\end{enumerate}
\end{construction}
\begin{figure}[htbp]
\addtocounter{theorem}{1}
\centering\includegraphics[width=0.75\hsize]{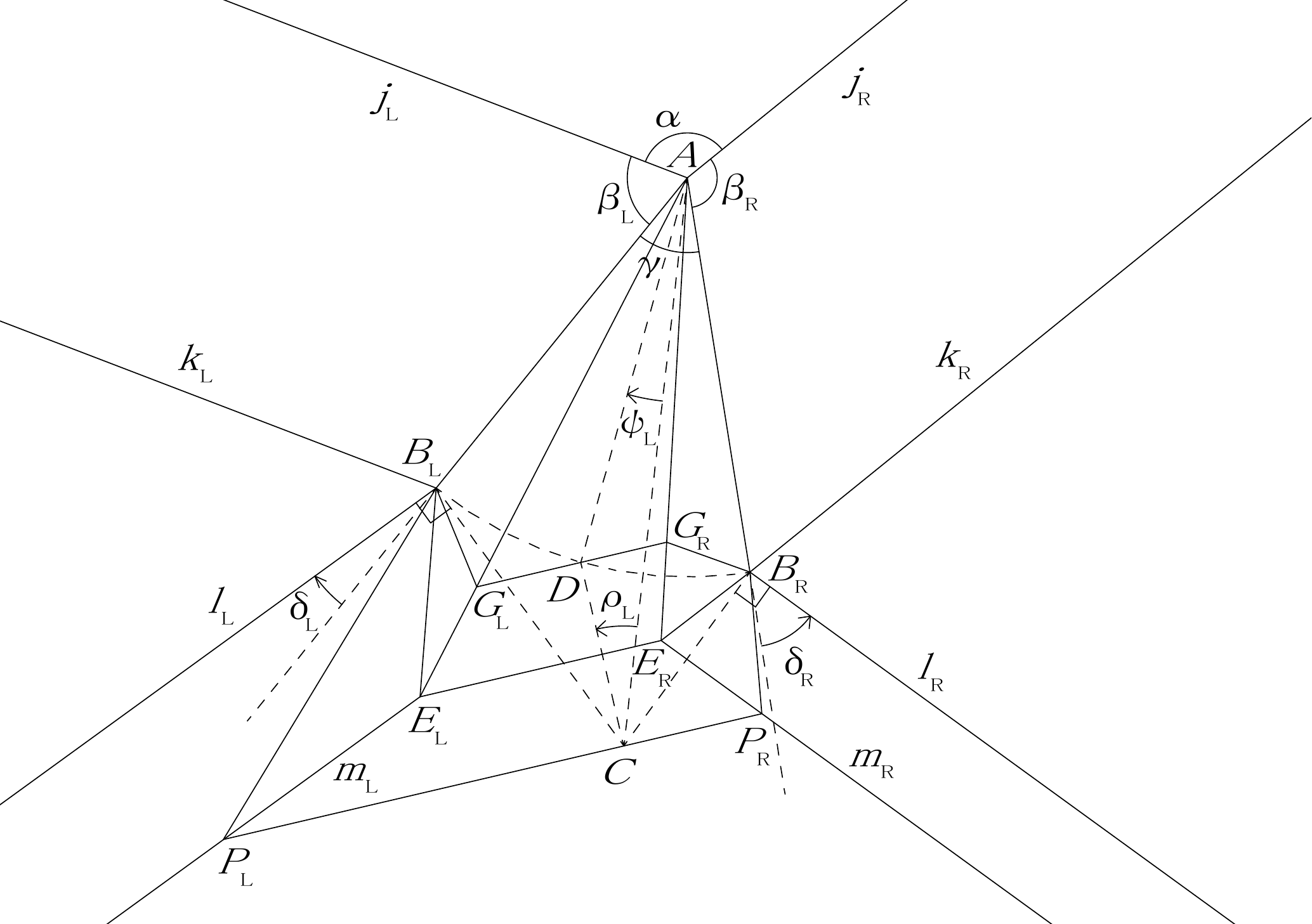}
    \caption{CP of a negative $3$D gadget by our third construction}
    \label{fig:CP_negative_3}
\end{figure}
\addtocounter{theorem}{1}
\begin{table}[h]
\begin{tabular}{c|c}
mountain folds&$j_\sigma ,m_\sigma ,AE_\sigma ,B_\sigma G_\sigma ,B_\sigma P_\sigma, E_\Lt E_\Rt$\\ \hline
valley folds&$k_\sigma ,\ell_\sigma ,AB_\sigma ,B_\sigma E_\sigma ,G_\Lt G_\Rt ,P_\Lt P_\Rt$
\end{tabular}\vspace{0.5cm}
\caption{Assignment of mountain and valley folds to a negative $3$D gadget by our third construction}
\label{tbl:assignment_negative_3}
\end{table}
Before finding $\psi_\Lt$ or $\rho_\Lt$ in $(1)$ of the above construction, we check the foldability of the gadget.
\begin{proposition}\label{prop:EBG=EBP}
Regardless of the position of $D$ in the circular arc $B_\Lt B_\Rt$ with center $A$,
we have
\begin{equation*}
\angle E_\sigma B_\sigma G_\sigma =\angle E_\sigma B_\sigma P_\sigma\quad\text{for }\sigma =\Lt ,\Rt .
\end{equation*}
\end{proposition}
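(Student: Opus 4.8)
The plan is to exhibit $E_\sigma$ as equidistant from the three points $B_\sigma$, $C$ and $D$, and then to read off the desired equality from a chain of three reflections, each fixing $E_\sigma$.

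First I would observe that $E_\sigma$ lies on $m_\sigma$, the perpendicular bisector of $B_\sigma C$, so that $\norm{E_\sigma B_\sigma}=\norm{E_\sigma C}$. Next, since $D$ lies on the circular arc $B_\Lt B_\Rt$ with center $A$, we have $\norm{AB_\sigma}=\norm{AD}$, so $\triangle AB_\sigma D$ is isosceles and the bisector of $\angle B_\sigma AD$ coincides with the perpendicular bisector of $B_\sigma D$. As $E_\sigma$ lies on this bisector by its definition in $(2)$, we get $\norm{E_\sigma B_\sigma}=\norm{E_\sigma D}$. Hence $\norm{E_\sigma C}=\norm{E_\sigma D}$ for both $\sigma=\Lt ,\Rt$, which means that $E_\Lt$ and $E_\Rt$ both lie on the perpendicular bisector of segment $CD$; therefore the line $E_\Lt E_\Rt$ \emph{is} the perpendicular bisector of $CD$, and in particular reflection across $E_\Lt E_\Rt$ interchanges $C$ and $D$.

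With these preparations the proof reduces to the chain
\[
\angle E_\sigma B_\sigma G_\sigma =\angle E_\sigma D G_\sigma =\angle E_\sigma C P_\sigma =\angle E_\sigma B_\sigma P_\sigma .
\]
For the first equality I reflect across the line $AE_\sigma$, which is the perpendicular bisector of $B_\sigma D$; it swaps $B_\sigma$ and $D$ while fixing $E_\sigma$ and $G_\sigma$ (the latter lying on $AE_\sigma$). For the last equality I reflect across $m_\sigma$, which swaps $B_\sigma$ and $C$ while fixing $E_\sigma$ and $P_\sigma$. For the middle equality I reflect across $E_\Lt E_\Rt$: it fixes $E_\sigma$, swaps $C$ and $D$, and carries the line through $D$ parallel to $E_\Lt E_\Rt$ (on which $G_\sigma$ lies, by $(3)$) onto the line through $C$ parallel to $E_\Lt E_\Rt$ (on which $P_\sigma$ lies, by $(4)$), since a line parallel to the mirror reflects to the parallel line through the image of any of its points. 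Thus it sends $\angle E_\sigma D G_\sigma$ to the angle at $C$ between $CE_\sigma$ and the parallel through $C$, namely $\angle E_\sigma C P_\sigma$.

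The one point that needs care — and the only real obstacle — is the orientation of the rays: each reflection preserves the magnitude of an angle, but I must check that the image half-lines are exactly the ones naming $\angle E_\sigma C P_\sigma$ and $\angle E_\sigma B_\sigma P_\sigma$, so that I obtain the stated equalities rather than their supplements. This I would verify from the construction, using that $G_\sigma$ lies on the segment $AE_\sigma$ and $P_\sigma$ on the ray $m_\sigma$ emanating from $E_\sigma$, as depicted in Figure \ref{fig:CP_negative_3}; and since the claim is asserted for an \emph{arbitrary} position of $D$ on the arc, I would confirm that these side conditions persist throughout the admissible range of $D$.
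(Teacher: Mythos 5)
Your proof is correct and follows essentially the same route as the paper's: both arguments rest on the three reflections across $AE_\sigma$ (the perpendicular bisector of $B_\sigma D$), across $m_\sigma$ (the perpendicular bisector of $B_\sigma C$), and across $E_\Lt E_\Rt$ (the perpendicular bisector of $CD$), the paper merely routing the middle step through the alternate angles $\angle DE_\sigma E_{\sigma'}=\angle CE_\sigma E_{\sigma'}$ rather than reflecting the angle at $D$ directly onto the angle at $C$. Your explicit equidistance argument showing that $E_\Lt E_\Rt$ is the perpendicular bisector of $CD$ fills in a fact the paper uses without proof, but this does not change the method.
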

\begin{proof}
For any $\psi_\Lt =\angle B_\Lt AD -\angle B_\Lt AC$, we have
\begin{align*}
\angle E_\sigma B_\sigma G_\sigma&=\angle E_\sigma DG_\sigma =\angle DE_\sigma E_{\sigma'},\quad\text{and}\\
\angle E_\sigma B_\sigma P_\sigma&=\angle E_\sigma DP_\sigma =\angle CE_\sigma E_{\sigma'},
\end{align*}
which are equal because $E_\Lt E_\Rt$ is a perpendicular bisector of $CD$.
\end{proof}
Thus in the resulting extrusion, using \cite{Doi20}, Lemma $4.4$ we have for both $\sigma =\Lt ,\Rt$ that
\begin{equation}\label{eq:kBP_1}
\begin{aligned}
\angle k_\sigma BP_\sigma&=\angle k_\sigma B_\sigma \ell_\sigma -\angle P_\sigma B_\sigma \ell_\sigma\\
&=(\beta_\sigma -\delta_\sigma )-(\pi +\delta_\sigma -\angle AB_\sigma E_\sigma -\angle E_\sigma B_\sigma P_\sigma )\\
&=(\beta_\sigma -\delta_\sigma )-\left\{\pi +\delta_\sigma -\left(\frac{\pi}{2}+\frac{\gamma_\sigma}{2}+\delta_\sigma+\frac{\psi_\sigma}{2}+\rho_\sigma\right)
-\left(\frac{\gamma_\sigma}{2}+\delta_\sigma-\frac{\psi_\sigma}{2}\right)\right\}\\
&=\beta_\sigma +\gamma_\sigma +\rho_\sigma -\frac{\pi}{2}.
\end{aligned}
\end{equation}
On the other hand, since $P_\sigma P_{\sigma'}$ overlaps with $G_\sigma G_{\sigma'}$,
$\angle k_\sigma BP_\sigma$ must be equal to the angle formed by the extensions of $j_\sigma$ and $G_\sigma G_{\sigma'}$, which is calculated as
\begin{equation}\label{eq:kBP_2}
\begin{aligned}
\angle ADG_\sigma -(\pi -\angle j_\sigma AD)
&=\left(\psi_\sigma +\rho_\sigma +\frac{\pi}{2}\right) -\{\pi -(\beta_\sigma +\gamma_\sigma -\psi_\sigma)\}\\
&=\beta_\sigma +\gamma_\sigma +\rho_\sigma -\frac{\pi}{2}.
\end{aligned}
\end{equation}
Hence we see from $\eqref{eq:kBP_1}$ and $\eqref{eq:kBP_2}$ together with the flat-foldability around $E_\sigma$ and $P_\sigma$ for $\sigma =\Lt ,\Rt$ that
we can construct successfully the supporting flap $\triangle AG_\Lt G_\Rt$ on the back side of the negative gadget
\emph{if there exists} an appropriate choice of $\psi_\Lt$ or $\rho_\Lt$.

Now let us find an equation which $\psi_\Lt$ or $\rho_\Lt$ satisfies.
In the resulting extrusion, noting that ridge $AB$ overlaps with $AD$, we need
\begin{equation}\label{eq:ADG=ABG}
\angle ADG_\Rt =\angle ABG_\Rt .
\end{equation}
This is rewritten as
\begin{equation}\label{eq:DA_EE=BA_EE}
e(\ora{DA})\cdot e(\ora{E_\Lt E_\Rt})=e(\ora{BA})\cdot e(\ora{E_\Lt E_\Rt}),
\end{equation}
where where $e(\ora{v})$ denotes a unit direction vector of $\ora{v}$, i.e., $e(\ora{v})=\ora{v}/\norm{v}$, and
$\ora{BA}$ is a $3$D vector in the resulting extrusion while $\ora{DA}$ is a $2$D vector in the development.
Suppose $\norm{AB}=1$ and the unit direction vectors of $k_\Rt$ and $k_\Lt$ are $(1,0,0)$ and $(\cos\alpha ,\sin\alpha ,0)$ respectively.
Using
\begin{equation*}
e(\ora{CA})=(\cos\alpha_\Rt ,\sin\alpha_\Rt ,0),
\end{equation*}
where $\alpha_\sigma =\pi -\beta_\sigma -\gamma_\sigma$, so that $\alpha_\Lt +\alpha_\Rt =\alpha$, we have that
\begin{align*}
e(\ora{DA})&=(\cos (\alpha_\Rt -\psi_\Lt ),\sin (\alpha_\Rt -\psi_\Lt ),0),\\
e(\ora{E_\Lt E_\Rt})&=\left(\cos\left(\alpha_\Rt +\rho_\Lt -\frac{\pi}{2}\right) ,\sin\left(\alpha_\Rt +\rho_\Lt -\frac{\pi}{2}\right),0\right)\\
&=(\sin (\beta_\Rt +\gamma_\Rt -\rho_\Lt ) ,\cos (\beta_\Rt +\gamma_\Rt -\rho_\Lt ) ,0) ,\\
e(\ora{BA})&=\left(-\cos\beta_\Rt ,\frac{\cos\alpha\cos\beta_\Rt -\cos\beta_\Lt}{\sin\alpha},\lambda (\alpha ,\beta_\Lt ,\beta_\Rt )\right) ,
\end{align*}
where $\lambda (\alpha ,\beta_\Lt ,\beta_\Rt )$ is the height of the extrusion for $\norm{AB}=1$
which is given in \cite{Doi19}, Lemma $2.2$, but we will not need its explicit value.
Also, $\tan\gamma_\sigma$ and $\tan\rho_\sigma$ are given in \cite{Doi19}, Lemma $4.1$ and \cite{Doi20}, equation $(4.21)$ respectively as
\begin{equation}
\begin{aligned}\label{eq:gamma_rho}
\tan\gamma_\sigma&=\frac{1-\cos\gamma +\sin\gamma\tan\delta_{\sigma'}}{\sin\gamma +\cos\gamma\tan\delta_{\sigma'}+\tan\delta_\sigma},\\
\tan\rho_\sigma&=\frac{\sin\psi_\sigma}{r-\cos\psi_\sigma},
\end{aligned}
\end{equation}
where $r$ is given in \cite{Doi19}, Proposition $5.2$ as
\begin{equation}\label{eq:r}
r=\frac{\norm{AC}}{\norm{AB}}=\frac{1}{\cos\gamma_\sigma -\sin\gamma_\sigma\tan\delta_\sigma}.
\end{equation}
Thus we calculate as
\begin{align}
\label{eq:DA_EE_1}
e(\ora{DA})\cdot e(\ora{E_\Lt E_\Rt})\sin\alpha&=-\sin(\beta_\Lt +\beta_\Rt +\gamma )\sin (\psi_\Lt +\rho_\Lt ),\\
\label{eq:BA_EE_1}
e(\ora{BA})\cdot e(\ora{E_\Lt E_\Rt})\sin\alpha&=\cos\beta_\Rt\cos (\beta_\Lt +\gamma_\Lt +\rho_\Lt ) -\cos\beta_\Lt\cos (\beta_\Rt +\gamma_\Rt -\rho_\Lt ).
\end{align}

Now define a function $\Gamma =\Gamma (\delta_\sigma ;\delta_{\sigma'})$ by
\begin{equation*}
\Gamma (\delta_\sigma ;\delta_{\sigma'})
=\tan^{-1}\left(\frac{1-\cos\gamma +\sin\gamma\tan\delta_{\sigma'}}{\sin\gamma +\cos\gamma\tan\delta_{\sigma'}+\tan\delta_\sigma}\right)
\end{equation*}
Then we can write $\gamma_\Lt$ and $\gamma_\Rt$ as
\begin{equation}\label{eq:Gamma_LR}
\gamma_\Lt =\Gamma (\delta_\Lt ;\delta_\Rt ),\quad\gamma_\Rt =\Gamma (\delta_\Rt ;\delta_\Lt ).
\end{equation}
Therefore, $r$ is considered as a function $r(\delta_\sigma ,\delta_{\sigma'})$ defined by
\begin{equation*}
r(\delta_\sigma ;\delta_{\sigma'})=\frac{1}{\cos\Gamma (\delta_\sigma ;\delta_{\sigma'})-\sin\Gamma (\delta_\sigma ;\delta_{\sigma'})\tan\delta_\sigma},
\end{equation*}
which satisfies 
\begin{equation}\label{eq:symmetry_r}
r(\delta_\Lt ;\delta_\Rt )=r(\delta_\Rt ;\delta_\Lt ).
\end{equation}

Let us return to equations $\eqref{eq:DA_EE_1}$ and $\eqref{eq:BA_EE_1}$.
Using the second expression of $\eqref{eq:gamma_rho}$, we have that
\begin{align*}
\cos (\gamma_\sigma\pm\rho_\Lt )&=(\cos\gamma_\sigma\mp\sin\gamma_\sigma \tan\rho_\Lt )\cos\rho_\Lt ,\\
\sin (\gamma_\sigma\pm\rho_\Lt )&=(\sin\gamma_\sigma\pm\cos\gamma_\sigma \tan\rho_\Lt )\cos\rho_\Lt .
\end{align*}
Consequently, we calculate $\eqref{eq:BA_EE_1}$ as
\begin{equation}\label{eq:BA_EE_2}
\begin{aligned}
e&(\ora{BA})\cdot e(\ora{E_\Lt E_\Rt})\times\frac{\sin\alpha}{\cos\rho_\Lt}\\
=&\{\cos\beta_\Lt (\cos\gamma_\Lt -\sin\gamma_\Lt\tan\rho_\Lt )-\sin\beta_\Lt (\sin\gamma_\Lt +\cos\gamma_\Lt\tan\rho_\Lt )\}\cos\beta_\Rt\\
&-\{\cos\beta_\Rt (\cos\gamma_\Rt +\sin\gamma_\Rt\tan\rho_\Lt )-\sin\beta_\Rt (\sin\gamma_\Rt +\cos\gamma_\Rt\tan\rho_\Lt )\}\cos\beta_\Lt\\
=&-\{(\cos\beta_\Lt\sin\gamma_\Lt +\sin\beta_\Lt\cos\gamma_\Lt )\cos\beta_\Rt
+(\cos\beta_\Rt\sin\gamma_\Rt +\sin\beta_\Rt\cos\gamma_\Rt )\cos\beta_\Lt\}\cdot\tan\rho_\Lt\\
&+\{(\cos\beta_\Lt\cos\gamma_\Lt -\sin\beta_\Lt\sin\gamma_\Lt )\cos\beta_\Rt -(\cos\beta_\Rt\cos\gamma_\Rt -\sin\beta_\Rt\sin\gamma_\Rt )\cos\beta_\Lt\}.
\end{aligned}
\end{equation}
Also, since it follows from the second expression of $\eqref{eq:gamma_rho}$ that
\begin{align*}
\frac{\sin (\psi_\Lt +\rho_\Lt )}{\cos\rho_\Lt}&=\sin\psi_\Lt +\cos\psi_\Lt\tan\rho_\Lt\\
&=(r-\cos\psi_\Lt )\tan\rho_\Lt +\cos\psi_\Lt \tan\rho_\Lt =r\tan\rho_\Lt ,
\end{align*}
we calculate $\eqref{eq:DA_EE_1}$ as
\begin{equation}\label{eq:DA_EE_2}
e(\ora{DA})\cdot e(\ora{E_\Lt E_\Rt})\times\frac{\sin\alpha}{\cos\rho_\Lt}=-r\sin(\beta_\Lt +\beta_\Rt +\gamma )\tan\rho_\Lt
\end{equation}
Equating $\eqref{eq:BA_EE_2}$ and $\eqref{eq:DA_EE_2}$, we finally obtain an equation
\begin{equation}\label{eq:DA_EE=BA_EE_alt_1}
(V_1(\beta_\Lt ,\delta_\Lt ;\beta_\Rt ,\delta_\Rt )-r(\delta_\Lt ;\delta_\Rt )\cdot V_2(\beta_\Lt ;\beta_\Rt ))\cdot\tan\rho_\Lt
=W(\beta_\Lt ,\delta_\Lt ;\beta_\Rt ,\delta_\Rt )
\end{equation}
which is equivalent to $\eqref{eq:DA_EE=BA_EE}$, where
\begin{equation}\label{eq:V_W_r}
\begin{aligned}
V_1=&(\cos\beta_\Lt\sin\gamma_\Lt +\sin\beta_\Lt\cos\gamma_\Lt )\cos\beta_\Rt +(\cos\beta_\Rt\sin\gamma_\Rt +\sin\beta_\Rt\cos\gamma_\Rt )\cos\beta_\Lt\\
=&\sin (\beta_\Lt +\gamma_\Lt )\cos\beta_\Rt +\sin (\beta_\Rt +\gamma_\Rt )\cos\beta_\Lt ,\\
V_2=&\sin(\beta_\Lt +\beta_\Rt +\gamma ),\\
W=&(\cos\beta_\Lt\cos\gamma_\Lt -\sin\beta_\Lt\sin\gamma_\Lt )\cos\beta_\Rt -(\cos\beta_\Rt\cos\gamma_\Rt -\sin\beta_\Rt\sin\gamma_\Rt )\cos\beta_\Lt \\
=&\cos (\beta_\Lt +\gamma_\Lt )\cos\beta_\Rt -\cos (\beta_\Rt +\gamma_\Rt )\cos\beta_\Lt ,\\
r=&\frac{1}{\cos\gamma_\Lt -\sin\gamma_\Lt\tan\delta_\Lt}=\frac{1}{\cos\gamma_\Rt -\sin\gamma_\Rt\tan\delta_\Rt}.
\end{aligned}
\end{equation}
We see easily from $\eqref{eq:V_W_r}$ and $\eqref{eq:Gamma_LR}$ that
\begin{equation}\label{eq:symmetry_VW}
\begin{aligned}
V_1(\beta_\Lt ,\delta_\Lt ;\beta_\Rt ,\delta_\Rt )&=V_1(\beta_\Rt ,\delta_\Rt ;\beta_\Lt ,\delta_\Lt ),\\
V_2(\beta_\Lt ;\beta_\Rt )&=V_2(\beta_\Rt ;\beta_\Lt ),\\
W(\beta_\Lt ,\delta_\Lt ;\beta_\Rt ,\delta_\Rt )&=-W(\beta_\Rt ,\delta_\Rt ;\beta_\Lt ,\delta_\Lt )
\end{aligned}
\end{equation}
along with $\eqref{eq:symmetry_r}$.
As $\psi_\Lt$ moves in the range $(-\gamma_\Rt ,\gamma_\Lt )$, $\rho_\Lt$ moves in the range
\begin{equation*}
\rho_\Lt\in\left(\gamma_\Rt +\delta_\Rt -\frac{\pi}{2}, \frac{\pi}{2}-\gamma_\Lt -\delta_\Lt \right) .
\end{equation*}

\begin{theorem}\label{thm:existence_rho}
Suppose $\alpha ,\beta_\Lt ,\beta_\Rt ,\delta_\Lt$ and $\delta_\Rt$ satisfy
conditions $(\mathrm{i})$, $(\mathrm{ii})$ and $(\mathrm{iii.a})$--$(\mathrm{iii.c})$ of Construction $\ref{const:condition}$.
Then there exists a unique solution $\rho_\Lt\in (\gamma_\Rt +\delta_\Rt -\pi /2, \pi /2-\gamma_\Lt -\delta_\Lt )$
of equation $\eqref{eq:DA_EE=BA_EE_alt_1}$, which is given by
\begin{equation*}
\rho_\Lt =\tan^{-1}\left(\frac{W}{V_1-rV_2}\right) ,
\end{equation*}
where $V_1,V_2,W$ and $r$ are given by $\eqref{eq:V_W_r}$.
Thus there exists a unique solution $\psi_\Lt\in (-\gamma_\Rt ,\gamma_\Lt )$ of equation $\eqref{eq:ADG=ABG}$, which is given by
\begin{equation*}
\psi_\Lt =\sin^{-1}(r\sin\rho_\Lt )-\rho_\Lt .
\end{equation*}
\end{theorem}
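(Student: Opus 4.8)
The plan is to treat \eqref{eq:DA_EE=BA_EE_alt_1} as what it manifestly is: an equation that is \emph{linear} in the single unknown $\tan\rho_\Lt$, with coefficients $V_1,V_2,W,r$ that are fixed real constants determined by $\alpha ,\beta_\Lt ,\beta_\Rt ,\delta_\Lt ,\delta_\Rt$ via \eqref{eq:V_W_r} and \eqref{eq:Gamma_LR}. First I would record that the prescribed interval $(\gamma_\Rt +\delta_\Rt -\pi /2,\pi /2-\gamma_\Lt -\delta_\Lt )$ is a nonempty subinterval of $(-\pi /2,\pi /2)$: its length is $\pi -(\gamma +\delta_\Lt +\delta_\Rt )$, which is positive by condition (iii.c), while both endpoints lie strictly inside $(-\pi /2,\pi /2)$ because $\gamma_\Lt ,\gamma_\Rt >0$. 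On such a subinterval $\tan$ is a strictly increasing bijection onto its image, so $\rho_\Lt$ is pinned down by the value of $\tan\rho_\Lt$; this yields uniqueness as soon as $\tan\rho_\Lt$ is forced.

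Next I would establish that the coefficient $V_1-rV_2$ does not vanish, with a definite sign. Using $\beta_\sigma +\gamma_\sigma =\pi -\alpha_\sigma$ and $\beta_\Lt +\beta_\Rt +\gamma =2\pi -\alpha$, the expressions in \eqref{eq:V_W_r} collapse to $V_2=-\sin\alpha$ and $V_1=\sin\alpha_\Lt\cos\beta_\Rt +\sin\alpha_\Rt\cos\beta_\Lt$, so that
\begin{equation*}
V_1-rV_2=\sin\alpha_\Lt\cos\beta_\Rt +\sin\alpha_\Rt\cos\beta_\Lt +r\sin\alpha .
\end{equation*}
The term $r\sin\alpha$ is positive (as $r>0$ and $\alpha\in (0,\pi )$), and I would bound the remaining two summands below using the triangle-type inequalities in (i) to conclude $V_1-rV_2>0$. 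With the coefficient nonzero, \eqref{eq:DA_EE=BA_EE_alt_1} forces $\tan\rho_\Lt =W/(V_1-rV_2)$, so $\rho_\Lt =\tan^{-1}(W/(V_1-rV_2))$ is the unique value in $(-\pi /2,\pi /2)$.

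It then remains to confirm this value actually lands in the prescribed subinterval. I would do this by invoking the equivalence of \eqref{eq:DA_EE=BA_EE_alt_1} with the geometric condition \eqref{eq:ADG=ABG}, together with the monotone bijection $\psi_\Lt\mapsto\rho_\Lt$ given by $\tan\rho_\Lt =\sin\psi_\Lt /(r-\cos\psi_\Lt )$, which (as recorded just before the statement) carries $(-\gamma_\Rt ,\gamma_\Lt )$ onto the full interval $(\gamma_\Rt +\delta_\Rt -\pi /2,\pi /2-\gamma_\Lt -\delta_\Lt )$; equivalently, I would check the endpoint estimates $\tan (\gamma_\Rt +\delta_\Rt -\pi /2)<W/(V_1-rV_2)<\tan (\pi /2-\gamma_\Lt -\delta_\Lt )$ directly. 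Finally, the $\psi_\Lt$ formula comes from inverting the second relation of \eqref{eq:gamma_rho}: clearing denominators gives $\sin\rho_\Lt (r-\cos\psi_\Lt )=\cos\rho_\Lt\sin\psi_\Lt$, i.e.\ $r\sin\rho_\Lt =\sin (\psi_\Lt +\rho_\Lt )$, whence $\psi_\Lt +\rho_\Lt =\sin^{-1}(r\sin\rho_\Lt )$ and $\psi_\Lt =\sin^{-1}(r\sin\rho_\Lt )-\rho_\Lt$; here $|r\sin\rho_\Lt |\leqslant 1$ is automatic from the identity, and one checks that $\psi_\Lt +\rho_\Lt$ stays in $[-\pi /2,\pi /2]$ so that the principal branch of $\sin^{-1}$ is the correct one.

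The main obstacle I expect is the range verification in the third step — showing that the algebraic solution $\tan^{-1}(W/(V_1-rV_2))$ lies in $(\gamma_\Rt +\delta_\Rt -\pi /2,\pi /2-\gamma_\Lt -\delta_\Lt )$ rather than merely in $(-\pi /2,\pi /2)$ — since this is where conditions (i), (ii) and (iii.a)--(iii.c) must be used in concert, and the cleanest route is likely through the established $\psi_\Lt\leftrightarrow\rho_\Lt$ bijection rather than brute-force endpoint estimates. The sign analysis of $V_1-rV_2$ is a secondary but still nontrivial trigonometric estimate.
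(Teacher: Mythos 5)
Your uniqueness step and the closing inversion giving $\psi_\Lt =\sin^{-1}(r\sin\rho_\Lt )-\rho_\Lt$ are correct and essentially match the paper: equation \eqref{eq:DA_EE=BA_EE_alt_1} is linear in $\tan\rho_\Lt$, and once $V_1-rV_2>0$ is known, strict monotonicity of $(V_1-rV_2)\tan\rho_\Lt$ pins the root down. Your direct sign argument for the coefficient is also completable, and is in fact a nice supplement to the paper (which asserts $V_1-rV_2>0$ only as a byproduct of its endpoint lemmas): with $V_2=-\sin\alpha$ and $V_1=\sin\alpha_\Lt\cos\beta_\Rt +\sin\alpha_\Rt\cos\beta_\Lt$, use $r\geqslant 1$ (clear from \eqref{eq:r}, since $\cos\gamma_\Lt -\sin\gamma_\Lt\tan\delta_\Lt\leqslant 1$) and $\sin\alpha =\sin\alpha_\Lt\cos\alpha_\Rt +\cos\alpha_\Lt\sin\alpha_\Rt$ to get
\begin{equation*}
V_1-rV_2\geqslant\sin\alpha_\Lt (\cos\beta_\Rt +\cos\alpha_\Rt )+\sin\alpha_\Rt (\cos\beta_\Lt +\cos\alpha_\Lt )>0,
\end{equation*}
each bracket being positive because $\alpha_\sigma ,\beta_\sigma\in (0,\pi )$ and $\alpha_\sigma +\beta_\sigma =\pi -\gamma_\sigma <\pi$.

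However, there is a genuine gap at exactly the step you defer: showing that the root $\tan^{-1}\bigl(W/(V_1-rV_2)\bigr)$ lies in $(\gamma_\Rt +\delta_\Rt -\pi /2,\,\pi /2-\gamma_\Lt -\delta_\Lt )$ and not merely in $(-\pi /2,\pi /2)$. Your preferred route---the equivalence of \eqref{eq:DA_EE=BA_EE_alt_1} with the geometric condition \eqref{eq:ADG=ABG} plus the bijection $\psi_\Lt\mapsto\rho_\Lt$---is circular. That bijection and that equivalence are only meaningful for $\rho_\Lt$ inside the window (equivalently, for $D$ on the open arc $B_\Lt B_\Rt$); they show that roots \emph{inside the window} correspond exactly to geometric solutions of $\angle ADG_\Rt =\angle ABG_\Rt$. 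But whether any geometric solution exists is precisely what the theorem asserts---it is the existence of the negative gadget itself---so it cannot be invoked. Nothing in your argument excludes the a priori possibility that the unique root of the linear equation falls outside the window, in which case the existence claim would fail. Closing this requires your ``brute-force'' alternative, which is the actual core of the paper's proof: Lemmas \ref{lem:Phi>0} and \ref{lem:Phi<0} evaluate $\Phi (\rho_\Lt )=(V_1-rV_2)\tan\rho_\Lt -W$ at the two endpoints and determine its signs there via a nontrivial trigonometric factorization, namely
\begin{equation*}
\Phi\left(\frac{\pi}{2}-\gamma_\Lt -\delta_\Lt\right)
=\frac{2\sin (\gamma /2)}{\sin\gamma_\Lt +\cos\gamma_\Lt\tan\delta_\Lt}\cdot (\sin\beta_\Lt -\tan\delta_\Lt\cos\beta_\Lt )\sin\left(\beta_\Rt +\frac{\gamma}{2}\right) >0,
\end{equation*}
with $\Phi (\gamma_\Rt +\delta_\Rt -\pi /2)<0$ following by the $\Lt\leftrightarrow\Rt$ antisymmetry \eqref{eq:symmetry_VW}; the intermediate value theorem then places a root strictly inside the window, and monotonicity makes it the unique one. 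Without carrying out these endpoint computations (or some substitute for them), your proposal proves uniqueness of the algebraic solution but not the existence statement of the theorem.
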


Before proving the theorem, we begin with the following two lemmas.
\begin{lemma}\label{lem:Phi>0}
Define $\Phi (\rho_\Lt ;\beta_\Lt ,\delta_\Lt ;\beta_\Rt ,\delta_\Rt )$ by
\begin{equation*}
\Phi (\rho_\Lt ;\beta_\Lt ,\delta_\Lt ;\beta_\Rt ,\delta_\Rt )=(V_1-rV_2)\tan\rho_\Lt -W.
\end{equation*}
Then we have 
\begin{equation}\label{eq:Phi_L}
\begin{aligned}
\Phi&\left(\frac{\pi}{2}-\gamma_\Lt -\delta_\Lt;\beta_\Lt ,\delta_\Lt ;\beta_\Rt ,\delta_\Rt\right)\\
&=\frac{2\sin (\gamma /2)}{\sin\gamma_\Lt +\cos\gamma_\Lt\tan\delta_\Lt}
\cdot (\sin\beta_\Lt -\tan\delta_\Lt\cos\beta_\Lt )\sin\left(\beta_\Rt +\frac{\gamma}{2}\right) >0.
\end{aligned}
\end{equation}
\end{lemma}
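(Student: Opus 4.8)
The plan is to substitute the upper endpoint value $\rho_\Lt=\pi/2-\gamma_\Lt-\delta_\Lt$ directly into $\Phi=(V_1-rV_2)\tan\rho_\Lt-W$ and reduce the whole expression to a single product of sines. First I would record the two simplifications that make the substitution clean. At this endpoint we have $\tan\rho_\Lt=\cot(\gamma_\Lt+\delta_\Lt)$, and from the formula $r=1/(\cos\gamma_\Lt-\sin\gamma_\Lt\tan\delta_\Lt)$ in $\eqref{eq:V_W_r}$ one obtains the crucial relation
\[
r\cos(\gamma_\Lt+\delta_\Lt)=\cos\delta_\Lt .
\]
Writing $s=\gamma_\Lt+\delta_\Lt$ and putting everything over $\sin s$, this converts $\Phi$ into
\[
\Phi=\frac{V_1\cos s-V_2\cos\delta_\Lt-W\sin s}{\sin s},
\]
so the problem reduces to evaluating the numerator $N:=V_1\cos s-V_2\cos\delta_\Lt-W\sin s$.

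Next I would collect the $\cos\beta_\Rt$ and $\cos\beta_\Lt$ parts of $V_1\cos s-W\sin s$ and fold each angle combination with the addition/subtraction formulas. Because $s=\gamma_\Lt+\delta_\Lt$, the coefficient of $\cos\beta_\Rt$ becomes $\sin(\beta_\Lt+\gamma_\Lt-s)=\sin(\beta_\Lt-\delta_\Lt)$ and that of $\cos\beta_\Lt$ becomes $\sin(\beta_\Rt+\gamma_\Rt+s)=\sin(\beta_\Rt+\gamma+\delta_\Lt)$, giving
\[
V_1\cos s-W\sin s=\sin(\beta_\Lt-\delta_\Lt)\cos\beta_\Rt+\sin(\beta_\Rt+\gamma+\delta_\Lt)\cos\beta_\Lt .
\]
Subtracting $V_2\cos\delta_\Lt=\sin(\beta_\Lt+\beta_\Rt+\gamma)\cos\delta_\Lt$ and expanding all three products by the product-to-sum rule, I expect a clean cancellation: the two $\sin(\beta_\Lt+\beta_\Rt+\gamma+\delta_\Lt)$ terms annihilate, and regrouping the surviving four sines by the sum-to-product identities leaves the common factor $\sin(\beta_\Lt-\delta_\Lt)$, yielding
\[
N=\sin(\beta_\Lt-\delta_\Lt)\bigl(\cos\beta_\Rt-\cos(\beta_\Rt+\gamma)\bigr)=2\sin\tfrac{\gamma}{2}\,\sin(\beta_\Lt-\delta_\Lt)\,\sin\!\bigl(\beta_\Rt+\tfrac{\gamma}{2}\bigr).
\]
Dividing by $\sin s$ and using $\sin s=\cos\delta_\Lt(\sin\gamma_\Lt+\cos\gamma_\Lt\tan\delta_\Lt)$ together with $\sin(\beta_\Lt-\delta_\Lt)=\cos\delta_\Lt(\sin\beta_\Lt-\tan\delta_\Lt\cos\beta_\Lt)$ turns $N/\sin s$ into exactly the expression in $\eqref{eq:Phi_L}$.

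Finally, for the positivity claim I would check each factor against Construction $\ref{const:condition}$. Since $\beta_\Lt+\beta_\Rt+\gamma=2\pi-\alpha$, conditions (ii) and (iii.c) give $0<\gamma<\pi$, so $\sin(\gamma/2)>0$; the same identity with condition (i) ($\beta_\Rt<\alpha+\beta_\Lt$) gives $\beta_\Rt+\gamma/2<\pi$, whence $\sin(\beta_\Rt+\gamma/2)>0$, and the symmetric inequality ($\beta_\Lt<\alpha+\beta_\Rt$) gives $\beta_\Lt<\pi$, so $0<\beta_\Lt-\delta_\Lt<\pi$ (the lower bound being (iii.b)); and $\cos\delta_\Lt>0$ together with $0<\gamma_\Lt+\delta_\Lt<\gamma+\delta_\Lt+\delta_\Rt<\pi$ makes both $\sin\beta_\Lt-\tan\delta_\Lt\cos\beta_\Lt=\sin(\beta_\Lt-\delta_\Lt)/\cos\delta_\Lt$ and $\sin\gamma_\Lt+\cos\gamma_\Lt\tan\delta_\Lt=\sin s/\cos\delta_\Lt$ positive. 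I expect the only genuinely delicate step to be the cancellation in computing $N$: one must track the three product-to-sum expansions carefully so that precisely the right terms cancel, but this is bookkeeping rather than a conceptual obstacle, and it is the relation $r\cos(\gamma_\Lt+\delta_\Lt)=\cos\delta_\Lt$ that makes the entire reduction collapse so cleanly.
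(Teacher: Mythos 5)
Your proposal is correct and takes essentially the same route as the paper: substitute the endpoint value, use the explicit formula for $r$ from $\eqref{eq:V_W_r}$ (your relation $r\cos (\gamma_\Lt +\delta_\Lt )=\cos\delta_\Lt$ is equivalent to the paper's $r\tan (\pi /2-\gamma_\Lt -\delta_\Lt )=1/(\sin\gamma_\Lt +\cos\gamma_\Lt\tan\delta_\Lt )$), reduce to a trigonometric identity, and factor it as $2\sin (\gamma /2)\sin (\beta_\Lt -\delta_\Lt )\sin (\beta_\Rt +\gamma /2)$ up to the positive factor $\cos\delta_\Lt$, finishing with the same factor-by-factor sign check. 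The only difference is bookkeeping: the paper clears the denominator, splits terms by powers of $\tan\delta_\Lt$ and expands in products $\cos\beta_\Lt\cos\beta_\Rt$, $\sin\beta_\Lt\cos\beta_\Rt$, etc., whereas you keep compound angles and use product-to-sum identities, which is a tidier but not conceptually different computation.
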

\begin{proof}[Proof of Lemma $\ref{lem:Phi>0}$.]
Using
\begin{equation*}
\tan\left(\frac{\pi}{2}-\gamma_\Lt -\delta_\Lt\right) =\frac{1}{\tan (\gamma_\Lt +\delta_\Lt )}
=\frac{1-\tan\gamma_\Lt\tan\delta_\Lt}{\tan\gamma_\Lt +\tan\delta_\Lt}
=\frac{\cos\gamma_\Lt -\sin\gamma_\Lt\tan\delta_\Lt}{\sin\gamma_\Lt +\cos\gamma_\Lt\tan\delta_\Lt},
\end{equation*}
we have 
\begin{equation*}
r\tan\left(\frac{\pi}{2}-\gamma_\Lt -\delta_\Lt\right) =\frac{1}{\sin\gamma_\Lt +\cos\gamma_\Lt\tan\delta_\Lt}.
\end{equation*}
Thus we can calculate $(\sin\gamma_\Lt +\cos\gamma_\Lt\tan\delta_\Lt )\cdot\Phi$ as
\begin{equation}\label{eq:(s+ct)Phi_L}
\begin{aligned}
(\sin\gamma_\Lt &+\cos\gamma_\Lt\tan\delta_\Lt )\cdot\Phi\left(\frac{\pi}{2}-\gamma_\Lt -\delta_\Lt;\beta_\Lt ,\delta_\Lt ;\beta_\Rt ,\delta_\Rt\right)\\
&=V_1(\cos\gamma_\Lt -\sin\gamma_\Lt\tan\delta_\Lt)-V_2-W(\sin\gamma_\Lt +\cos\gamma_\Lt\tan\delta_\Lt )\\
&=(V_1\cos\gamma_\Lt -V_2-W\sin\gamma_\Lt )-(V_1\sin\gamma_\Lt +W\cos\gamma_\Lt )\tan\delta_\Lt .
\end{aligned}
\end{equation}
To simplify the expression, we set
\begin{align*}
CC&=\cos\beta_\Lt\cos\beta_\Rt ,\quad SC=\sin\beta_\Lt\cos\beta_\Rt ,\quad CS=\cos\beta_\Lt\sin\beta_\Rt ,\quad SS=\sin\beta_\Lt\sin\beta_\Rt ,\\
c_\sigma&=\cos\delta_\sigma ,\quad s_\sigma =\sin\delta_\sigma\quad\text{for }\sigma =\Lt ,\Rt .
\end{align*}
Then $V_1,V_2$ and $W$ are written as
\begin{align*}
V_1&=CC(s_\Lt +s_\Rt )+SC s_\Lt +CS c_\Rt ,\\
V_2&=CC\sin\gamma +SC\cos\gamma +CS\cos\gamma -SS\sin\gamma ,\\
W&=CC(c_\Lt -c_\Rt )-SC s_\Lt +CS s_\Rt ,
\end{align*}
so that we have
\begin{align*}
V_1\cos\gamma_\Lt -W\sin\gamma_\Lt&=CC(c_\Lt s_\Rt +s_\Lt c_\Rt )+SC(c_\Lt^2+s_\Lt^2)+CS(c_\Lt c_\Rt -s_\Lt s_\Rt )\\
&=CC\sin\gamma +SC+CS\cos\gamma ,\\
V_1\cos\gamma_\Lt -W\sin\gamma_\Lt -V_2&=SC(1-\cos\gamma )+SS\sin\gamma\\
&=\sin\beta_\Lt\{\cos\beta_\Rt (1-\cos\gamma )+\sin\beta_\Rt\sin\gamma\} ,\\
V_1\sin\gamma_\Lt +W\cos\gamma_\Lt&=CC\{(s_\Lt + s_\Rt )s_\Lt +(c_\Lt -c_\Rt )c_\Lt\} +CS(s_\Lt c_\Rt +c_\Lt s_\Rt )\\
&=CC(1-\cos\gamma )+CS\sin\gamma\\
&=\cos\beta_\Lt\{\cos\beta_\Rt (1-\cos\gamma )+\sin\beta_\Rt\sin\gamma\} .
\end{align*}
Consequently, the right-hand side of $\eqref{eq:(s+ct)Phi_L}$ is calculated as
\begin{align*}
(\sin\beta_\Lt -\cos\beta_\Lt\tan\delta_\Lt )&\{\cos\beta_\Rt (1-\cos\gamma )+\sin\beta_\Rt\sin\gamma\}\\
&=2\sin\frac{\gamma}{2}\cdot (\sin\beta_\Lt -\cos\beta_\Lt\tan\delta_\Lt )\sin\left(\beta_\Rt +\frac{\gamma}{2}\right) ,
\end{align*}
which is positive because of $0<\gamma /2<\pi /2$, $0<\delta_\Lt <\beta_\Lt <\pi$ and $\beta_\Rt +\gamma /2<\pi$.
This completes the proof of Lemma $\ref{lem:Phi>0}$.
\end{proof}
\begin{lemma}\label{lem:Phi<0}
We have 
\begin{equation}\label{eq:Phi_R}
\begin{aligned}
\Phi&\left(\gamma_\Rt +\delta_\Rt -\frac{\pi}{2};\beta_\Lt ,\delta_\Lt ;\beta_\Rt ,\delta_\Rt\right)\\
&=-\frac{2\sin (\gamma /2)}{\sin\gamma_\Rt +\cos\gamma_\Rt\tan\delta_\Rt}
\cdot (\sin\beta_\Rt -\cos\beta_\Rt\tan\delta_\Rt )\sin\left(\beta_\Lt +\frac{\gamma}{2}\right) <0.
\end{aligned}
\end{equation}
\end{lemma}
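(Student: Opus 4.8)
The plan is to deduce Lemma~\ref{lem:Phi<0} from Lemma~\ref{lem:Phi>0} by exploiting the left--right symmetry of $V_1,V_2,W,r$ recorded in \eqref{eq:symmetry_r} and \eqref{eq:symmetry_VW}, rather than repeating the long trigonometric reduction with the $CC,SC,CS,SS$ bookkeeping. The starting observation is that $\gamma=\gamma_\Lt+\gamma_\Rt$ is invariant under the swap $\Lt\leftrightarrow\Rt$, while by \eqref{eq:Gamma_LR} this swap interchanges $\gamma_\Lt$ and $\gamma_\Rt$. Moreover, conditions $(\mathrm{i})$, $(\mathrm{ii})$ and $(\mathrm{iii.a})$--$(\mathrm{iii.c})$ of Construction~\ref{const:condition} are themselves symmetric in $\Lt$ and $\Rt$, so Lemma~\ref{lem:Phi>0} applies verbatim to the swapped parameter tuple $(\beta_\Rt,\delta_\Rt;\beta_\Lt,\delta_\Lt)$.

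First I would establish the functional identity
\begin{equation*}
\Phi(\rho_\Lt;\beta_\Lt,\delta_\Lt;\beta_\Rt,\delta_\Rt)=-\Phi(-\rho_\Lt;\beta_\Rt,\delta_\Rt;\beta_\Lt,\delta_\Lt).
\end{equation*}
This is immediate from the definition $\Phi=(V_1-rV_2)\tan\rho_\Lt-W$: by \eqref{eq:symmetry_r} and the first two lines of \eqref{eq:symmetry_VW} the combination $V_1-rV_2$ is unchanged under $\Lt\leftrightarrow\Rt$, by the third line of \eqref{eq:symmetry_VW} the term $W$ is replaced by $-W$, and replacing $\rho_\Lt$ by $-\rho_\Lt$ flips the sign of $\tan\rho_\Lt$; collecting signs yields the stated identity. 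Specializing at $\rho_\Lt=\gamma_\Rt+\delta_\Rt-\pi/2$, whose negative is exactly $\pi/2-\gamma_\Rt-\delta_\Rt$, gives
\begin{equation*}
\Phi\!\left(\gamma_\Rt+\delta_\Rt-\tfrac{\pi}{2};\beta_\Lt,\delta_\Lt;\beta_\Rt,\delta_\Rt\right)
=-\Phi\!\left(\tfrac{\pi}{2}-\gamma_\Rt-\delta_\Rt;\beta_\Rt,\delta_\Rt;\beta_\Lt,\delta_\Lt\right).
\end{equation*}

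Next I would observe that the right-hand side is precisely the quantity evaluated in Lemma~\ref{lem:Phi>0} after the substitution $\Lt\leftrightarrow\Rt$ (under which $\gamma_\Lt\mapsto\gamma_\Rt$ and $\gamma$ is fixed), so by that lemma it equals $\frac{2\sin(\gamma/2)}{\sin\gamma_\Rt+\cos\gamma_\Rt\tan\delta_\Rt}(\sin\beta_\Rt-\cos\beta_\Rt\tan\delta_\Rt)\sin(\beta_\Lt+\gamma/2)$. Attaching the outer minus sign produces exactly the closed form on the right of \eqref{eq:Phi_R}. For the strict inequality I would argue, as in Lemma~\ref{lem:Phi>0}, that every factor is positive: $\sin(\gamma/2)>0$ since $0<\gamma/2<\pi/2$; the denominator equals $\sin(\gamma_\Rt+\delta_\Rt)/\cos\delta_\Rt>0$ because $0<\gamma_\Rt+\delta_\Rt<\pi-\gamma_\Lt-\delta_\Lt<\pi$ by $(\mathrm{iii.c})$; the factor $\sin\beta_\Rt-\cos\beta_\Rt\tan\delta_\Rt=\sin(\beta_\Rt-\delta_\Rt)/\cos\delta_\Rt>0$ by $0<\delta_\Rt<\beta_\Rt<\pi$ from $(\mathrm{iii.b})$; and $\sin(\beta_\Lt+\gamma/2)>0$ since $\beta_\Lt+\gamma/2<\pi$ by \eqref{ineq:nonempty}. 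Hence the whole expression is negative, which is \eqref{eq:Phi_R}.

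The only real obstacle is bookkeeping: one must verify that the swap $\Lt\leftrightarrow\Rt$ acts correctly on every symbol — in particular that $\gamma$ remains fixed while $\gamma_\Lt,\gamma_\Rt$ trade places — and that the admissibility conditions are genuinely symmetric so that Lemma~\ref{lem:Phi>0} may be invoked on the swapped data. A direct computation mirroring the proof of Lemma~\ref{lem:Phi>0}, now using $r\tan(\gamma_\Rt+\delta_\Rt-\pi/2)=-1/(\sin\gamma_\Rt+\cos\gamma_\Rt\tan\delta_\Rt)$ and regrouping via $\cos\gamma_\Rt,\sin\gamma_\Rt$, would also work and serves as an independent check, but the symmetry route avoids redoing the entire reduction.
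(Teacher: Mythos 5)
Your proposal is correct and takes essentially the same route as the paper: the paper's own proof derives the identity $\Phi(\gamma_\Rt+\delta_\Rt-\pi/2;\beta_\Lt,\delta_\Lt;\beta_\Rt,\delta_\Rt)=-\Phi(\pi/2-\gamma_\Rt-\delta_\Rt;\beta_\Rt,\delta_\Rt;\beta_\Lt,\delta_\Lt)$ from the definition of $\Phi$ together with $\eqref{eq:symmetry_r}$ and $\eqref{eq:symmetry_VW}$, and then quotes Lemma $\ref{lem:Phi>0}$ with the suffices $\Lt$ and $\Rt$ interchanged, exactly as you do. Your write-up is in fact slightly more careful than the paper's (you spell out the antisymmetry bookkeeping and the positivity of each factor, and your argument $\pi/2-\gamma_\Rt-\delta_\Rt$ corrects what appears to be a sign typo in the paper's displayed identity, which reads $\pi/2-\gamma_\Rt+\delta_\Rt$).
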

\begin{proof}[Proof of Lemma $\ref{lem:Phi<0}$.]
We see immediately from the definition of $\Phi$, $\eqref{eq:symmetry_r}$, $\eqref{eq:V_W_r}$ and $\eqref{eq:symmetry_VW}$ that
\begin{equation*}
\Phi\left(\gamma_\Rt +\delta_\Rt -\frac{\pi}{2};\beta_\Lt ,\delta_\Lt ;\beta_\Rt ,\delta_\Rt\right)
=-\Phi\left(\frac{\pi}{2}-\gamma_\Rt +\delta_\Rt ;\beta_\Rt ,\delta_\Rt ;\beta_\Lt ,\delta_\Lt\right) ,
\end{equation*}
and the right-hand side of $\eqref{eq:Phi_R}$ is obtained by interchanging the suffices $\Lt$ and $\Rt$ in $\eqref{eq:Phi_L}$.
\end{proof}
\begin{proof}[Proof of Theorem $\ref{thm:existence_rho}$.]
By the intermediate value theorem, we see from the above two lemmas that
there exists $\rho_\Lt\in (\gamma_\Rt +\delta_\Rt -\pi /2, \pi /2-\gamma_\Lt -\delta_\Lt )$ such that $\Phi =0$, 
which is a desired solution of $\eqref{eq:DA_EE=BA_EE_alt_1}$.
Also, we see that $V_1-rV_2>0$, so that $(V_1-rV_2)\tan\rho_\Lt$ is monotonically increasing with respect to $\rho_\Lt$,
from which follows the uniqueness of the solution $\rho_\Lt$.
\end{proof}
\begin{corollary}\label{cor:negative_3_engaging}
For any positive $3$D gadget with $(\alpha ,\beta_\Lt ,\beta_\Rt ,\delta_\Lt ,\delta_\Rt )$
satisfying conditions $(\mathrm{i})$, $(\mathrm{ii})$ and $(\mathrm{iii.a})$--$(\mathrm{iii.c})$ of Construction $\ref{const:condition}$, 
we can apply Construction $\ref{const:negative_3}$ to obtain a unique negative $3$D gadget
with the same net and prescribed outgoing pleats as the positive one in the development.
This negative $3$D gadget engages with the original positive one.
\end{corollary}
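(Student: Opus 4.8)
The plan is to obtain the corollary as a direct consequence of Theorem \ref{thm:existence_rho} together with the foldability verifications carried out immediately before it. First I would observe that the only step in Construction \ref{const:negative_3} that is not purely deterministic is step $(1)$, where we must locate the point $D$ on the circular arc $B_\Lt B_\Rt$ by specifying $\psi_\Lt$, or equivalently $\rho_\Lt$. Theorem \ref{thm:existence_rho} supplies exactly one such $\rho_\Lt \in (\gamma_\Rt + \delta_\Rt - \pi/2,\ \pi/2 - \gamma_\Lt - \delta_\Lt)$ solving equation \eqref{eq:DA_EE=BA_EE_alt_1}, and hence exactly one admissible $D$; all subsequent steps $(2)$--$(5)$ are then uniquely determined by $D$. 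This yields at once both the existence and the uniqueness asserted in the corollary.

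Next I would confirm that the resulting crease pattern folds correctly, so that the object produced is a genuine negative $3$D gadget. Proposition \ref{prop:EBG=EBP} shows that $\angle E_\sigma B_\sigma G_\sigma = \angle E_\sigma B_\sigma P_\sigma$ for every position of $D$, from which the flat-foldability around $E_\sigma$ and $P_\sigma$ follows (verified exactly as in Section \ref{subsec:negative_new_1}). With $D$ chosen to satisfy the defining equation \eqref{eq:ADG=ABG}, the two expressions \eqref{eq:kBP_1} and \eqref{eq:kBP_2} for $\angle k_\sigma B P_\sigma$ agree, which is precisely the condition that the supporting triangle $\triangle AG_\Lt G_\Rt$ closes up consistently on the back side. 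Thus the chosen $D$ assembles into an honest negative $3$D gadget rather than merely an abstract solution of \eqref{eq:DA_EE=BA_EE_alt_1}.

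It remains to identify the net and the outgoing pleats with those of the positive gadget and to deduce engagement. The net is determined solely by $(\alpha, \beta_\Lt, \beta_\Rt)$ through the side creases $j_\sigma$, $k_\sigma$ and the ridge, so it coincides with that of the positive gadget; likewise the outgoing pleats are the prescribed pleats $(\ell_\Lt, m_\Lt)$ and $(\ell_\Rt, m_\Rt)$ built in Construction \ref{const:condition}, which are shared verbatim with the positive gadget. Since the two gadgets therefore arise from the same development, the engagement property follows from our orientation convention; concretely, equation \eqref{eq:ADG=ABG}, namely $\angle ADG_\Rt = \angle ABG_\Rt$, guarantees that folding up the ridge $AD$ of the development places it in the same position as the ridge $AB$ of the positive extrusion, so that the negative gadget fills exactly the cavity obtained by sinking the positive one.

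I do not expect a genuine obstacle here, since the substantive analytic content — the existence and uniqueness of $\rho_\Lt$ — is already discharged by Theorem \ref{thm:existence_rho} via Lemmas \ref{lem:Phi>0} and \ref{lem:Phi<0} and the intermediate value theorem. The only point requiring care is the bookkeeping that ties the solution $\rho_\Lt$ back to a legitimately foldable gadget and to the engagement statement, and this amounts to checking that the selected $D$ makes \eqref{eq:kBP_1} and \eqref{eq:kBP_2} coincide and then invoking the convention, rather than carrying out any new computation.
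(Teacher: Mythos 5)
Your proposal is correct and takes essentially the same route as the paper, which states this corollary without a separate proof precisely because it is the immediate combination of Theorem $\ref{thm:existence_rho}$ (existence and uniqueness of $\rho_\Lt$, hence of the point $D$, all later steps of Construction $\ref{const:negative_3}$ being deterministic) with the foldability discussion preceding it. One small correction to your second paragraph: the agreement of $\eqref{eq:kBP_1}$ and $\eqref{eq:kBP_2}$ is automatic for \emph{every} position of $D$ on the arc (both sides equal $\beta_\sigma +\gamma_\sigma +\rho_\sigma -\pi /2$ identically, by Proposition $\ref{prop:EBG=EBP}$), so it is not a consequence of choosing $D$; the condition that actually pins down $D$ is the ridge-overlap equation $\eqref{eq:ADG=ABG}$, which Theorem $\ref{thm:existence_rho}$ solves -- your argument survives this misattribution, since you do correctly invoke $\eqref{eq:ADG=ABG}$ as the defining equation for the engagement step.
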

If we consider the horizontally flipped crease pattern of the negative $3$D gadget
resulting in Corollary $\ref{cor:negative_3_engaging}$, we have the following result, which solves Problem $\ref{prob:existence_negative}$.
\begin{corollary}
For any positive $3$D gadget with $(\alpha ,\beta_\Lt ,\beta_\Rt ,\delta_\Lt ,\delta_\Rt )$
satisfying conditions $(\mathrm{i})$, $(\mathrm{ii})$ and $(\mathrm{iii.a})$--$(\mathrm{iii.c})$ of Construction $\ref{const:condition}$,
we can apply Construction $\ref{const:negative_3}$ to obtain a unique negative $3$D gadget with the same outgoing pleats such that
the top and two side faces are negatively extruded, or in other words, sunk to the opposite direction.
\end{corollary}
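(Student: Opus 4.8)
The plan is to obtain this corollary as a direct reinterpretation of Corollary \ref{cor:negative_3_engaging} through the horizontal flip, so that essentially all of the analytic content has already been supplied by Theorem \ref{thm:existence_rho}. First I would fix the given positive $3$D gadget with parameters $(\alpha ,\beta_\Lt ,\beta_\Rt ,\delta_\Lt ,\delta_\Rt )$ and invoke the Convention of Section \ref{sec:introduction}, together with the note that to build the negative corresponding to a prescribed positive gadget one must start from the horizontally flipped development. I would therefore horizontally flip the positive gadget's development. The flip exchanges the roles of $\Lt$ and $\Rt$, sending $\beta_\Lt\leftrightarrow\beta_\Rt$ and $\delta_\Lt\leftrightarrow\delta_\Rt$ while fixing $\alpha$; since conditions (i), (ii) and (iii.a)--(iii.c) of Construction \ref{const:condition} are all symmetric under $\Lt\leftrightarrow\Rt$, the flipped development again satisfies them, so Construction \ref{const:negative_3} applies verbatim.

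Next I would apply Corollary \ref{cor:negative_3_engaging} to this flipped development. This yields a \emph{unique} negative $3$D gadget with the same net and the prescribed outgoing pleats $(\ell_\Lt ,m_\Lt )$, $(\ell_\Rt ,m_\Rt )$ as the flipped positive one, engaging with it; the uniqueness is inherited from the uniqueness of the solution $\rho_\Lt$ in Theorem \ref{thm:existence_rho}. Then I would flip the resulting crease pattern back horizontally. Because a horizontal flip is an orientation-reversing isometry, it preserves every edge length and every face angle, undoes the $\Lt\leftrightarrow\Rt$ exchange, and---combined with the Convention that a negative gadget is read from the back side while a positive one is read from the front---interchanges mountain and valley folds in a manner consistent with the change of viewing side. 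Consequently the flipped-back gadget sits in the original orientation, carries exactly the outgoing pleats $(\ell_\Lt ,m_\Lt )$ and $(\ell_\Rt ,m_\Rt )$ of the original positive gadget (this is precisely the feature that distinguishes the present solution from the two known constructions of Section \ref{sec:negative_known}), and extrudes the top face and the two side faces in the opposite direction, i.e.\ sunk below the ambient paper.

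The main obstacle will be the bookkeeping in this last step: I must verify that the \emph{engagement} property furnished by Corollary \ref{cor:negative_3_engaging} translates, after the horizontal flip, into genuine negative extrusion (a downward sinking of the faces) rather than a mere mirror image of the positive gadget, and that the mountain/valley assignment of Table \ref{tbl:assignment_negative_3} remains self-consistent through the double reversal of orientation (the flip itself together with the switch of viewing side dictated by the Convention). I expect this to reduce to checking that the height coordinate of the ridge $AB$ and of the side faces changes sign under the combined operation, which follows from the fact that engagement means the negative gadget folds the shared faces to the reverse side of the paper where the positive one folds them to the front. Uniqueness then requires no further argument: a bijective isometry can neither create nor destroy gadgets, so the unique negative gadget of Corollary \ref{cor:negative_3_engaging} produces exactly one flipped-back gadget, which is the asserted unique solution to Problem \ref{prob:existence_negative}.
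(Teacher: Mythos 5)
Your proposal is correct and takes essentially the same approach as the paper: the paper obtains this corollary in a single line, as the horizontally flipped crease pattern of the negative gadget furnished by Corollary \ref{cor:negative_3_engaging}, so that all analytic content (existence and uniqueness) is already contained in Theorem \ref{thm:existence_rho}, exactly as in your argument. Your only refinement---flipping the development first (which swaps $\beta_\Lt\leftrightarrow\beta_\Rt$ and $\delta_\Lt\leftrightarrow\delta_\Rt$ and preserves conditions (i), (ii) and (iii.a)--(iii.c)) and flipping back at the end---is the fixed-gadget precise form of the same argument, matching the paper's convention that a negative gadget corresponding to a given positive one must be built from the horizontally flipped development.
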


Now we set
\begin{align*}
\beta_+&=\beta_\Lt +\beta_\Rt ,\quad\beta_-=\beta_\Rt -\beta_\Lt ,\quad\beta'_+=\beta_++\frac{\gamma}{2}\quad\text{and}\\
\quad\omega&=\gamma_\Lt -\frac{\gamma}{2}=\frac{\gamma}{2}-\gamma_\Rt .
\end{align*}
Then a straightforward calculation gives an alternative expression of $V_1,V_2$ and $W$ as
\begin{equation}\label{eq:V_W_alt}
\begin{aligned}
V_1&=\left\{\sin\left(\beta_++\frac{\gamma}{2}\right) +\cos\beta_-\sin\frac{\gamma}{2}\right\}\cos\omega +\sin\beta_-\sin\frac{\gamma}{2}\sin\omega\\
&=\left(\sin\beta'_++\cos\beta_-\sin\frac{\gamma}{2}\right)\cos\omega +\sin\beta_-\sin\frac{\gamma}{2}\sin\omega ,\\
V_2&=\sin (\beta_++\gamma )=\sin\left(\beta'_++\frac{\gamma}{2}\right),\\
W&=\sin\beta_-\sin\frac{\gamma}{2}\cos\omega -\left\{\sin\left(\beta_++\frac{\gamma}{2}\right)+\cos\beta_-\sin\frac{\gamma}{2}\right\}\sin\omega\\
&=\sin\beta_-\sin\frac{\gamma}{2}\cos\omega -\left(\sin\beta'_++\cos\beta_-\sin\frac{\gamma}{2}\right)\sin\omega .
\end{aligned}
\end{equation}
For an alternative expression of $r$, we observe from $\eqref{eq:r}$ that $1/r$ is written in two ways as
\begin{equation}\label{eq:1/r}
\cos\left(\frac{\gamma}{2}+\omega\right)-\sin\left(\frac{\gamma}{2}+\omega\right)\cdot\tan\delta_\Lt
=\frac{1}{r}=\cos\left(\frac{\gamma}{2}-\omega\right)-\sin\left(\frac{\gamma}{2}-\omega\right)\cdot\tan\delta_\Rt .
\end{equation}
Multiplying the left- and the right-hand side of $\eqref{eq:1/r}$ by $\sin (\gamma /2-\omega)$ and $\sin (\gamma /2+\omega )$ respectively,
and adding the resulting terms together, we have
\begin{equation*}
\frac{1}{r}\cdot 2\sin\frac{\gamma}{2}\cos\omega =\sin\gamma +\frac{1}{2}(\cos\gamma -\cos 2\omega )(\tan\delta_\Lt +\tan\delta_\Rt ),
\end{equation*}
so that
\begin{equation}\label{eq:r_alt}
r=\frac{2\sin (\gamma /2)\cos\omega}{\sin\gamma +\{(1+\cos\gamma )/2-\cos^2\omega\} (\tan\delta_\Lt +\tan\delta_\Rt )}.
\end{equation}
Consequently, we can use $\eqref{eq:V_W_alt}$ and $\eqref{eq:r_alt}$ to obtain an expression of $\tan\rho_\Lt$ without arc tangents as follows.
\begin{proposition}
In Theorem $\ref{thm:existence_rho}$, we can express $\tan\rho_\Lt$ in terms of $t=\tan\omega$ as
\begin{align*}
&\tan\rho_\Lt =\frac{W}{V_1-rV_2}\\
&=\frac{\displaystyle\sin\beta_--t\cdot\left(\frac{\sin\beta'_+}{\sin (\gamma /2)}+\cos\beta_-\right)}
{\displaystyle\left(\frac{\sin\beta'_+}{\sin (\gamma /2)}+\cos\beta_-\right)+t\sin\beta_-
-\frac{2\sin (\beta'_++\gamma /2)}{\sin\gamma +\{(1+\cos\gamma )/2-1/(1+t^2)\} (\tan\delta_\Lt +\tan\delta_\Rt )}},
\end{align*}
where $t$ is given by
\begin{equation*}
t=\tan\omega =\frac{\tan\delta_\Rt -\tan\delta_\Lt}{2+(\tan\delta_\Lt +\tan\delta_\Rt )/\tan (\gamma /2)}.
\end{equation*}
\end{proposition}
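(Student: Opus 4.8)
The plan is to start from the identity $\tan\rho_\Lt = W/(V_1 - rV_2)$ furnished by Theorem \ref{thm:existence_rho}, substitute the alternative expressions \eqref{eq:V_W_alt} for $V_1, V_2, W$ together with \eqref{eq:r_alt} for $r$, and then divide both the numerator and the denominator by the single common factor $\sin(\gamma/2)\cos\omega$ so as to replace every trigonometric function of $\omega$ by $t = \tan\omega$. Since $\tan\rho_\Lt$ is a ratio, this factor cancels, and the outcome should be exactly the claimed expression.

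First I would establish the formula for $t$. Because $\gamma_\Lt = \gamma/2 + \omega$ and $\gamma_\Rt = \gamma/2 - \omega$, the two representations of $1/r$ in \eqref{eq:1/r} must coincide. Expanding $\cos(\gamma/2 \pm \omega)$ and $\sin(\gamma/2 \pm \omega)$, the common $\cos(\gamma/2)\cos\omega$ terms cancel upon subtraction; dividing through by $\cos\omega$ then leaves a linear equation in $t$ whose coefficients involve only $\sin(\gamma/2)$, $\cos(\gamma/2)$, $\tan\delta_\Lt$ and $\tan\delta_\Rt$. Solving for $t$ and dividing numerator and denominator by $\sin(\gamma/2)$ gives precisely $t = (\tan\delta_\Rt - \tan\delta_\Lt)/(2 + (\tan\delta_\Lt + \tan\delta_\Rt)/\tan(\gamma/2))$.

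Next I would perform the division by $\sin(\gamma/2)\cos\omega$ term by term. Applied to the form \eqref{eq:V_W_alt} of $W$, the first summand becomes $\sin\beta_-$ while the second, after extracting $\tan\omega = t$, becomes $-t(\sin\beta'_+/\sin(\gamma/2) + \cos\beta_-)$; this is the claimed numerator. For $V_1$ the same division yields $(\sin\beta'_+/\sin(\gamma/2) + \cos\beta_-) + t\sin\beta_-$, matching the first two terms of the denominator. For the remaining $rV_2$, I would insert \eqref{eq:r_alt} and $V_2 = \sin(\beta'_+ + \gamma/2)$: the factor $2\sin(\gamma/2)\cos\omega$ in the numerator of $r$ cancels against the dividing factor, leaving $2\sin(\beta'_+ + \gamma/2)$ over the denominator of $r$, and rewriting the lone $\cos^2\omega = 1/(1 + \tan^2\omega) = 1/(1+t^2)$ in that denominator produces the final term of the asserted expression.

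The computation is essentially bookkeeping, and the point requiring the most care is that the single factor $\sin(\gamma/2)\cos\omega$ cancels uniformly from $W$, $V_1$ and $rV_2$. This works precisely because the alternative forms \eqref{eq:V_W_alt} of $V_1$ and $W$ are already arranged as linear combinations of $\cos\omega$ and $\sin\omega$, and because the numerator of $r$ in \eqref{eq:r_alt} carries exactly the matching factor $2\sin(\gamma/2)\cos\omega$. The only non-mechanical identity invoked is $\cos^2\omega = 1/(1+t^2)$, which converts the isolated $\cos^2\omega$ in the denominator of $r$ into a rational function of $t$ and thereby removes every remaining appearance of $\omega$ from the final formula.
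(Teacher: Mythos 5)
Your proposal is correct and follows essentially the same route as the paper, which obtains the formula precisely by substituting $\eqref{eq:V_W_alt}$ and $\eqref{eq:r_alt}$ into $W/(V_1-rV_2)$ and simplifying; your division by the common factor $\sin(\gamma/2)\cos\omega$ (legitimate since $|\omega|<\gamma/2<\pi/2$) together with $\cos^2\omega=1/(1+t^2)$ is exactly the intended bookkeeping, and your derivation of $t$ by equating the two expressions of $1/r$ in $\eqref{eq:1/r}$ is a valid way to obtain the stated formula for $\tan\omega$.
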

Then the following result is immediate by a straightforward calculation.
\begin{proposition}\label{prop:rho_alt}
Suppose $\delta_\Lt =\delta_\Rt =0$.
Then we have
\begin{equation*}
\tan\rho_\Lt =\frac{\sin\beta_-}{\cos\beta_--\cos\beta'_+/\cos (\gamma /2)}.
\end{equation*}
In particular, if $\beta_\Lt +\beta_\Rt =\pi$, then we have $\rho_\Lt =(\beta_\Rt -\beta_\Lt )/2$.
\end{proposition}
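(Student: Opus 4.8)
The plan is to specialize the closed-form expression for $\tan\rho_\Lt$ obtained in the immediately preceding Proposition to the case $\delta_\Lt=\delta_\Rt=0$ and then simplify the resulting trigonometric fraction. First I would observe that the hypothesis forces $t=\tan\omega=0$, since the numerator $\tan\delta_\Rt-\tan\delta_\Lt$ in the formula for $t$ vanishes, and simultaneously $\tan\delta_\Lt+\tan\delta_\Rt=0$. Feeding these two facts into the expression for $\tan\rho_\Lt$ collapses the numerator to $\sin\beta_-$ and reduces the denominator to $\frac{\sin\beta'_+}{\sin(\gamma/2)}+\cos\beta_--\frac{2\sin(\beta'_++\gamma/2)}{\sin\gamma}$.

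The key step is simplifying this denominator. I would rewrite the last term using $\sin\gamma=2\sin(\gamma/2)\cos(\gamma/2)$ as $\frac{\sin(\beta'_++\gamma/2)}{\sin(\gamma/2)\cos(\gamma/2)}$, then expand $\sin(\beta'_++\gamma/2)=\sin\beta'_+\cos(\gamma/2)+\cos\beta'_+\sin(\gamma/2)$. The $\sin\beta'_+$ contribution then cancels the term $\frac{\sin\beta'_+}{\sin(\gamma/2)}$ exactly, leaving the denominator equal to $\cos\beta_--\frac{\cos\beta'_+}{\cos(\gamma/2)}$. This is precisely the first formula asserted in the Proposition.

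For the special case $\beta_\Lt+\beta_\Rt=\pi$, I would use $\beta'_+=\beta_++\gamma/2=\pi+\gamma/2$ to obtain $\cos\beta'_+=-\cos(\gamma/2)$, so that $\cos\beta'_+/\cos(\gamma/2)=-1$. The denominator then becomes $1+\cos\beta_-$, and the half-angle identity $\frac{\sin\beta_-}{1+\cos\beta_-}=\tan(\beta_-/2)$ gives $\tan\rho_\Lt=\tan(\beta_-/2)$. Since $\tan$ is injective on the admissible interval for $\rho_\Lt$ (which has length $\pi-\gamma<\pi$ when $\delta_\Lt=\delta_\Rt=0$), this yields $\rho_\Lt=\beta_-/2=(\beta_\Rt-\beta_\Lt)/2$, as claimed.

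Every step here is an elementary trigonometric manipulation, so there is no genuine obstacle; the one point deserving mild attention is verifying that the two $\sin\beta'_+$ contributions cancel exactly in the denominator, which is precisely what produces the clean form of the stated result. A secondary bookkeeping check is confirming that $\beta_-/2$ falls in the admissible range for $\rho_\Lt$ so that the equality of tangents upgrades to equality of angles rather than merely equality modulo $\pi$.
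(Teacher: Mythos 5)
Your proposal is correct and follows exactly the route the paper intends: the paper states this result is ``immediate by a straightforward calculation'' from the preceding proposition, and your specialization $t=0$, $\tan\delta_\Lt+\tan\delta_\Rt=0$, followed by the expansion of $\sin(\beta'_++\gamma/2)$ that cancels the $\sin\beta'_+/\sin(\gamma/2)$ terms, is precisely that calculation. Your attention to the range check (both $\rho_\Lt$ and $\beta_-/2$ lying in an interval where $\tan$ is injective, using condition (i) to bound $|\beta_\Rt-\beta_\Lt|$) is a point the paper leaves implicit, and it is handled correctly.
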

As in the previous sections, the triangular flap $\triangle AG_\Lt G_\Rt$ may interfere with adjacent $3$D gadgets within the length
$K_{\nega ,\sigma}(B)$ on the side of $\sigma$, which are calculated as
\begin{equation}\label{eq:interference_negative_3}
\begin{aligned}
K_{\nega ,\Lt}(B)&=\norm{DG_\Lt}=\frac{\norm{AB}}{\cos (\psi_\Lt +\rho_\Lt )/\tan (\gamma_\Lt /2)+\sin (\psi_\Lt +\rho_\Lt )},\\
K_{\nega ,\Rt}(B)&=\norm{DG_\Rt}=\frac{\norm{AB}}{\cos (\psi_\Lt +\rho_\Lt )/\tan (\gamma_\Rt /2)-\sin (\psi_\Lt +\rho_\Lt )},
\end{aligned}
\end{equation}
where we used 
\begin{equation*}
\angle ADG_\Lt =\pi /2+\psi_\Lt +\rho_\Lt ,\quad\angle ADG_\Rt =\pi /2-\psi_\Lt -\rho_\Lt .
\end{equation*}
In particular, if $\beta_\Lt =\beta_\Rt$ and $\delta_\Lt =\delta_\Rt$, then $\psi_\Lt =\rho_\Lt =0$, so that we have
\begin{equation*}
K_{\nega ,\Lt}=K_{\nega ,\Rt}=\norm{AB}\tan\frac{\gamma}{4}.
\end{equation*}
which do not depend on either $\beta_\sigma$ or $\delta_\sigma$.
\begin{remark}\rm
As with our first construction, we see from Lemmas $\ref{lem:Phi>0}$ and $\ref{lem:Phi<0}$
that we can also include the case $\alpha =\beta_\Lt +\beta_\Rt$ or equivalently $\beta_\Lt +\beta_\Rt +\gamma /2=\pi$ in Construction $\ref{const:negative_3}$
because $\Phi$ in those lemmas do not vanish and thus $\phi_\Lt$ takes a value strictly between $0$ and $\gamma$.
In this case the resulting negative gadget is flat, and we have the flat-foldability condition around $B_\sigma$
\begin{equation*}
\angle AB_\sigma G_\sigma +\angle E_\sigma B_\sigma P_\sigma +\angle k_\sigma B_\sigma \ell_\sigma =\pi
=\angle E_\sigma B_\sigma G_\sigma +\angle P_\sigma B_\sigma \ell_\sigma +\angle AB_\sigma k_\sigma ,
\end{equation*}
which is rewritten as
\begin{equation}\label{eq:flat-foldability_negative_flat}
\angle AB_\sigma E_\sigma +\angle k_\sigma B_\sigma \ell_\sigma =\pi =\angle AB_\sigma k_\sigma +\angle E_\sigma B_\sigma \ell_\sigma
\end{equation}
by Proposition $\ref{prop:EBG=EBP}$.
On the other hand, there is a unique positive flat gadget with the same $\alpha$, $\beta_\sigma$ and $\delta_\sigma$ for both $\sigma =\Lt ,\Rt$,
where the assignment of mountain and valley folds is given in Table $\ref{tbl:assignment_positive_flat}$.
For this positive gadget, it follows from \cite{Doi20}, Section $3$ that $\angle B_\sigma AE_\sigma =\zeta_\sigma$ with $\zeta_\Lt +\zeta_\Rt =\gamma /2$,
where $\zeta_\sigma$ is given geometrically in Definition $3.2$ and numerically in Proposition $3.6$ of \cite{Doi20}.
This also satisfies the flat-foldability condition $\eqref{eq:flat-foldability_negative_flat}$ around $B_\sigma$.
Thus we must have $\angle B_\sigma AE_\sigma =\zeta_\sigma$ for the \emph{negative} flat gadget,
which does not hold if $\alpha <\beta_\Lt +\beta_\Rt$ where there are an infinite number of positive $3$D gadgets.
\end{remark}
\addtocounter{theorem}{1}
\begin{table}[h]
\begin{tabular}{c|c}
mountain folds&$j_\sigma ,\ell_\sigma ,AB_\sigma ,B_\sigma E_\sigma$\\ \hline
valley folds&$k_\sigma ,m_\sigma ,AE_\sigma ,E_\Lt E_\Rt$
\end{tabular}\vspace{0.5cm}
\caption{Assignment of mountain and valley folds to a unique positive flat gadget for $\alpha =\beta_\Lt +\beta_\Rt$}
\label{tbl:assignment_positive_flat}
\end{table}

\subsection{Interferences caused by negative $3$D gadgets}\label{subsec:interferences}
Here we only consider flat-back positive $3$D gadgets given in \cite{Doi19} and \cite{Doi20}, and negative ones by our third construction,
but other cases are similar.
As discussed in \cite{Doi19}, Section $5$ and \cite{Doi20}, Section $5$, for two positive $3$D gadgets sharing a side face,
an interference may occur in the bottom edge.
However, since our negative $3$D gadgets have a supporting triangle which does not necessarily overlap with their side faces,
an interference between two adjacent negative ones may occur at the intersection point of the extentions of the bases of their supporting triangles.

Suppose two negative $3$D gadgets share a side face $A_1 A_2 B_2 B_1$, where $A_1$ and $A_2$ may be identical, and $B_1$ is located to the left of $B_2$.
We will add subscripts `1' and `2' to the quantities and points belonging to the left and the right gadget respectively.
Set $\theta_{1,\Rt}=\angle B_2 B_1 G_{1,\Rt}$ and $\theta_{2,\Lt}=\angle B_1 B_2 G_{2,\Lt}$ in the resulting extrusion,
which are calculated using $\eqref{eq:kBP_1}$ as
\begin{align*}
\theta_{1,\Rt}=\beta_{1,\Rt}+\gamma_{1,\Rt}-\rho_{1,\Lt}-\pi /2,\quad\theta_{2,\Lt}=\beta_{2,\Lt}+\gamma_{2,\Lt}+\rho_{2,\Lt}-\pi /2.
\end{align*}
Then the minimum length $K_{\nega ,\min}(B_1 B_2)$ of $\norm{B_1 B_2}$ such that no interference occurs is given by
\begin{align*}
K_{\nega ,\min}(B_1 B_2)=
\begin{dcases}
0&\text{if }\theta_{1,\Rt}+\theta_{2,\Lt}\geqslant\pi ,\\
\min\{K_{\nega ,\Rt}(B_1)\sin\theta_{1,\Rt},K_{\nega ,\Lt}(B_2)\sin\theta_{2,\Lt}\}&\\
\times\left(\frac{1}{\tan\theta_{1,\Rt}}+\frac{1}{\tan\theta_{2,\Lt}}\right)&\text{if }\theta_{1,\Rt}+\theta_{2,\Lt}<\pi ,
\end{dcases}
\end{align*}
where $K_{\nega ,\Rt}(B_1)$ and $K_{\nega ,\Lt}(B_2)$ are given by $\eqref{eq:interference_negative_3}$.
Hence there are no interferences between the two negative $3$D gadgets if and only if $\norm{B_1 B_2}\geqslant K_{\nega ,\min}(B_1 B_2)$.
Of course, we have to consider another minimum of $\norm{B_1 B_2}$ which is given by the length of $B_1 B_2$
when we set $A_1=A_2$ for $\beta_{1,\Lt}+\beta_{2,\Rt}\geqslant\pi$.

Interferences caused by a flat-back positive $3$D gadget given in \cite{Doi20} and a negative one which share a side face, may be rather complicated to treat.
\begin{example}\label{ex:prism_VI}\rm
As an example, let us consider the resuting extruded prism of the crease pattern shown in Figure $\ref{fig:CP_prism_VI}$,
where the divided parts of the top face and the side faces are lightly and dark shaded respectively.
In the figure, we set
\begin{align*}
\alpha_1&=\alpha_2=\pi /3,\quad\beta_{1,\Lt}=\beta_{1,\Rt}=\beta_{2,\Lt}=\beta_{2,\Rt}=\pi /2,\quad\text{so that}\quad\gamma_1=\gamma_2=2\pi /3,\\
\delta_{1,\Lt}&=\delta_{1,\Rt}=\delta_{2,\Lt}=\delta_{2,\Rt}=0,\quad\psi_{1,\Lt}=\psi_{2,\Lt}=0,\quad\text{and}\\
\norm{A_1 B_{1,\Rt}}&=\norm{A_1 B_{2,\Lt}},\quad\text{so that}\quad A_1 B_{2,\Lt}A_2 B_{1,\Rt}\text{ is a square}.
\end{align*}
Then we see that there are no interferences.
However, let us see how interferences occur if we make $\norm{A_1 B_{2,\Lt}}$ (and accordingly $\norm{A_2 B_{1,\Rt}}$) shorter.
Suppose $\norm{A_1 B_1}=1$, so that $\norm{A_2 B_2}=1$.
Then we have the following.
\begin{itemize}
\item The flap formed by kite $A_1 B_{1,\Rt}G_{1,\Rt}D_1$ of the positive $3$D gadget collides with $A_2$
if $\norm{A_2 B_{1,\Rt}}<\norm{B_{1,\Rt}G_{1,\Rt}}=1/\sqrt{3}$.
\item The flap formed by kite $A_2 B_{2,\Lt}G_{2,\Lt}D_2$ of the negative $3$D gadget collides with $j_{1,\Lt}$
if $\norm{A_1 B_{2,\Lt}}<\norm{B_{2,\Lt}G_{2,\Lt}}=1/\sqrt{3}$.
\item The crease which is the reflection of $j_{1,\Lt}$ across $\ell_{2,\Lt}$
collides with the flap formed by kite $A_2 B_{2,\Lt}G_{2,\Lt}D_2$ of the negative $3$D gadget
if $\norm{A_1 B_{2,\Lt}}<\norm{B_{2,\Lt}G_{2,\Lt}}=1/\sqrt{3}$.
\item The crease which is the reflection of $j_{2,\Rt}$ across $\ell_{1,\Rt}$
collides with the flap formed by kite $A_2 B_{2,\Lt}G_{2,\Lt}D_2$ of the negative $3$D gadget
if $\norm{A_2 B_{1,\Rt}}<\norm{B_{1,\Rt}G_{1,\Rt}}=1/\sqrt{3}$.
\end{itemize}
Thus in any of the above four cases, there are no interferences if and only if $\norm{A_1 B_2}\geqslant 1/\sqrt{3}$.
We show the crease pattern for $\norm{A_1 B_2}=1/\sqrt{3}$ in Figure $\ref{fig:CP_prism_VI_min}$.
\end{example}
\begin{figure}[htbp]
\addtocounter{theorem}{1}
\centering\includegraphics[width=0.75\hsize]{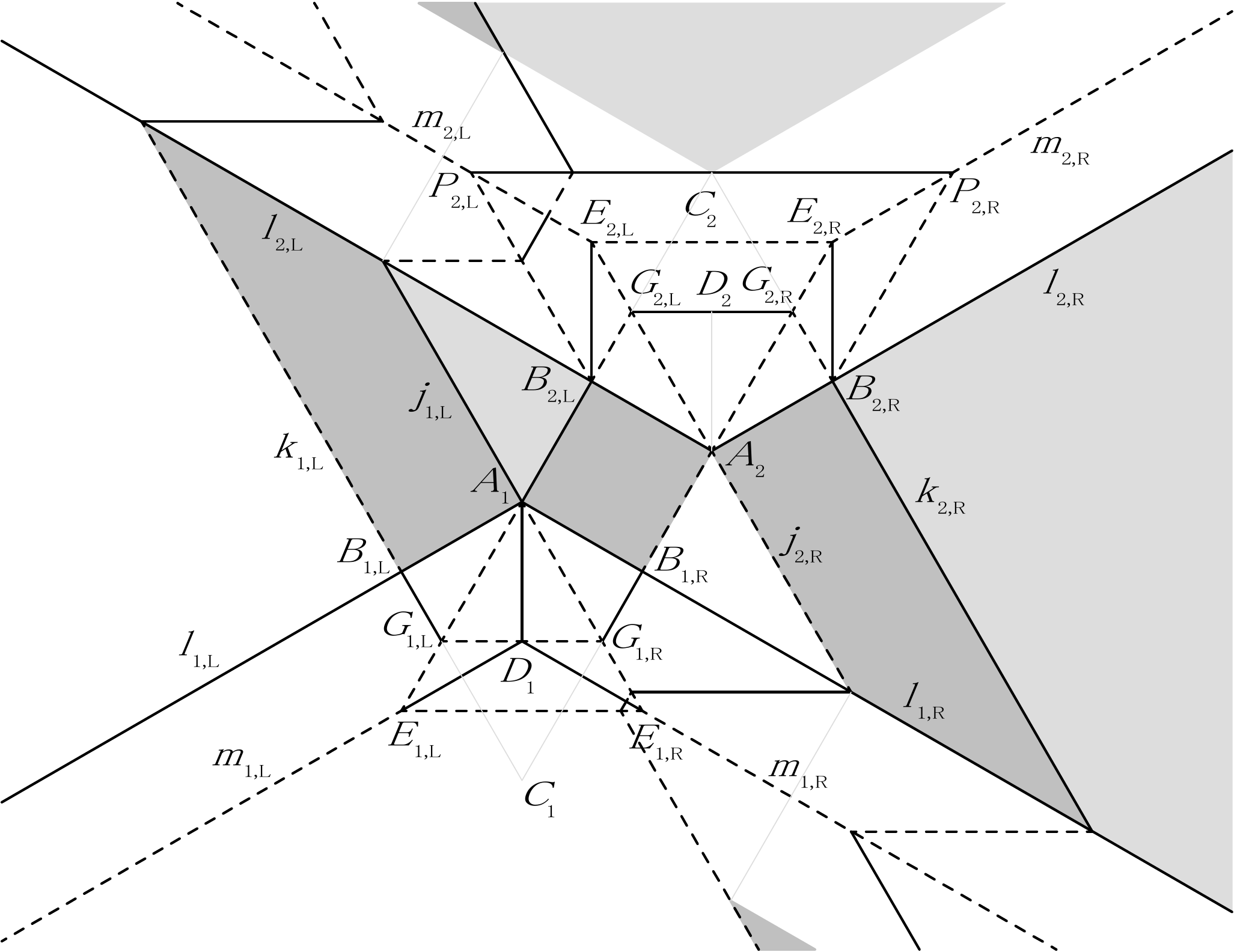}
    \caption{CP of a prism extruded with a positive and a negative $3$D gadget such that $\norm{A_1 B_2}=\norm{A_1 B_1}$}
    \label{fig:CP_prism_VI}
\end{figure}
\begin{figure}[htbp]
\addtocounter{theorem}{1}
\centering\includegraphics[width=0.75\hsize]{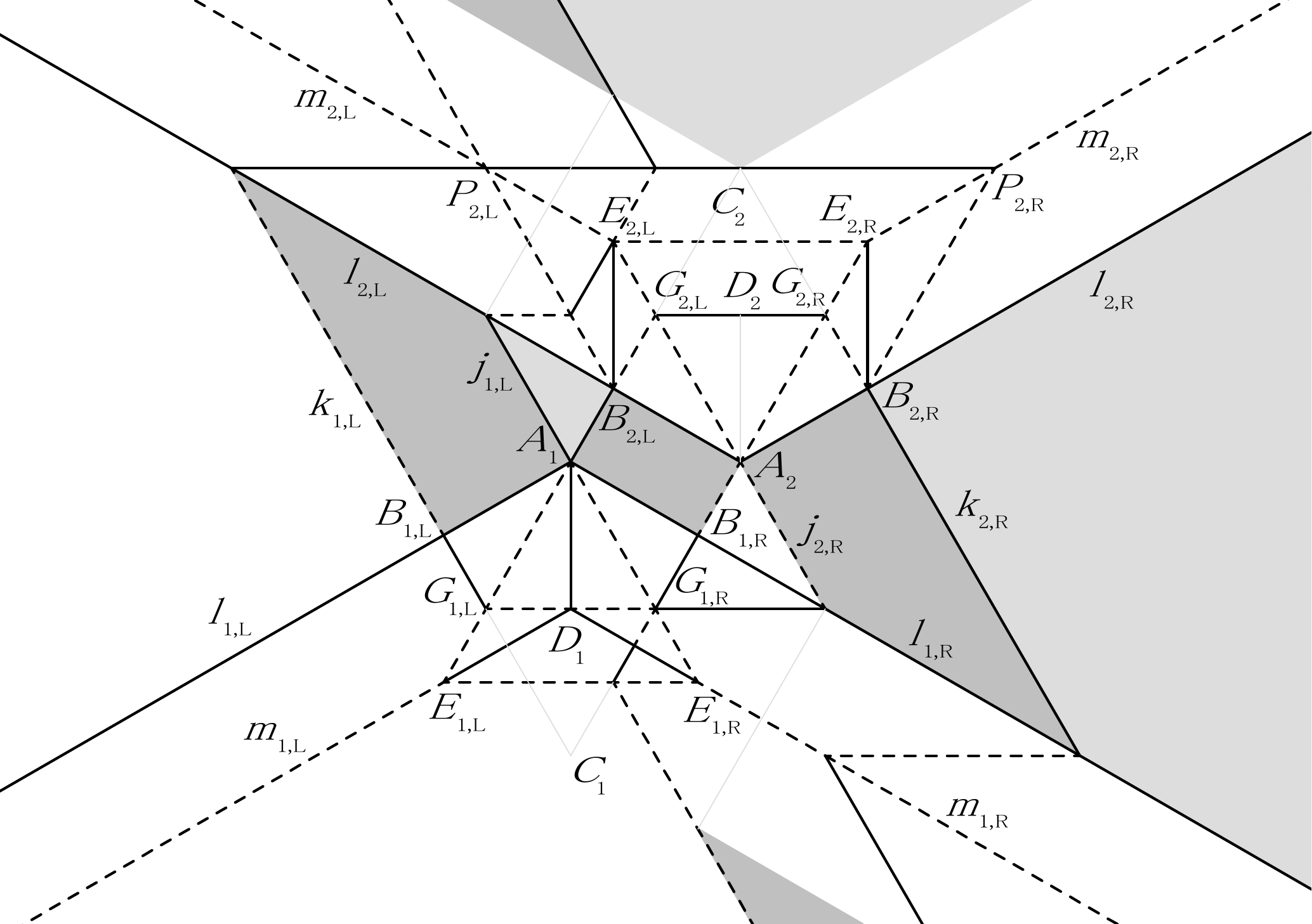}
    \caption{CP of a prism extruded with a positive and a negative $3$D gadget such that $\norm{A_1 B_2}=\norm{A_1 B_1}/\sqrt{3}$}
    \label{fig:CP_prism_VI_min}
\end{figure}
Thus in the presence of both positive and negative $3$D gadgets,
there may be not only interferences between a $3$D gadget and another one or a side face which is not necessarily adjacent to the first one,
but also ones beween a $3$D gadget and an edge of a side face reflected by an outgoing pleat as in the third and fourth cases of Example $\ref{ex:prism_VI}$.
Using the repeating gadgets given in \cite{Doi20}, Section $7$ for positive ones and those given in the next section for negative ones,
or changing the angles of the outgoing pleats, we may be able to avoid these interferences.
However, the more $3$D gadgets we have, the more difficult it becomes to handle their interferences.
For the treatment of intersections of outgoing pleats, see \cite{Doi19}, Section $9$.

\section{Repetition and division of negative $3$D gadgets}\label{sec:repetition}
As we explained in \cite{Doi19} and \cite{Doi20}, division and repetition of $3$D gadgets are the same but the gadget we start from.
The repeating $3$D gadgets presented in \cite{Doi20}, which have flat back sides, can also be used for negative $3$D gadgets with a little modification.
The creases of the negative $3$D gadget does not affect on those of the repeating gadgets in the resulting crease pattern
because the supporting triangle overlaps with the repeating gadgets only at the ridge.
As in \cite{Doi19} and \cite{Doi20}, we shall treat the cases with $\delta_\Lt =\delta_\Rt =0$.
The general cases with $\delta_\Lt ,\delta_\Rt\geqslant 0$ are rather complicated, and thus will be treated elsewhere.
In the construction below, we will use a negative $3$D gadget by our third construction as the lowest gadget.
However, we can use any of our constructions as long as it is applicable.
\begin{construction}\label{const:division_new}\rm
Consider a development as shown in Figure $\ref{fig:development_division_negative}$, 
for which we require conditions (i), (ii) and (iii) of Construction $\ref{const:condition}$.
Suppose $p_1,\dots ,p_d>0$ with $p_1+\dots +p_d=d$ satisfy
\begin{equation}\label{ineq:pn_qn}
\begin{aligned}
\frac{r+1}{2}\cdot p_n\leqslant q_n\quad\text{for }n=2,\dots ,d,\text{ where }q_n=p_1+\dots +p_n.
\end{aligned}
\end{equation}
(The meaning of $\eqref{ineq:pn_qn}$ will be explained in Remark $\ref{rem:meaning_ineq_qn}$.)
In particular, in the case of equal division $p_1=\dots =p_n=1$, $\eqref{ineq:pn_qn}$ holds if and only if
\begin{equation*}
\gamma\leqslant 2\cos^{-1}(1/3)=2\tan^{-1}(2\sqrt{2})\approx 141.06^\circ .
\end{equation*}
Then the crease pattern of the proportional division of a negative $3$D gadget by our third construction
into $d$ gadgets in the ratio $p_1:\dots :p_d$ from the bottom with $p_1+\dots +p_d=d$ is constructed as follows,
where we regard $\sigma$ as taking both $\Lt$ and $\Rt$.
\begin{enumerate}
\item Divide segment $CA$ (not $AC$) into $d$ parts proportionally in the ratio $p_1:\dots :p_d$,
letting $A^{(1)},\dots ,A^{(d-1)}$ to be the divided points in order from the side of $C$.
\item For $n=1,\dots ,d-1$, draw a ray $\ell_\sigma^{(n)}$ parallel to $\ell_\sigma$, starting from $A^{(n)}$ and going in the same direction as $\ell_\sigma$,
letting $B_\sigma^{(n)}$ be the intersection point with $B_\sigma C$.
\item Construct a negative $3$D gadget by Construction $\ref{const:negative_3}$ for $\alpha ,\beta_\Lt ,\beta_\Rt$
inside polygonal chain $\ell_\Lt^{(1)}B_\Lt^{(1)}A^{(1)}B_\Rt^{(1)}\ell_\Rt^{(1)}$,
where we add the supersctipt `$(1)$' to the resulting points and lines in the crease pattern.
\item For $n=2,\dots ,d-1$, we choose $\psi_\Lt^{(n)}\in (-\gamma/2,\gamma /2)$ such that
\begin{equation}\label{ineq:psi_n}
\frac{r^2-1}{2(r\cos\psi_\Lt^{(n)}-1)}\cdot p_n\leqslant q_n, \quad\text{or equivalently, }\quad
\frac{1}{r}\left(\frac{r^2-1}{2}\cdot\frac{p_n}{q_n}+1\right)\leqslant\cos\psi_\Lt^{(n)}
\end{equation}
where $r$ is given by $r=1/\cos (\gamma /2)$.
Note that we can \emph{always} take $\psi_\Lt^{(n)}=0$ for $n=2,\dots ,d$ by $\eqref{ineq:pn_qn}$.

Also, if we have $p_1=\dots =p_d=1$ (equal division) and $\psi_\Lt^{(2)}=\dots =\psi_\Lt^{(n)}=\psi_\Lt$, then $\eqref{ineq:psi_n}$ is simplified as
\begin{equation*}
\frac{r^2+3}{4r}\leqslant\cos\psi_\Lt .
\end{equation*}
\item For $n=1,\dots ,d$, draw a perpendicular bisector $m_\sigma^{(n)}$ of $B_\sigma^{(n-1)}B_\sigma^{(n)}$,
in which we determine a point $E_\sigma^{(n)}$ so that $\angle B_\sigma^{(n)}A^{(n)}E_\sigma^{(n)}=\phi_\sigma^{(n)}/2$,
where we set $\phi_\Lt^{(n)}=\gamma /2-\psi_\Lt^{(n)}$, $\phi_\Rt^{(n)}=\gamma /2+\psi_\Lt^{(n)}$,
$A^{(d)}=A$, $A^{(0)}=B_\sigma^{(0)}=C$ and $B_\sigma^{(d)}=B_\sigma$.
Also, redefine $m_\sigma^{(n)}$ to be a ray starting from $E_\sigma^{(n)}$ and going in the same direction as $\ell_\sigma$.
Thus we have $2d$ parallel rays $m_\sigma^{(1)},\ell_\sigma^{(1)},\dots ,m_\sigma^{(d)},\ell_\sigma^{(d)}=\ell_\sigma$ in order from the side of $C$.
\item For $n=2,\dots ,d$, let ${F'}^{(n)}$ be a point in segment $E_\Lt^{(n)}E_\Rt^{(n)}$ with $\angle E_\Lt^{(n)}A^{(n)}{F'}^{(n)}=\phi_\Lt^{(n)}$
(and so $\angle E_\Rt^{(n)}A^{(n)}{F'}^{(n)}=\phi_\Rt^{(n)}$).
\item For $n=1,\dots ,d-1$, draw a ray $k_\sigma^{(n)}$ parallel to $k_\sigma$ from $B_\sigma^{(n)}$ to the side of $m_\sigma^{(n+1)}$.
If $k_\sigma^{(n)}$ intersects segment $A^{(n)}E_\sigma^{(n+1)}$, then let ${G'_\sigma}^{\! (n)}$ be the intersection point.
If not, then let $J_\sigma^{(n+1)}$ be the intersection point with $m_\sigma^{(n)}$.
\item For $n=1,\dots ,d$, draw a ray ${k'_\sigma}^{\! (n)}$ starting from $B_\sigma^{(n)}$
which is a reflection of $k_\sigma^{(n)}$ across $\ell_\sigma^{(n)}$, where we set $k_\sigma^{(d)}=k_\sigma$.
If ${k'_\sigma}^{\! (n)}$ intersects segment $A^{(n)}E_\sigma^{(n)}$, then let $G_\sigma^{(n)}$ be the intersection point.
If not, then it ${k'_\sigma}^{\! (n)}$ intersects $m_\sigma^{(n)}$ at $J_\sigma^{(n)}$ given in $(6)$.
Note that for $2\leqslant n\leqslant d$, $G_\sigma^{(n)}$ exists if and only if ${G'_\sigma}^{\! (n-1)}$ exists.
\item For $n=2,\dots ,d$ such that $G_\sigma^{(n)},{G'_\sigma}^{\! (n-1)}$ exist,
draw a line through $G_\sigma^{(n)}$ which is a reflection of ${k'_\sigma}^{\! (n)}$ across $A^{(n)}E_\sigma^{(n)}$,
and a line through ${G'_\sigma}^{\! (n-1)}$ which is a reflection of ${k'_\sigma}^{\! (n)}$ across $A^{(n)}E_\sigma^{(n)}$,
letting $M_\sigma^{(n)}$ be their common intersection point with segment $E_\Lt^{(n)} E_\Rt^{(n)}$.
\item The desired crease pattern is shown as the solid lines in Figure $\ref{fig:CP_division_negative}$,
and the assignment of mountain and valley folds is given in Table $\ref{tbl:assignment_division_negative}$.
\end{enumerate}
\end{construction}
\begin{figure}[htbp]
\addtocounter{theorem}{1}
\centering\includegraphics[width=0.75\hsize]{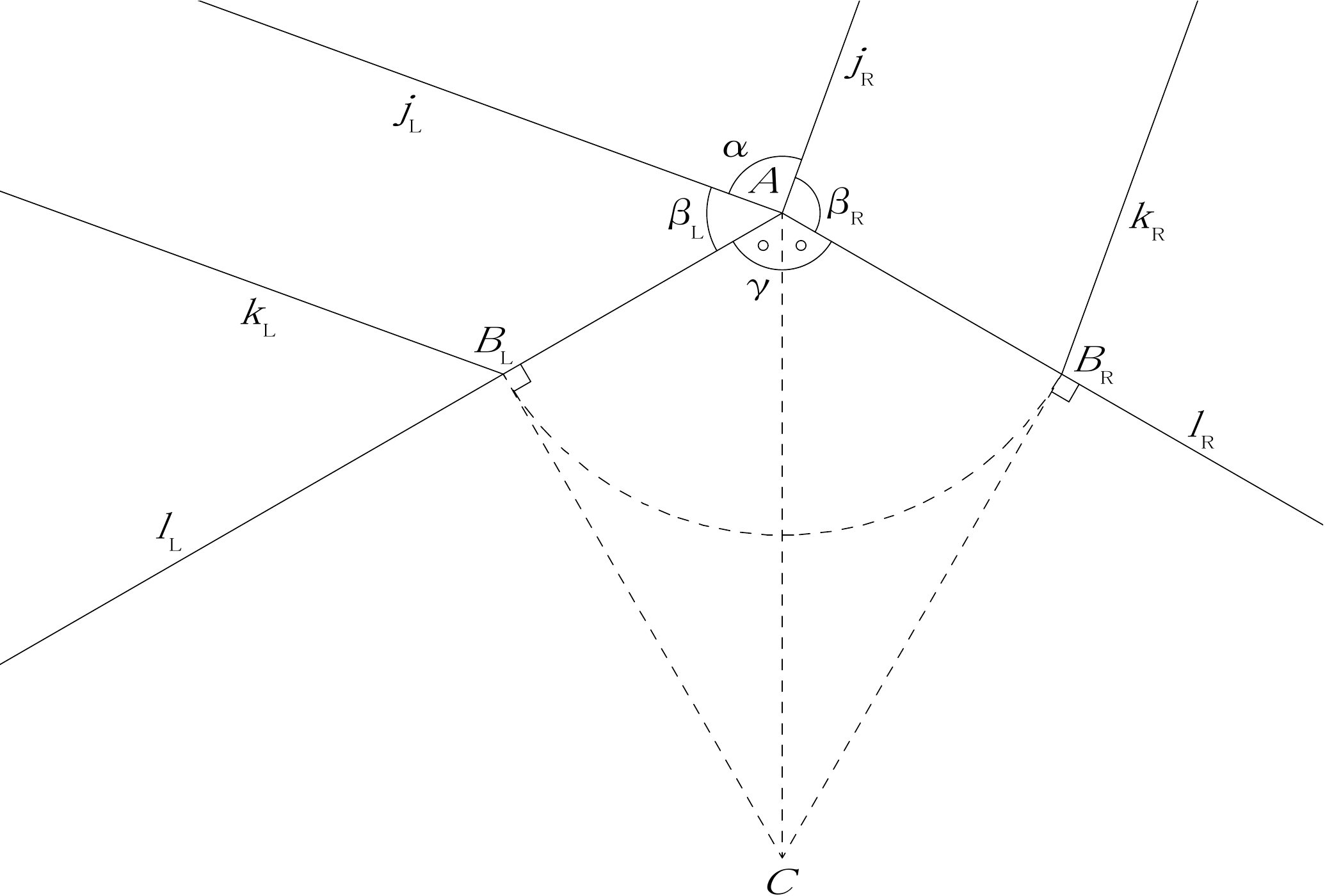}
\caption{Development of a negative $3$D gadget to be divided}
\label{fig:development_division_negative}
\end{figure}
\begin{figure}[htbp]
\addtocounter{theorem}{1}
\centering\includegraphics[width=0.75\hsize]{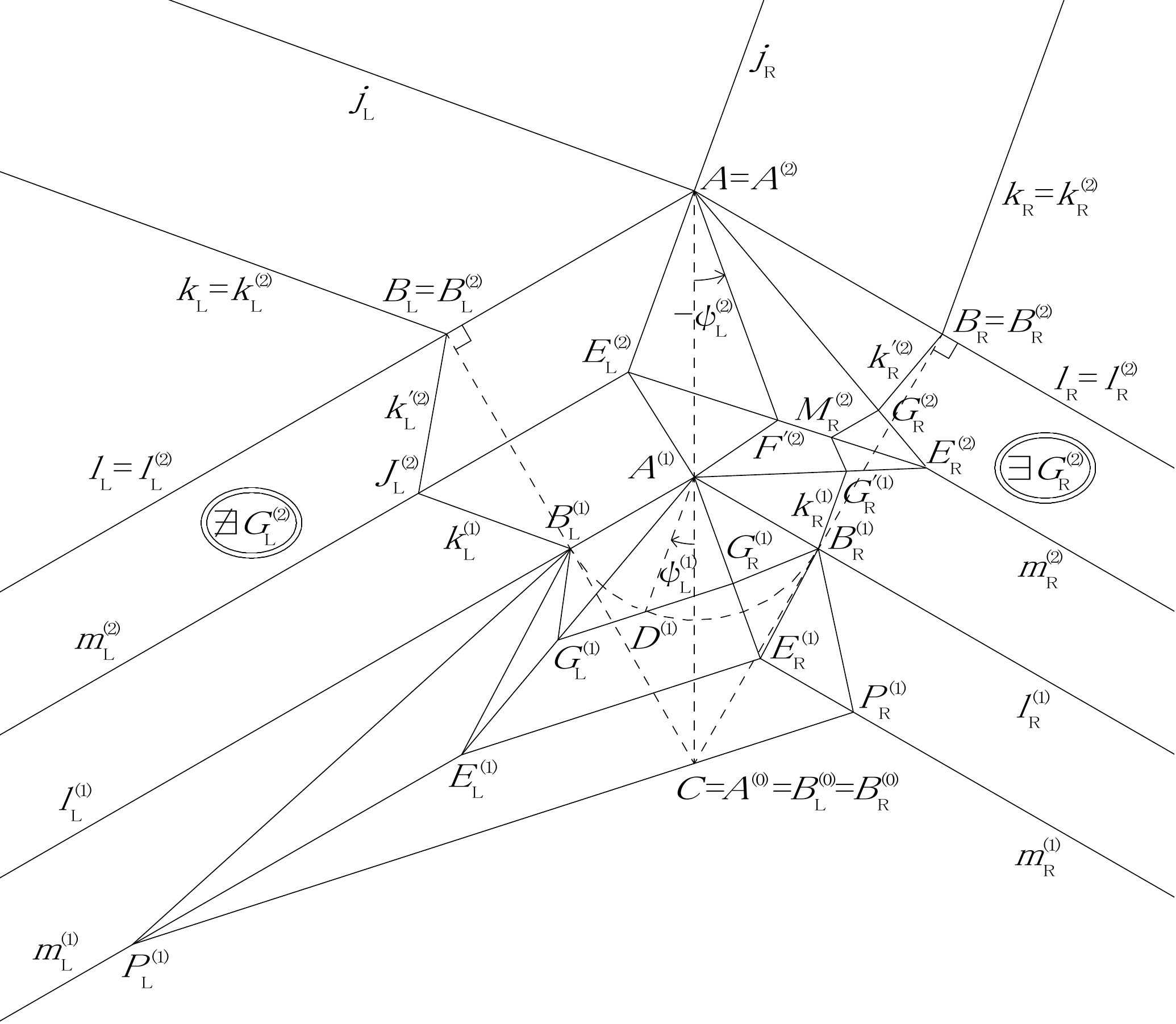}
\caption{CP of the division of a negative $3$D gadget}
\label{fig:CP_division_negative}
\end{figure}
\addtocounter{theorem}{1}
\begin{table}[h]
\begin{tabular}{c|c|c|c}
&common&\multicolumn{2}{c}{$n$ with $2\leqslant n\leqslant d$ such that}\\
\cline{3-4}&creases&$\exists G_\sigma^{(n)}$&$\nexists G_\sigma^{(n)}$\\
\hline mountain&$j_\sigma ,m_\sigma^{(n)},$&\multicolumn{2}{c}{$A^{(n-1)}E_\sigma^{(n)},A^{(n)}{F'}^{(n)}$}\\
\cline{3-4}folds&$A^{(n)}E_\sigma^{(n)},E_\Lt^{(1)}E_\Rt^{(1)},$&$B_\sigma^{(n)}G_\sigma^{(n)},$&$B_\sigma^{(n)}J_\sigma^{(n)}$\\
&$B_\sigma^{(1)}G_\sigma^{(1)},B_\sigma^{(1)}P_\sigma^{(1)}$&${G'}_\sigma^{(n-1)}M_\sigma^{(n)}$&\\
\hline valley&$k_\sigma ,\ell_\sigma^{(n)},$&\multicolumn{2}{c}{$A^{(n-1)}{F'}^{(n)}, E_\Lt^{(n)}E_\Rt^{(n)}$}\\
\cline{3-4}folds&$AB_\sigma ,B_\sigma^{(1)}E_\sigma^{(1)},$&$B_\sigma^{(n-1)}{G'}_\sigma^{(n-1)},$&$B_\sigma^{(n-1)}J_\sigma^{(n)}$\\
&$G_\Lt^{(1)}G_\Rt^{(1)},P_\Lt^{(1)}P_\Rt^{(1)},$&$G_\sigma^{(n)}M_\sigma^{(n)}$&
\end{tabular}\vspace{0.5cm}
\caption{Assignment of mountain and valley folds to the division of a negative $3$D gadget}
\label{tbl:assignment_division_negative}
\end{table}
\begin{remark}\label{rem:meaning_ineq_qn}\rm
Suppose $\eqref{ineq:psi_n}$ does not hold.
Then as we saw in \cite{Doi20}, Proposition $7.7$,
there exists a point $D^{(n)}$ in segment $A^{(n)}{F'}^{(n)}$ such that $\norm{A^{(n)}D^{(n)}}=\norm{A^{(n)}B_\sigma^{(n)}}=q_n$
(and also a point ${D'}^{(n-1)}$ in segment $A^{(n-1)}{F'}^{(n)}$ such that $\norm{A^{(n)}{D'}^{(n-1)}}=\norm{A^{(n-1)}B_\sigma^{(n-1)}}=q_{n-1}$).
Since segment $D^{(n)}{F'}^{(n)}$, which overlaps with ${D'}^{(n-1)}{F'}^{(n)}$, overlaps with both $\ell_\Lt^{(n)}$ and $\ell_\Rt^{(n)}$,
$D^{(n)}{F'}^{(n)}$ glues $\ell_\Lt^{(n)}$ and $\ell_\Rt^{(n)}$ together, which ruins Construction $\ref{const:negative_3}$.
Note that $\eqref{ineq:pn_qn}$ is a necessary and sufficient condition such that there is at least one $\psi_\Lt^{(n)}$ satisfying $\eqref{ineq:psi_n}$.
This is why we require $\eqref{ineq:pn_qn}$.

Also, we saw in \cite{Doi20}, Proposition $7.8$ that there exist $G_\sigma^{(n)}$ and ${G'}_\sigma^{(n-1)}$ for $n\geqslant 2$
in Construction $\ref{const:negative_3}$, $(7)$ and $(8)$, if and only if
\begin{equation*}
\frac{\tan (\gamma /2)}{2}\left(\frac{1}{\tan (\phi_\sigma^{(n)}/2)}-\frac{1}{\tan\beta_\sigma}\right)\cdot p_n\leqslant q_n.
\end{equation*}
Thus to avoid the appearance of $G_\sigma^{(n)}$ for both $\sigma =\Lt ,\Rt$, it is better to choose $\psi_\Lt^{(n)}$ so that
\begin{equation*}
\frac{1}{\tan (\phi_\sigma^{(n)}/2)}\leqslant\frac{2}{\tan (\gamma /2)}\cdot\frac{q_n}{p_n}+\frac{1}{\tan\beta_\sigma}\quad\text{for both }\sigma =\Lt ,\Rt
\end{equation*}
if such $\psi_\Lt^{(n)}$ exists.
In particular, for $p_1=\dots =p_d$ (equal division) and $\psi_\Lt^{(2)}=\dots =\psi_\Lt^{(d)}=\psi_\Lt$, 
there appear no $G_\sigma^{(n)}$ for both $\sigma =\Lt ,\Rt$ and all $n\geqslant 2$ if we can choose $\psi_\Lt$ so that
\begin{equation*}
\frac{1}{\tan (\phi_\sigma /2)}\leqslant\frac{4}{\tan (\gamma /2)}+\frac{1}{\tan\beta_\sigma}\quad\text{for both }\sigma =\Lt ,\Rt ,
\end{equation*}
where $\phi_\Lt =\gamma /2-\psi_\Lt$ and $\phi_\Rt =\gamma /2+\psi_\Lt$.
\end{remark}

\section{Conclusion}\label{sec:conclusion}
In this paper we studied fives constructions of negative $3$D gadgets, whose features are listed in Table $\ref{tbl:features}$.
In particular, we saw in Section $\ref{subsec:negative_new_3}$
that our first and third constructions solve Problem $\ref{prob:existence_negative}$.
These solutions enable us to treat positive and negative $3$D gadgets on the same basis as long as there are no interferences among the $3$D gadgets,
and also broaden the range of $3$D objects which can be folded in origami extrusions, although it is still limited and to be further broadened.
Each of the two constructions has its own merits, and we can select them interchangeably for our purpose.
Our first construction produces an infinite number of negative $3$D gadgets for any positive $3$D gadget in Problem $\ref{prob:existence_negative}$,
but choosing $\angle AB_\tau E_\tau$ in Construction $\ref{const:negative_1}$, $(1)$ is a little troublesome.
On the other hand, our third construction gives a unique solution, for which we do not need to care about the choice.
To prove the existence and uniqueness of the desired negative $3$D gadget by our third construcion
for any given positive $3$D gadget in Problem $\ref{prob:existence_negative}$,
we solved equation $\eqref{eq:DA_EE=BA_EE}$ \emph{numerically} and found rather complicated formulas for the unique solution
in Theorem $\ref{thm:existence_rho}$ and Proposition $\ref{prop:rho_alt}$.
However, we still do not either know how to construct $D$ \emph{geometrically} except for some cases such as $\alpha =\beta_\Lt +\beta_\Rt$,
or understand the geometric meaning of the formulas for the solution
which may possibly explain the existence and uniqueness of the desired gadget more naturally and easily.

Also, it is still a major problem in origami extrusions
to handle outgoing pleats and various interferences which increase as the extruded object becomes more complex.
Although we have many ways of handling an \emph{individual} difficulty
such as changing the angles of the outgoing pleats, composing their intersections, and using repeating $3$D gadgets,
at this point we depend more or less on our skills in designing the \emph{whole} crease pattern of an origami extrusion,
which is to be further developed in future researches.
\renewcommand{\arraystretch}{1.2}
\addtocounter{theorem}{1}
\begin{table}[h]\label{tbl:features}
\begin{tabular}{c|c|c|c|c|c}
\multirow{2}{*}{Section}&\multirow{2}{*}{constructed}&outgoing&number of&symmetry${}^{*1}$&\multirow{2}{*}{remark}\\
&&pleats&possible CPs&of CP&\\ \hline
\multirow{2}{*}{$2.1$}&\multirow{2}{*}{many paper folders}&$(b,c_\Lt )$ or&\multirow{2}{*}{$2$}&\multirow{2}{*}{no}&$\delta_\Lt$ and $\delta_\Rt$ are\\
&&$(b,c_\Rt )$&&&not prescribed\\ \hline
\multirow{2}{*}{$2.2$}&Cheng $(\delta_\Lt =\delta_\Rt =0)$,&$(\ell'_\Lt ,m_\Lt ),$&\multirow{2}{*}{$2$}&\multirow{2}{*}{no}&\multirow{2}{*}{covers all cases}\\
&author (general cases)&$(\ell'_\Rt ,m_\Rt )$&&&\\ \hline
\multirow{2}{*}{$3.1$}&\multirow{2}{*}{author}&$(\ell_\Lt ,m_\Lt ),$&\multirow{2}{*}{$\infty +\infty^{*2}$}&\multirow{2}{*}{no}&covers all cases,\\
&&$(\ell_\Rt ,m_\Rt )$&&&solves Problem $\ref{prob:existence_negative}$\\ \hline
\multirow{2}{*}{$3.2$}&\multirow{2}{*}{author}&$(\ell_\Lt ,m_\Lt ),$&\multirow{2}{*}{$0,1$ or $\infty^{*3}$}
&\multirow{2}{*}{yes}&needs $\beta_\Lt ,\beta_\Rt\leqslant\pi /2$\\
&&$(\ell_\Rt ,m_\Rt )$&&&or $\beta_\Lt ,\beta_\Rt\geqslant\pi /2$\\ \hline
\multirow{2}{*}{$3.3$}&\multirow{2}{*}{author}&$(\ell_\Lt ,m_\Lt ),$&$1$&\multirow{2}{*}{yes}&covers all cases,\\
&&$(\ell_\Rt ,m_\Rt )$&(unique)&&solves Problem $\ref{prob:existence_negative}$
\end{tabular}\\
\flushleft{\small $\ast 1$ The symmetry of the resulting crease pattern for $\beta_\Lt =\beta_\Rt ,\delta_\Lt =\delta_\Rt$.\\
$\ast 2$ There are infinitely many crease patterns for each $\tau\in\{\Lt ,\Rt\}$.\\
$\ast 3$ There are infinitely many crease patterns only for $\beta_\Lt =\beta_\Rt =\pi /2$.}\vspace{0.5cm}
\caption{Features of five constructions of negative $3$D gadgets}
\end{table}

\end{document}